\newcommand\tightparagraph{\def\@toclevel{4}%
  \@startsection{paragraph}{4}{\parindent}%
  {-.15\baselineskip \@minus -.05\baselineskip}%
  {-3.5\p@}%
  {\ACM@NRadjust{\@parfont}}}
\newcommand\Tstrut{\rule{0pt}{2.6ex}}         % = `top' strut
\newcommand{\semspace}{\vspace{1.1\baselineskip}}
\newcommand{\newpremise}{\qquad}
\newcommand{\ruledef}[2]{\!{\textsc{\footnotesize#2}\phantomsection\label{rule:#1:#2}}}
\newcommand{\ruleref}[2]{\hyperref[rule:#1:#2]{\textsc{#2}}}
\newcommand{\Autoref}[1]{\Cref{#1}}
\renewcommand{\autoref}[1]{\Cref{#1}}
\newcommand{\appendixref}[1]{\hyperref[#1]{Appendix~\ref*{#1}}}
\newcommand{\dir}{\mathit{d}}
\newcommand{\forward}{\rightarrow}
\newcommand{\backward}{\leftarrow}
\newcommand{\both}{\leftrightarrow}
\newcommand{\ok}{\text{\ding{51}}}
\newcommand{\nok}{\text{\ding{55}}}
\newcommand{\notapp}{\text{n/a}}
\newcommand{\istree}[5]{(#1,#2,#3,#4)\mathbin{\Downarrow}#5}
\newcommand{\booltree}[4]{(#1,#2,#3)\mathbin{\overline{\Downarrow}}#4}
\newcommand{\ancsymb}{%
  \vcenter{\hbox{\oalign{
        \noalign{\kern-0.7ex}\hfil\rule{.4pt}{1.5ex}\hfil\cr
        \noalign{\kern-0.7ex}$\smile$\cr
        \noalign{\kern-1.8ex}\hfil{\scalebox{.8}{$-$}}\hfil\cr
        \noalign{\kern-1.5ex}\hfil{\scalebox{.5}{$\circ$}}\hfil\cr
      }}}}
\newcommand{\regbos}{\text{\textasciicircum}}
\newcommand{\regeos}{\text{\$}}
\newcommand{\inp}{\mathit{i}}
\newcommand{\inpcheck}{\mathit{i_{check}}}
\newcommand{\gm}{\mathit{gm}}
\newcommand{\gid}{\mathit{g}}
\newcommand{\gidl}{\mathit{gl}}
\newcommand{\cont}{\mathit{l}}
\newcommand{\actions}{\mathit{l}}
\newcommand{\cd}{\mathit{cd}}
\newcommand{\anc}{\mathit{a}}
\newcommand{\rchar}{\mathit{c}}
\newcommand{\subreg}{\mathit{r}}
\newcommand{\greedy}{\mathit{p}}
\newcommand{\strbr}{\mathit{s}}
\newcommand{\lk}{\mathit{lk}}
\newcommand{\result}{\mathit{res}}
\newcommand{\resulta}{\mathit{res}_{a}}
\newcommand{\resultb}{\mathit{res}_{b}}
\newcommand{\rmin}{\mathit{min}}
\newcommand{\rmax}{\mathit{max}}
\newcommand{\treecont}{\mathit{t}}
\newcommand{\treeiter}{\mathit{t_{iter}}}
\newcommand{\treelook}{\mathit{t_{look}}}
\newcommand{\str}[1]{\texttt{"#1"}}
\newcommand{\jsvar}[1]{\ensuremath{\mathit{#1}}}
\newcommand{\flag}[1]{\texttt{#1}}
\newcommand{\cdcomplement}{\hat{~}}
\newcommand{\reskip}{\mskip\medmuskip}
\newcommand{\resskip}{\mskip.5\thinmuskip}
\newcommand{\defineRegexSyntax}{%
\newcommand{\regepsilon}{\varepsilon}%
\newcommand{\regchar}[1]{##1}%
\newcommand{\charset}[1]{[##1]}%
\newcommand{\cdrange}[2]{##1\mathord{-}##2}%
\newcommand{\disjunction}[2]{##1|##2}%
\newcommand{\sequence}[2]{##1\reskip##2}%
\newcommand{\quant}[4]{##1\{##2,##3,##4\}}%
\newcommand{\anchor}[1]{##1}%
\newcommand{\backref}[1]{\backslash##1}%
\newcommand{\lookaround}[2]{(?{##1}\reskip##2)}%
\newcommand{\group}[2]{(_{##1}\resskip##2)}%
\newcommand{\noncap}[1]{\langle##1\rangle}%
\newcommand{\respecial}[1]{\mathord{\mathsf{##1}}}
\newcommand{\regqm}[1]{##1\resskip\respecial{?}}%
\newcommand{\regstar}[1]{##1\resskip\respecial{*}}%
\newcommand{\regplus}[1]{##1\resskip\respecial{+}}%
\newcommand{\lazystar}[1]{##1\resskip\respecial{*?}}%
\newcommand{\lazyplus}[1]{##1\resskip\respecial{+?}}%
\newcommand{\lazyqm}[1]{##1\resskip\respecial{??}}%
\newcommand{\regdot}{\cdot}%
\newcommand{\regall}{\odot}%
\newcommand{\esc}[1]{\backslash\respecial{##1}}%
\newcommand{\posahead}{\respecial{=}}%
\newcommand{\negahead}{\respecial{!}}%
\newcommand{\posbehind}{\respecial{<=}}%
\newcommand{\negbehind}{\respecial{<!}}%
\newcommand{\lookahead}[1]{\lookaround{\posahead}{##1}}%
\newcommand{\neglookahead}[1]{\lookaround{\negahead}{##1}}%
\newcommand{\lookbehind}[1]{\lookaround{\posbehind}{##1}}%
\newcommand{\neglookbehind}[1]{\lookaround{\negbehind}{##1}}%
}
\newcommand{\re}[1]{\ensuremath{{\defineRegexSyntax\mathrm{#1}}}}
\newcommand{\savere}[2]{%
  \expandafter\newsavebox\csname re:#1\endcsname%
  \sbox{\csname re:#1\endcsname}{{\small#2}}%
}
\newcommand{\usere}[1]{\usebox{\csname re:#1\endcsname}}
\newcommand{\ctor}[1]{\textsf{\textup{#1}}}
\newcommand{\fn}[1]{\textsf{\textup{#1}}}
\newcommand{\Some}[1]{\ctorone{Some}{#1}}
\newcommand{\None}{\ctor{None}}
\newcommand{\ctorone}[2]{\expandafter\ifx\expandafter\relax
  \detokenize{#2}\relax\ctor{#1}\else\ctor{#1}~#2\fi}
\newcommand{\ctortwo}[3]{\expandafter\ifx\expandafter\relax
  \detokenize{#2}\relax\ctor{#1}\else\ctor{#1}~#2~#3\fi}
\newcommand{\ctorthree}[4]{\expandafter\ifx\expandafter\relax
  \detokenize{#2}\relax\ctor{#1}\else\ctor{#1}~#2~#3~#4\fi}
\newcommand{\acheck}[1]{\ctorone{Acheck}{#1}}
\newcommand{\aclose}[1]{\ctorone{Aclose}{#1}}
\newcommand{\areg}[1]{#1}
\newcommand{\idx}[1]{\fn{idx}(#1)}
\newcommand{\strof}[1]{\fn{str}(#1)}
\newcommand{\inpadvance}[3]{\fn{advance}(#1, #2, #3)}
\newcommand{\defgroups}[1]{\mathcal{G}(#1)}
\newcommand{\checkanchor}[2]{\fn{check\_anchor}(#1,#2)}
\newcommand{\readbackref}[4]{\fn{advance\_bkrf}(#1,#2,#3,#4)}
\newcommand{\lkdir}[1]{\fn{dir}(#1)}
\newcommand{\lkresult}[4]{\fn{lk\_result}(#1,#2,#3,#4)}
\newcommand{\gmclose}[3]{\fn{GM\textsubscript{close}}(#1,#2,#3)}
\newcommand{\gmopen}[3]{\fn{GM\textsubscript{open}}(#1,#2,#3)}
\newcommand{\gmreset}[2]{\fn{GM\textsubscript{reset}}(#1,#2)}
\newcommand{\gmempty}{\fn{GM\textsubscript{$\emptyset$}}}
\newcommand{\treematch}{\ctor{Match}}
\newcommand{\treemismatch}{\ctor{Mismatch}}
\newcommand{\treeprogress}[1]{\ctorone{Progress}{#1}}
\newcommand{\treeopen}[2]{\ctortwo{Open}{#1}{#2}}
\newcommand{\treeclose}[2]{\ctortwo{Close}{#1}{#2}}
\newcommand{\treereset}[2]{\ctortwo{Reset}{#1}{#2}}
\newcommand{\treeread}[2]{\ctortwo{Read}{\re{#1}}{#2}}
\newcommand{\treechoice}[2]{\ctortwo{Choice}{#1}{#2}}
\newcommand{\treeanchor}[2]{\ctortwo{AnchorPass}{#1}{#2}}
\newcommand{\treebackref}[2]{\ctortwo{BackrefPass}{#1}{#2}}
\newcommand{\treelk}[3]{\ctorthree{LK}{#1}{#2}{#3}}
\newcommand{\treelkfail}[2]{\ctortwo{LKMismatch}{#1}{#2}}
\newcommand{\computetreefuel}[5]{\mathcal{T}(#1,#2,#3,#4,#5)}
\newcommand{\computetree}[4]{\mathcal{T}(#1,#2,#3,#4)}
\newcommand{\fuel}[3]{\lvert\lvert#1\rvert\rvert^{#2}_{#3}}
\newcommand{\firstbranch}[2]{\mathcal{L}_0(#1,#2)}
\newcommand{\sinput}{\ctor{Input}}
\newcommand{\worstinp}[2]{\fn{worst}(#1,#2)}
\newcommand{\inpsize}[2]{|#1|_{#2}}
\newcommand{\encodes}[3]{(#1,#2) \vdash #3}
\newcommand{\canexit}{\top}
\newcommand{\cannotexit}{\bot}
\newcommand{\pikesub}{\mathcal{P}}
\newcommand{\insub}[1]{#1 \in \pikesub}
\newcommand{\pts}{\mathit{pts}}
\newcommand{\ptsinit}[2]{\ctor{PT\textsubscript{init}}(#1,#2)}
\newcommand{\ptsfinal}[1]{\ctor{PT\textsubscript{end}}(#1)}
\newcommand{\piketreeinv}[2]{#1 \downarrow\downarrow #2}
\newcommand{\ptres}[4]{#1 \downarrow_{#3}^{#4} #2}
\newcommand{\piketreestep}[2]{#1 \rightarrow #2}
\newcommand{\piketreestar}[2]{#1 \rightarrow^{*} #2}
\newcommand{\ptstate}[5]{(#1,#2,#3,#4,#5)}
\newcommand{\intseen}[2]{#1\in#2}
\newcommand{\addtseen}[2]{#1\cup\{#2\}}
\newcommand{\pvs}{\mathit{pvs}}
\newcommand{\pvsinit}[1]{\ctor{VM\textsubscript{init}}(#1)}
\newcommand{\pvsfinal}[1]{\ctor{VM\textsubscript{end}}(#1)}
\newcommand{\code}{\mathit{code}}
\newcommand{\compilation}[1]{\fn{NFA}(#1)}
\newcommand{\pikeinv}[3]{#1\mathbin{\sim_{#3}}#2}
\newcommand{\pikevmstep}[3]{#1\mathbin{\rightarrow_{#3}}#2}
\newcommand{\trc}{\rightarrow^{*}}
\newcommand{\pikevmstar}[3]{#1\mathbin{\trc_{#3}}#2}
\newcommand{\best}{\mathit{best}}
\newcommand{\blocked}{\mathcal{B}}
\newcommand{\pactive}{\mathcal{A}}
\newcommand{\seen}{\mathcal{S}}
\newcommand{\pvstate}[5]{(#1,#2,#3,#4,#5)}
\newcommand{\nextinp}[1]{\fn{next}(#1)}
\newcommand{\nextinpdir}[2]{\nextinp{#1}_{#2}}
\newcommand{\thread}[3]{(#1,#2,#3)}
\newcommand{\pc}{\mathit{pc}}
\newcommand{\bo}{\mathit{b}}
\newcommand{\inseen}[3]{(#1,#2)\in#3}
\newcommand{\notseen}[3]{(#1,#2)\notin#3}
\newcommand{\addseen}[3]{#1\cup\{(#2,#3)\}}
\newcommand{\getpc}[2]{#1~\#~#2}
\newcommand{\ctx}{\mathit{C}}
\newcommand{\fwdctx}{\overrightarrow{\mathit{C}}}
\newcommand{\bwdctx}{\overleftarrow{\mathit{C}}}
\newcommand{\samectx}{\mathit{C}^{=}}
\newcommand{\leafeq}[2]{#1\mathrel{\sim}#2}
\newcommand{\regeq}[2]{#1\mathrel{\approx}#2}
\newcommand{\plug}[2]{#1[#2]}
\newcommand{\leafeqdir}[3]{#1\mathrel{\sim_{#3}}#2}
\newcommand{\leaves}[3]{\mathcal{L}(#1,#2,#3)}
\newcommand{\leaveseq}[2]{#1 \equiv #2}
\newcommand{\lea}{\mathit{leaves}}
\newcommand{\fwdeq}{\sim_{\forward}}
\newcommand{\bwdeq}{\sim_{\backward}}
\newcommand{\botheq}{\sim_{\both}}
\newcommand{\bytecodeinstr}[1]{\texttt{#1}}
\newcommand{\bcone}[2]{\expandafter\ifx\expandafter\relax
  \detokenize{#2}\relax\bytecodeinstr{#1}\else\bytecodeinstr{#1}~#2\fi}
\newcommand{\bctwo}[3]{\expandafter\ifx\expandafter\relax
  \detokenize{#2}\relax\bytecodeinstr{#1}\else\bytecodeinstr{#1}~#2~#3\fi}
\newcommand{\lbl}{\mathit{l}}
\newcommand{\instr}{\mathit{instr}}
\newcommand{\accept}{\bytecodeinstr{Accept}}
\newcommand{\consume}[1]{\bcone{Consume}{#1}}
\newcommand{\jmp}[1]{\bcone{Jump}{#1}}
\newcommand{\fork}[2]{\bctwo{Fork}{#1}{#2}}
\newcommand{\setregopen}[1]{\bcone{RegOpen}{#1}}
\newcommand{\setregclose}[1]{\bcone{RegClose}{#1}}
\newcommand{\resetregs}[1]{\bcone{ResetRegs}{#1}}
\newcommand{\beginloop}{\bytecodeinstr{BeginLoop}}
\newcommand{\iendloop}[1]{\bcone{EndLoop}{#1}}
\newcommand\defequal{~\ensuremath{\raisebox{0pt}{$\overset{\scriptscriptstyle\smash{\Delta}}{=}$}}~} % LATER: Squash Delta to avoid line height changes
\newcommand\seqop[2]{#1 \circ #2}
\newcommand{\inpgt}[3]{#2 <_{#3} #1}
\newcommand{\treethread}[4]{#1 \sim_{#3}^{#4} #2}
\newcommand{\rep}[3]{\mathit{rep}_{#1}~#2~#3}
\newcommand{\seenincl}[3]{#1 \subseteq_{#3} #2}
\newcommand{\app}{\mathbin{++}}
\newcommand{\warblrecompile}[1]{\fn{compile}(#1)}
\newcommand{\wellformed}{\mathcal{W}}
\newcommand{\iswf}[1]{#1 \in \wellformed}
\newcommand{\negskip}{\mskip-.5\thinmuskip}
\newcommand{\tolinden}[1]{\mathopen{\downharpoonleft\negskip}#1\mathclose{\negskip\downharpoonright}}
\newcommand{\towarblre}[1]{\mathopen{\upharpoonleft\negskip}#1\mathclose{\negskip\upharpoonright}}
\newcommand{\jscode}[1]{\lstinline{#1}}
\newcommand{\counterexample}[6]{
  \noindent\textbf{Counter-example of #1:}\\
  #2\hfill
  $\ctx = $ #3 \hfill
  Input: \str{#4}\\
  \textit{JavaScript syntax:}\\
  #5\\
  #6\\
}
\newcommand{\regexample}{\group{1}{\disjunction{\regstar{a}}{a}}b}
\newcommand{\ttree}[1]{\ensuremath{t_{#1}}}
\newcommand{\tnode}[1]{\ttree{#1}$:~$}
\newcommand{\pike}[4]{\ensuremath{[#1]} & \ensuremath{[#2]} & \ensuremath{[#3]} & \ensuremath{[#4]}\\}
\newcommand{\annotation}[1]{{\color{ACMPurple}#1}}
\newcommand{\tskip}{{\tiny skip}}
\definecolor{ex1}{RGB}{201,241,255} % light blue
\definecolor{ex2}{RGB}{255,241,180} % light orange
\definecolor{ex3}{RGB}{198,248,205} % light green
\definecolor{ex4}{RGB}{255,230,251} % light pink
\newcommand{\reghl}[2]{\colorbox{#1}{\ensuremath{\mathrm{\vphantom{\mid}#2}}}}
\newtheorem{theorem}{Theorem}
\newcommand{\wrapenv}[2]{%
  \newenvironment{#1}{\bgroup\parindent=0pt\begin{#2}}{\end{#2}\egroup\noindent\ignorespacesafterend}%
}
\newtheorem{theorem}{Theorem}
\crefname{theorem}{theorem}{theorems}
\Crefname{theorem}{Theorem}{Theorems}
\def\@proofindent{\noindent}
\renewenvironment{proof}[1][\proofname]{\par
  \pushQED{\qed}%
  \normalfont%% removed: \topsep6\p@\@plus6\p@\relax
  \trivlist
  \item[\@proofindent\hskip\labelsep
        {\@proofnamefont #1\@addpunct{.}}]\ignorespaces
}{%
  \popQED\endtrivlist\@endpefalse
}
\lstdefinelanguage{JavaScript}{
  keywords={typeof, new, true, false, catch, function, return, null, catch, switch, var, if, in, while, do, else, case, break},
  keywordstyle=\color{blue}\bfseries,
  ndkeywords={class, export, boolean, throw, implements, import, this, match},
  ndkeywordstyle=\color{darkgray}\bfseries,
  identifierstyle=\color{black},
  sensitive=false,
  comment=[l]{//},
  morecomment=[s]{/*}{*/},
  commentstyle=\color{purple}\ttfamily,
  stringstyle=\color{purple}\ttfamily,
  morestring=[b]',
  morestring=[b]",
}
\begin{document}

%%
%% The "title" command has an optional parameter,
%% allowing the author to define a "short title" to be used in page headers.
\title[Formal Verification for JavaScript Regular Expressions]
      {Formal Verification for JavaScript Regular Expressions:\\
        A Proven Mechanized Semantics and Its Applications (Extended Version)}
\pdfstringdefDisableCommands{\def\\#1{ #1}}

% Another possible title
%\title{Verified Semantics for JavaScript Regular Expressions and its Applications}

%%
%% The "author" command and its associated commands are used to define
%% the authors and their affiliations.
%% Of note is the shared affiliation of the first two authors, and the
%% "authornote" and "authornotemark" commands
%% used to denote shared contribution to the research.

\author{Aurèle Barrière}
\orcid{https://orcid.org/0000-0002-7297-2170}
\email{aurele.barriere@epfl.ch}
\affiliation{%
  \institution{EPFL}
  \city{Lausanne}
  \country{Switzerland}
}
\author{Victor Deng}
\orcid{https://orcid.org/0000-0002-7871-0147}
\email{victor.deng@epfl.ch}
\affiliation{%
  \institution{EPFL}
  \city{Lausanne}
  \country{Switzerland}
}
\affiliation{%
  \institution{DI ENS, École Normale Supérieure, PSL, CNRS}
  \city{Paris}
  \country{France}
}
\author{Clément Pit-Claudel}
\orcid{https://orcid.org/0000-0002-1900-3901}
\email{clement.pit-claudel@epfl.ch}
\affiliation{%
  \institution{EPFL}
  \city{Lausanne}
  \country{Switzerland}
}

%%
%% By default, the full list of authors will be used in the page
%% headers. Often, this list is too long, and will overlap
%% other information printed in the page headers. This command allows
%% the author to define a more concise list
%% of authors' names for this purpose.
%\renewcommand{\shortauthors}{Anonymous}

%%
%% The abstract is a short summary of the work to be presented in the
%% article.
\begin{abstract}
  We present the first mechanized, succinct, practical, complete, and proven-faithful semantics for a modern regular expression language with backtracking semantics.
  We ensure its faithfulness by proving it equivalent to a preexisting line-by-line embedding of the official ECMAScript specification of JavaScript regular expressions.
  We demonstrate its practicality by presenting two real-world applications.
  First, a new notion of contextual equivalence for modern regular expressions, which we use to prove or disprove rewrites drawn from previous work.
  Second, the first formal proof of the PikeVM algorithm used in many real-world engines.
  In contrast with the specification and other formalization work,
  % our semantics captures not only the top-priority match, but \emph{all} possible matches and their respective priority.
  our semantics captures not only the top-priority match, but a full \textit{backtracking tree} recording all possible matches and their respective priority.
  All our definitions and results have been mechanized in the Rocq proof assistant.
\end{abstract}

%%
%% The code below is generated by the tool at http://dl.acm.org/ccs.cfm.
%% Please copy and paste the code instead of the example below.
%%
\begin{CCSXML}
<ccs2012>
<concept>
<concept_id>10011007.10011006.10011039.10011311</concept_id>
<concept_desc>Software and its engineering~Semantics</concept_desc>
<concept_significance>500</concept_significance>
</concept>
<concept>
<concept_id>10003752.10010124.10010138.10010142</concept_id>
<concept_desc>Theory of computation~Program verification</concept_desc>
<concept_significance>500</concept_significance>
</concept>
</ccs2012>
\end{CCSXML}

\ccsdesc[500]{Software and its engineering~Semantics}
\ccsdesc[500]{Theory of computation~Program verification}

%%
%% Keywords. The author(s) should pick words that accurately describe
%% the work being presented. Separate the keywords with commas.
\keywords{Regex, Semantics, Rocq, JavaScript, Formal Verification}

%\received{10 July 2025}
%% \received[revised]{12 March 2009}
%% \received[accepted]{5 June 2009}

%%
%% This command processes the author and affiliation and title
%% information and builds the first part of the formatted document.
\maketitle

\section{Introduction}
\label{sec:intro}

% modern regexes
Despite sharing a name and some features, the modern regular expressions found in many programming languages and libraries are fundamentally different from traditional regular expressions.
For one, with the addition of new features such as backreferences, modern regular expressions (which we call \textit{regexes} in the rest of this paper) no longer correspond to regular languages, nor even to context-free languages~\cite{backref_expr}.
More importantly, with features such as capture groups, the matching problem changes.
It no longer amounts to a recognition problem (checking that a string belongs to the language of a regular expression), but instead to a segmentation problem (extracting parts of a string that match individual subexpressions of a regex).
For instance, matching the regex \texttt{a(b)} on string \str{abc} in JavaScript returns that the whole regex matched substring \str{ab}, and that the subexpression in parentheses matched substring \str{b}.
% capture ambiguity and backtracking semantics
From these features arises a new problem: when there are several ways to match a regex, which substring should be returned?
One possible answer is given by the POSIX standard, favoring the \textit{leftmost-longest} match of the regex.
Another more popular answer is to prioritize the match that would be found by a backtracking algorithm.
Many modern regex languages, such as the ones of Perl, JavaScript, Python, Go, Rust, Java, or .NET follow that disambiguation policy, called \textit{backtracking semantics} (sometimes also called leftmost-greedy, or PCRE semantics~\cite{pcre}).

% modern semantics are really complex
Between the new features and this disambiguation policy, the semantics of these modern regexes has become surprisingly large and complex.
ECMAScript~\cite{ecma_2025}, the official JavaScript standard, dedicates more than 30 pages to the semantics of JavaScript regexes.
This specification is so complex that semantics bugs are still regularly found in regex engines~\cite{incorrect_result_v8,lookbehind_priority_v8}; that the JavaScript implementation used in Firefox gave up on writing a regex engine and uses Google Chrome's instead~\cite{spidermonkey_irregexp}; and in this work we exhibit semantics errors in the optimizer of a real-world JavaScript regex processor, \texttt{regexp-tree}~\cite{regexp_tree}.

% our solution: new semantics
In this work, we present a new formal semantics for JavaScript regexes, that is \textbf{mechanized, succinct, complete, practical, and proven} to be correct.
Despite decades of modern regexes being in wide use (including in more than 30\% of JavaScript and Python packages~\cite{redos_impact}), and multiple previous formalization efforts~\cite{expose,regex_repair,regex_repair2,psst,lean_lookarounds,verified_lookarounds,warblre_icfp}, this is the first time that a semantics enables practical mechanized verification for a real-world modern regex language.
Our semantics is \textbf{mechanized} in the Rocq proof assistant (formerly known as Coq).
It is \textbf{proven} to be faithful to the JavaScript semantics with a verified equivalence to Warblre~\cite{warblre_icfp}, an existing line-by-line embedding of the ECMAScript regex specification into Rocq that is easily auditable but verbose and impractical.
It is \textbf{succinct}: the core rules of matching fit in a single page.
It is \textbf{complete}: it supports all regex features of the 14\textsuperscript{th} edition of ECMAScript.
Finally, it is \textbf{practical} for formal verification.  We demonstrate this with two applications.
First, we present a new formalized definition of contextual equivalence for regexes with backtracking semantics.
Second, we verify the PikeVM algorithm, which simulates a NFA (\textit{non-deterministic finite automaton}).
This is the first time that this widely used matching algorithm is formally verified in full, i.e. proven to return the top priority match according to backtracking semantics. %This is the first formal verification of this widely used matching algorithm.
Interestingly, beyond the usual considerations of mechanized semantics design (convenient induction principles, strict positive occurrences etc.), we find that a key ingredient for a practical regex backtracking semantics is to not follow a backtracking algorithm.  Instead, we formalize not only the one match returned by a backtracking algorithm, but instead all possible matches, ordered by priority.
% our contributions
We claim the following novel contributions:
\begin{itemize}
\item A new succinct, complete, practical \textbf{formal semantics}, mechanized in Rocq.
\item Evidence that this semantics is \textbf{faithful} to the ECMAScript regex standard, by proving it equivalent to the Warblre embedding~\cite{warblre_icfp}.
\item A new \textbf{formal definition of equivalence} for JavaScript regexes. We use it to prove the correctness of regex equations from the literature, as well as prove and disprove optimizations used in a real-world JavaScript regex processor.
\item The first \textbf{formal verification of the PikeVM algorithm}, a linear-time regex matching algorithm used in many real-world regex engines. This is the first time that NFA simulation with capture groups and backtracking semantics comes with a formal proof.
\end{itemize}
The theorem relating our semantics model to the official ECMAScript standard makes this the first formal verification work directly connected to a real-world regex specification.

%All definitions and theorems presented in this paper have been mechanized in the Rocq proof assistant.
% outline
\Autoref{sec:background} informally introduces JavaScript regexes.
\Autoref{sec:semantics} presents our new inductive semantics and its properties.
\Autoref{sec:warblre_equiv} shows its faithfulness by proving it equivalent to the Warblre embedding.
We then present two formally verified applications.
First, \autoref{sec:rewrite} defines our new regex equivalence and our case studies of correct and incorrect regex rewrites.
Second, \autoref{sec:pikevm} presents our formal verification of the PikeVM algorithm.
Finally, \autoref{sec:related} discusses adapting and extending our work to other languages and applications, as well as related and future work.

% rocq proofs
Our Rocq development, consisting of over 15k lines of definitions and proofs, is available as supplementary material.
\appendixref{app:correspondence} links each individual paper definition to its Rocq counterpart.
Some definitions have been simplified in the paper for presentation purposes.

\section{Background on JavaScript Regexes}
\label{sec:background}

\subsection{Syntax}

The abstract syntax of JavaScript regexes we use in this work is summarized on \Autoref{fig:syntax}, with some modifications from concrete JavaScript syntax for presentation and convenience purposes.
These differences are as follows.
We use \re{\regepsilon} instead of no string at all for the empty regex.
In concrete JavaScript syntax, capture groups are pairs of parentheses that are implicitly associated with an index depending on their position in the regex. Here, we directly annotate each group with its index, $\gid$. For instance, \texttt{(a(b))} in JavaScript syntax corresponds to \re{\group{1}{a\group{2}{b}}}.
JavaScript also uses syntactic sugar \texttt{(?<\jsvar{name}>\re{\subreg})} (\textit{named group}) and \texttt{\textbackslash k<\jsvar{name}>} (\textit{named backreference}) to refer to groups using names instead of indices. Group names can refer to one group at most, so there is a direct correspondence from names to indices.
To group subexpressions together without defining a capture group, JavaScript uses non-capturing groups, denoted as \texttt{(?:$\subreg$)} and for which we use the notation \re{\noncap{\subreg}}.
These non-capturing groups are here to help parsing, and are not reflected in the AST.
Disjunction is right-associative and has the lowest parsing precedence, meaning for instance that \re{\disjunction{a}{\disjunction{ab}{abc}}} is exactly the same regex as \re{\disjunction{a}{\noncap{\disjunction{\noncap{ab}}{\noncap{abc}}}}}.

%Our textual representation of quantifiers also diverges.
Quantifiers in concrete JavaScript syntax are written \texttt{\jsvar{r}\{\jsvar{min},\jsvar{max}\}} where \jsvar{max} is either a natural number or infinity.
This represents any number of repetitions of \jsvar{r} between \jsvar{min} and \jsvar{max}.
Since every valid regex quantifier in JavaScript has $\jsvar{min}\leq\jsvar{max}$, we instead represent quantifier upper bounds with a $\Delta$ representing \texttt{\jsvar{max}-\jsvar{min}}, simplifying the semantics.
Quantifiers can either be greedy or lazy (the latter by appending a \texttt{?} in concrete syntax); we represent this with a $\top$ for greedy or a $\bot$ for lazy.
JavaScript also defines useful syntactic sugar for some common quantifiers:
\re{\regstar{\subreg}} for \re{\quant{\subreg}{0}{\infty}{\top}},
\re{\lazystar{\subreg}} for \re{\quant{\subreg}{0}{\infty}{\bot}},
\re{\regplus{\subreg}} for \re{\quant{\subreg}{1}{\infty}{\top}},
\re{\lazyplus{\subreg}} for \re{\quant{\subreg}{1}{\infty}{\bot}},
\re{\regqm{\subreg}} for \re{\quant{\subreg}{0}{1}{\top}},
and \re{\lazyqm{\subreg}} for \re{\quant{\subreg}{0}{1}{\bot}}.

% version and completeness
This syntax covers all regexes from the 14\textsuperscript{th} edition of ECMAScript~\cite{ecma_2023}.

\begin{figure}
  \raggedright
  \begin{minipage}[t]{.5\textwidth}
  Regular Expressions:\\
  \begin{tabular}{l r l l}
    \re{\subreg} & $::=$ & \re{\regepsilon} & Empty \\
    &$\mid$& \re{\regchar{\cd}} & Character \\
    &$\mid$& \re{\disjunction{\subreg_1}{\subreg_2}} & Disjunction \\
    &$\mid$& \re{\sequence{\subreg_1}{\subreg_2}} & Sequence \\
    &$\mid$& \re{\group{\gid}{\subreg}} & Capture group \\
    &$\mid$& \re{\noncap{\subreg}} & Non-capturing group \\
    &$\mid$& \re{\anchor{\anc}} & Anchor \\
    &$\mid$& \re{\backref{\gid}} & Backreference \\
    &$\mid$& \re{\quant{\subreg}{\rmin}{\Delta}{\top}} & Greedy Quantifier \\
    &$\mid$& \re{\quant{\subreg}{\rmin}{\Delta}{\bot}} & Lazy Quantifier \\
    &$\mid$& \re{\lookaround{\lk}{\subreg}} & Lookaround \\
  \end{tabular}\\
  \end{minipage}%
  \begin{minipage}[t]{.5\textwidth}
  
  Character descriptors:\\
  \begin{tabular}{l r l}
    $\cd$ & $::=$ & \re{\rchar} ~$\mid$~ \re{\regdot} $\mid$ \re{\charset{\rchar_0\cdrange{\rchar_1}{\rchar_2}}} ~$\mid$~ \re{\charset{\cdcomplement{}\rchar_0\cdrange{\rchar_1}{\rchar_2}}} ~$\mid$~\\
    & & \re{\esc{w}} ~$\mid$~ \re{\esc{W}} ~$\mid$~ \re{\esc{d}} ~$\mid$~ \re{\esc{D}} ~$\mid$~
    \re{\esc{s}} ~$\mid$~ \re{\esc{S}} ~$\mid$~ \dots
  \end{tabular}\\
  
  Anchors:\\
  \begin{tabular}{l r l c l l}
    $\anc$ & $::=$ & \re{\regbos} &$\mid$& \re{\regeos} & Input boundary\\
    &$\mid$& \re{\esc{b}} &$\mid$& \re{\esc{B}} & Word boundary\\
  \end{tabular}\\
  
  Lookarounds:\\
  \begin{tabular}{l r l l}
    $\lk$ & $::=$ & \re{\posahead} & Positive lookahead\\
    &$\mid$& \re{\negahead} & Negative lookahead\\
    &$\mid$& \re{\posbehind} & Positive lookbehind\\
    &$\mid$& \re{\negbehind} & Negative lookbehind\\
  \end{tabular}\\
  \end{minipage}
  \caption{Abstract JavaScript regex syntax}
  \Description{} % Text-only figure
  \label{fig:syntax}
\end{figure}

\subsection{Informal Semantics}

\newcommand{\feature}[1]{\tightparagraph{#1.}}
We summarize here how JavaScript regexes differ from traditional regular expressions.
\feature{Character descriptors}
There are many ways to describe a set of characters in JavaScript regexes: the following list is not exhaustive.
One can write the character directly: any character \re{\rchar} matches itself.
Then, \re{\regdot} matches all characters (except line terminator characters depending on the \texttt{s} flag, described below).
Character classes match several ranges of characters. For instance \re{\charset{a\cdrange{f}{z}}} matches either \re{a}, or any letter in between \re{f} and \re{z}.
They can also be negated, for instance \re{\charset{\cdcomplement{}a\cdrange{f}{z}}} matches any character that cannot be matched by the previous example.
Finally, character class escapes match common sets of characters, for instance \re{\esc{d}} matches numerical digits.
\feature{Capture groups}
Capture groups record the last match of each subexpression inside parentheses.
For instance, matching \re{a\group{1}{\regstar{\regdot}}c} on the string \str{abcd} returns a match on the substring \str{abc}, and captures the substring \str{b} between indices 1 and 2 in group 1.
\feature{Backreferences}
A backreference \re{\backref{\gid}} matches the content of capture group $\gid$ again.
For instance, the regex \re{\group{1}{\esc{d}}\backref{1}} matches any digit repeated twice, like \str{33}. %, as capture group 1 reads any two numbers and the backreference then reads the exact same two numbers again.
\feature{Anchors}
Anchors enforce a local condition on the current input position, for instance checking for being at the end of the input (with \re{\regeos}), or being at a word boundary (with \re{\esc{b}}).
For instance, \re{a\regeos} does not match anything in \str{ab} but matches the substring \str{a} in \str{ba}.
\feature{Lookarounds}
Lookarounds describe a zero-width condition described as a regex.
For instance, the regex \re{a\lookahead{b}} matches any \str{a} only if it is followed by a \str{b}, but the \str{b} itself is not part of the returned substring.
Lookarounds can be either positive or negative; negative lookarounds require that there is no match for the subexpression at the current input position.
Positive lookarounds can define capture groups. For instance, matching \re{a\lookahead{\group{1}{\disjunction{b}{c}}}} on string \str{ab} matches the substring \str{a} and sets capture group 1 to \str{b}.
Lookbehinds allow to match the condition backwards.
For instance, matching \re{\lookbehind{Year\mathord{:}}\esc{d}+} on string \str{Year:2026} matches the substring \str{2026}.
\feature{Backtracking semantics}
As ECMAScript uses a backtracking algorithm to describe their semantics, JavaScript regexes follow backtracking semantics.
As a result, when there are several ways to match a regex, the leftmost match has priority (for instance, matching \re{\disjunction{a}{b}} on \str{dba} returns \str{b}).
Then, the left branch of each disjunction has priority over the right one (for instance, matching \re{\disjunction{a}{ab}} on \str{ab} returns \str{a}).
Finally, greedy quantifiers give priority to a match that has iterated as many times as possible, while lazy ones try to iterate as few times as possible (within what is allowed by the $\rmin$ and $\Delta$ parameters).
\feature{Unanchored and anchored matching}
The kind of regex matching we have discussed so far, where a match can start at any position in the input string (but priority is given to the leftmost-starting match), is also known as \textit{unanchored matching}.
However, the problem of unanchored matching is typically reduced to the problem of \textit{anchored matching}, where given a regex, an input string and an input position, one needs to find the top-priority match of the regex that starts precisely at this position (but may end at any position).
In the ECMAScript specification, unanchored matching is performed by repeatedly trying an anchored matcher at every possible starting position.
In implementations, the top-priority unanchored match of a regex $r$ is sometimes obtained by computing the top-priority anchored match of \re{\lazystar{\regall}\group{0}{\subreg}}, where \re{\regall} is a regex which matches any character~\cite{regexp_vm_approach}. The laziness of the star ensures we find the leftmost-starting match, and the capture group with index 0 extracts the start and end positions of that match.
As a result, our semantics focuses on anchored matches; in the rest of this paper, \textit{matching} refers to anchored matching.
\feature{Top-level API}
JavaScript provides several top-level functions to use regexes: we can find an unanchored match with \texttt{match}, find all matches with \texttt{matchAll}, replace the result of matching in a string with another substring with \texttt{replace} etc.
All of these functions can be implemented using an anchored matcher and common string manipulation operations.
\feature{Regex flags}
Regex flags can be used to change the behavior of matching.
The Unicode flag \texttt{u} changes the underlying alphabet of characters.
Then, the three flags \texttt{i} (case-insensitivity), \texttt{m} (multiline mode for anchors) and \texttt{s} (make \re{\regdot} match line terminators) affect the way some characters descriptors and anchors are matched.
Finally, the other flags affect the behavior of top-level functions and are not directly relevant to our semantics for anchored matching.
The \texttt{g} flag enables global matching (finding all matches of a regex in a string).
With the \textit{sticky} flag \texttt{y}, matching a regex on a string starts from the position where a match of that same regex was previously found.
When the flag \texttt{d} is active, an engine should not only return the substring matched by each capture group, but also the corresponding string indices.

\feature{JavaScript semantic peculiarities}
While the rules above hold for any regex language with backtracking semantics, JavaScript semantics also differs from other such languages in two ways.
First, to prevent infinite repetitions in regexes like \re{\regstar{\regepsilon}}, the \textbf{Nullable Quantifier} property of JavaScript means that, when $\rmin=0$, iterations of a quantifier cannot match the empty string.
Other languages use different criteria to prevent infinite repetitions. This unique property requires adapting standard matching algorithms (see section~\ref{sec:pikevm}).

Second, the \textbf{Capture Reset} property means that at each quantifier iteration, the values of capture groups inside that quantifier are reset.
For instance, matching \re{\regstar{\noncap{\disjunction{\group{1}{a}}{b}}}} on the string \str{ab} matches the whole string, but the value of capture group 1 is set to $\None$ after the match: the regex is iterated twice, the first iteration defines group 1, but the second iteration resets it.

\section{Tree Semantics}
\label{sec:semantics}

\subsection{An Inductive Relation for a Backtracking Execution Trace}

% informal explanation
A key insight of our semantics is to depart from the backtracking algorithm used to specify JavaScript regex semantics in ECMAScript.
In contrast, our semantics is an inductive relation that connects a regex and a string to an execution trace of a backtracking algorithm that \textit{would not stop} at the first accepting result.
This execution trace is represented with a tree, where each backtracking decision (for disjunctions and quantifiers) is represented as a branching node, and other nodes correspond to some operations of the algorithm, such as reading a character or a backreference, opening or closing a capture group, or checking the validity of an anchor or lookaround.

% example
\Autoref{fig:tree_example} showcases the backtracking tree of the regex \re{\noncap{\disjunction{a}{\disjunction{a\group{1}{b}}{a}}}bc} on the input \str{abbc}.
The tree starts by branching on two possibilities: either matching the first or the second branch of the disjunction.
On the left, we see the operations corresponding to exploring the first alternative. First, an \str{a} is read, then the disjunction has finished matching. We resume by matching a \str{b}, but then matching fails because in this string it is not possible to read a \str{c} at the current position.
The other two branches correspond to the deepest disjunction, and the middle branch does find a match in the string, opening and closing a capture group along the way.
Observe that the backtracking tree includes the full third branch, even though a backtracking algorithm would stop after reaching the match in the second branch.

\begin{figure}
\centering
\begin{minipage}{.4\textwidth}
  \centering
  \raggedright
  \begin{tikzpicture}[%
          every node/.style={rectangle,minimum size=6pt,minimum height=4pt, inner sep=1pt, align=center},
          node distance=.3cm, >=latex
    ]
    \begin{pgfonlayer}{foreground}
    \node [] (t0) [] {$\treechoice{}{}$};
    \begin{scope}[local bounding box=bb1]
      \node [] (tab) [below left=of t0] {$\treeread{a}{}$};
    \end{scope}
    \begin{scope}[local bounding box=bb2]
      \node [] (tb) [below=of tab] {$\treeread{b}{}$};
      \node [] (tm) [below=of tb] {$\treemismatch$};
    \end{scope}
    \node [] (tchoice) [below right=of t0] {$\treechoice{}{}$};
    \begin{scope}[local bounding box=bb3]
      \node [] (tabbc) [below left=of tchoice, xshift=.5cm] {$\treeread{a}{}$};
      \node [] (topen) [below=of tabbc] {$\treeopen{1}{}$};
      \node [] (tbbc) [below=of topen] {$\treeread{b}{}$};
      \node [] (tclose) [below=of tbbc] {$\treeclose{1}{}$};
    \end{scope}
    \begin{scope}[local bounding box=bb4]
      \node [] (tbc) [below=of tclose] {$\treeread{b}{}$};
      \node [] (tc) [below=of tbc] {$\treeread{c}{}$};
      \node [] (tmatch) [below=of tc] {$\treematch$};
    \end{scope}
    \begin{scope}[local bounding box=bb5]
      \node [] (tab') [below right=of tchoice, xshift=-.5cm] {$\treeread{a}{}$};
    \end{scope}
    \begin{scope}[local bounding box=bb6]
      \node [] (tb') [below=of tab'] {$\treeread{b}{}$};
      \node [] (tm') [below=of tb'] {$\treemismatch$};
    \end{scope}
    \path [draw] (t0) edge[->] node {} (tab);
    \path [draw] (tab) edge[->] node {} (tb);
    \path [draw] (tb) edge[->] node {} (tm);
    \path [draw] (t0) edge[->] node {} (tchoice);
    \path [draw] (tchoice) edge[->] node {} (tabbc);
    \path [draw] (tabbc) edge[->] node {} (topen);
    \path [draw] (topen) edge[->] node {} (tbbc);
    \path [draw] (tbbc) edge[->] node {} (tclose);
    \path [draw] (tclose) edge[->] node {} (tbc);
    \path [draw] (tbc) edge[->] node {} (tc);
    \path [draw] (tc) edge[->] node {} (tmatch);
    \path [draw] (tchoice) edge[->] node {} (tab');
    \path [draw] (tab') edge[->] node {} (tb');
    \path [draw] (tb') edge[->] node {} (tm');
    \end{pgfonlayer}
    % color boxes
    \begin{pgfonlayer}{background}
      \fill [ex1] (bb2.north west) rectangle (bb2.south east);
      \fill [ex1] (bb4.north west) rectangle (bb4.south east);
      \fill [ex3] (bb5.north west) rectangle (bb5.south east);
      \fill [ex4] (bb3.north west) rectangle (bb3.south east);
      \fill [ex2] (bb1.north west) rectangle (bb1.south east);
      \fill [ex1] (bb6.north west) rectangle (bb6.south east);
    \end{pgfonlayer}
      \end{tikzpicture}
  \savere{aaba}{\re{\noncap{\disjunction{\reghl{ex2}{a}}{\noncap{\disjunction{\reghl{ex4}{a\group{1}{b}}}{\reghl{ex3}{a}}}}}\reghl{ex1}{bc}}}
  \captionof{figure}{Backtracking tree \(\treecont\) of the regex\\\usere{aaba} on input \str{abbc}}
  \Description{The root of the tree is a binary `Choice' node.  To the left is a branch with unary `Read a', `Read b', and `Mismatch' nodes.  To the right is another `Choice' node, with `Read a', `Open 1', `Read b', `Close 1', `Read b', `Read c', `Match' on the left and `Read a', `Read b', `Mismatch' on the right.}
  \label{fig:tree_example}
\end{minipage}%
\begin{minipage}{.6\textwidth}
  \centering
  \raggedright
  Tree:\\
  \begin{tabular}{l r l l}
    $\treecont$ & $::=$ & $\treematch$ & Successful match\\
    &$\mid$& $\treemismatch$ & Match failure\\
    &$\mid$& $\treechoice{\treecont_1}{\treecont_2}$ & Branching\\
    &$\mid$& $\treeread{\rchar}{\treecont}$ & Character read success\\
    &$\mid$& $\treebackref{\strbr}{\treecont}$ & Backreference success\\
    &$\mid$& $\treeprogress{\treecont}$ & Progress check success\\
    &$\mid$& $\treeanchor{\anc}{\treecont}$ & Anchor success\\
    &$\mid$& $\treeopen{\gid}{\treecont}$ & Group opening\\ 
    &$\mid$& $\treeclose{\gid}{\treecont}$ & Group closing\\ 
    &$\mid$& $\treereset{\gidl}{\treecont}$ & Group resetting\\ 
    &$\mid$& $\treelk{\lk}{\treelook}{\treecont}$ & Lookaround success\\
    &$\mid$& $\treelkfail{\lk}{\treelook}$ & Lookaround failure\\
  \end{tabular}\\
  \captionof{figure}{Backtracking Trees}
  \Description{} % Text-only figure
  \label{fig:tree_syntax}
\end{minipage}%
\end{figure}

% syntax
The syntax of these backtracking trees is defined in \Autoref{fig:tree_syntax}.
A $\treemismatch$ node means that the algorithm has failed to find a match, for instance because the next character in the input does not correspond to the character descriptor found in the regex, or because some anchor was not satisfied, or because a progress check failed, etc.
For $\treechoice{}{}$ nodes, the subtrees are ordered by priority, meaning that the result of a backtracking algorithm corresponds to the leftmost $\treematch$ leaf of the corresponding tree.
Finally, nodes $\treelk{\lk}{\treelook}{\treecont}$ and $\treelkfail{\lk}{\treelook}$ include the entire tree $\treelook$ corresponding to matching the lookaround $\lk$ at the current input position.

% formalized rules
\begin{figure}
  \mbox{\infer[\ruledef{tree}{Match}]{\istree{[]}{\inp}{\gm}{\dir}{\treematch}}{}}\hfill%
  \mbox{\infer[\ruledef{tree}{Close}]{\istree{\aclose{\gid} :: \cont}{\inp}{\gm}{\dir}{\treeclose{\gid}{\treecont}}}
    {\istree{\cont}{\inp}{\gmclose{\gm}{\gid}{\idx{\inp}}}{\dir}{\treecont}}}%
  
  \semspace
  \mbox{\infer[\ruledef{tree}{Check}]{\istree{\acheck{\inpcheck} :: \cont}{\inp}{\gm}{\dir}{\treeprogress{\treecont}}}
    {\inpgt{\inp}{\inpcheck}{\dir} \newpremise \istree{\cont}{\inp}{\gm}{\dir}{\treecont}}}\hfill%
  \mbox{\infer[\ruledef{tree}{CheckFail}]{\istree{\acheck{\inpcheck} :: \cont}{\inp}{\gm}{\dir}{\treemismatch}}
    {\neg(\inpgt{\inp}{\inpcheck}{\dir})}}%

  \semspace
  \mbox{\infer[\ruledef{tree}{Read}]{\istree{\areg{\re{\regchar{\cd}}} :: \cont}{\inp}{\gm}{\dir}{\treeread{\rchar}{\treecont}}}
    {\begin{array}{c}
        \inpadvance{\cd}{\inp}{\dir} = \Some{(\rchar, \inp')}\\
        \istree{\cont}{\inp'}{\gm}{\dir}{\treecont}
      \end{array}}}\hfill%
  \mbox{\infer[\ruledef{tree}{ReadFail}]{\istree{\areg{\re{\regchar{\cd}}} :: \cont}{\inp}{\gm}{\dir}{\treemismatch}}
    {\inpadvance{\cd}{\inp}{\dir} = \None}}%

  \semspace
  \mbox{\infer[\ruledef{tree}{Disj}]{\istree{\areg{\re{(\disjunction{\subreg_1}{\subreg_2})}} :: \cont}{\inp}{\gm}{\dir}{\treechoice{\treecont_1}{\treecont_2}}}
    {\istree{\areg{\re{\subreg_1}} :: \cont}{\inp}{\gm}{\dir}{\treecont_1} \newpremise \istree{\areg{\re{\subreg_2}} :: \cont}{\inp}{\gm}{\dir}{\treecont_2}}}%

  \semspace
  \mbox{\infer[\ruledef{tree}{SeqForward}]{\istree{\areg{\re{\sequence{\subreg_1}{\subreg_2}}} :: \cont}{\inp}{\gm}{\forward}{\treecont}}
    {\istree{\areg{\re{\subreg_1}} :: \areg{\re{\subreg_2}} :: \cont}{\inp}{\gm}{\forward}{\treecont}}}\hfill%
  \mbox{\infer[\ruledef{tree}{SeqBackward}]{\istree{\areg{\re{\sequence{\subreg_1}{\subreg_2}}} :: \cont}{\inp}{\gm}{\backward}{\treecont}}
    {\istree{\areg{\re{\subreg_2}} :: \areg{\re{\subreg_1}} :: \cont}{\inp}{\gm}{\backward}{\treecont}}}%

  \semspace
  \mbox{\infer[\ruledef{tree}{Epsilon}]{\istree{\areg{\re{\regepsilon}} :: \cont}{\inp}{\gm}{\dir}{\treecont}}
    {\istree{\cont}{\inp}{\gm}{\dir}{\treecont}}}\hfill%
  \mbox{\infer[\ruledef{tree}{Group}]{\istree{\areg{\re{\group{\gid}{\subreg}}} :: \cont}{\inp}{\gm}{\dir}{\treeopen{\gid}{\treecont}}}
    {\istree{\areg{\re{\subreg}} :: \aclose{\gid} :: \cont}{\inp}{\gmopen{\gm}{\gid}{\idx{\inp}}}{\dir}{\treecont}}}%
      
  \semspace
  \mbox{\infer[\ruledef{tree}{Anchor}]{\istree{\areg{\re{\anchor{\anc}}} :: \cont}{\inp}{\gm}{\dir}{\treeanchor{\anc}{\treecont}}}
    {\begin{array}{c}
        \checkanchor{\anc}{\inp} = \top\\
        \istree{\cont}{\inp}{\gm}{\dir}{\treecont}
      \end{array}}}\hfill%
  \mbox{\infer[\ruledef{tree}{AnchorFail}]{\istree{\areg{\re{\anchor{\anc}}} :: \cont}{\inp}{\gm}{\dir}{\treemismatch}}
    {\checkanchor{\anc}{\inp} = \bot}}%

  \semspace
  \mbox{\infer[\ruledef{tree}{Backref}]{\istree{\areg{\re{\backref{\gid}}} :: \cont}{\inp}{\gm}{\dir}{\treebackref{\strbr}{\treecont}}}
    {\begin{array}{c}
        \readbackref{\gm}{\gid}{\inp}{\dir} = \Some{(\strbr, \inp')} \\
        \istree{\cont}{\inp'}{\gm}{\dir}{\treecont}
      \end{array}}}\hfill%
  \mbox{\infer[\ruledef{tree}{BackrefFail}]{\istree{\areg{\re{\backref{\gid}}} :: \cont}{\inp}{\gm}{\dir}{\treemismatch}}
    {\readbackref{\gm}{\gid}{\inp}{\dir} = \None}}%

  \semspace
  \mbox{\infer[\ruledef{tree}{Forced}]{\istree{\areg{\re{\quant{\subreg}{\rmin+1}{\Delta}{\greedy}}} :: \cont}{\inp}{\gm}{\dir}{\treereset{\defgroups{\subreg}}{\treecont}}}
    {\istree{\areg{\re{\subreg}} :: \areg{\re{\quant{\subreg}{\rmin}{\Delta}{\greedy}}} :: \cont}{\inp}{\gmreset{\gm}{\defgroups{\subreg}}}{\dir}{\treecont}}}\hfill%
  \mbox{\infer[\ruledef{tree}{Done}]{\istree{\areg{\re{\quant{\subreg}{0}{0}{\greedy}}} :: \cont}{\inp}{\gm}{\dir}{\treecont}}
    {\istree{\cont}{\inp}{\gm}{\dir}{\treecont}}}%

  \semspace
  \mbox{\infer[\ruledef{tree}{Greedy}]{\istree{\areg{\re{\quant{\subreg}{0}{\Delta+1}{\top}}} :: \cont}{\inp}{\gm}{\dir}{\treechoice{(\treereset{\defgroups{\subreg}}{\treeiter})}{\treecont_{skip}}}}
    {\istree{\cont}{\inp}{\gm}{\dir}{\treecont_{skip}} \newpremise \istree{\areg{\re{\subreg}} :: \acheck{\inp} :: \areg{\re{\quant{\subreg}{0}{\Delta}{\top}}} :: \cont}{\inp}{\gmreset{\gm}{\defgroups{\subreg}}}{\dir}{\treeiter}}}%

  \semspace
  \mbox{\infer[\ruledef{tree}{Lazy}]{\istree{\areg{\re{\quant{\subreg}{0}{\Delta+1}{\bot}}} :: \cont}{\inp}{\gm}{\dir}{\treechoice{\treecont_{skip}}{(\treereset{\defgroups{\subreg}}{\treeiter})}}}
    {\istree{\cont}{\inp}{\gm}{\dir}{\treecont_{skip}} \newpremise \istree{\areg{\re{\subreg}} :: \acheck{\inp} :: \areg{\re{\quant{\subreg}{0}{\Delta}{\bot}}} :: \cont}{\inp}{\gmreset{\gm}{\defgroups{\subreg}}}{\dir}{\treeiter}}}%

  \semspace
  \mbox{\infer[\ruledef{tree}{Lookaround}]{\istree{\areg{\re{\lookaround{\lk}{\subreg}}} :: \cont}{\inp}{\gm}{\dir}{\treelk{\lk}{\treelook}{\treecont}}}
    {\begin{array}{c}
        \lkdir{\lk} = ~\dir' \newpremise \lkresult{\lk}{\treelook}{\gm}{\idx{\inp}} = \Some{\gm'} \\
        \istree{[\areg{\re{\subreg}}]}{\inp}{\gm}{\dir'}{\treelook}
        \newpremise \istree{\cont}{\inp}{\gm'}{\dir}{\treecont}
      \end{array}}}%

  \semspace
  \mbox{\infer[\ruledef{tree}{LookaroundFail}]{\istree{\areg{\re{\lookaround{\lk}{\subreg}}} :: \cont}{\inp}{\gm}{\dir}{\treelkfail{\lk}{\treelook}}}
    {\begin{array}{c}
        \lkdir{\lk} = ~\dir' \newpremise \lkresult{\lk}{\treelook}{\gm}{\idx{\inp}} = \None \\
        \istree{[\areg{\re{\subreg}}]}{\inp}{\gm}{\dir'}{\treelook}
      \end{array}}}%

\caption{Inductive tree semantics}
\Description{} % Text-only figure
\label{fig:tree_semantics}
\end{figure}

% discussion
\paragraph{Advantages of our semantics}
The core of our semantics comprises only 21 rules shown in \Autoref{fig:tree_semantics}, achieving a convenient succinctness compared to the ECMAScript specification, while providing a complete formalization of JavaScript regexes.
To match individual characters, we reuse some definitions from the Warblre mechanization, but reasoning on the core matching algorithm can be done without unfolding them.
The rules are straightforward to mechanize, in particular we materialize the full tree of lookarounds and scan it for a match, thereby avoiding workarounds for non-strict positivity that are needed in other work (see \autoref{sec:related}).
Our semantics also supports the relevant regex flags: although we omit them from the paper definitions, our Rocq definitions are parameterized by the value of the flags \texttt{i}, \texttt{m} and \texttt{s}.
To support the Unicode mode (flag \texttt{u}), our semantics is parameterized by an alphabet of characters, like Warblre.
Although we focus on anchored matching, our semantics could also support the flags used for top-level functions.
Essentially, the flags \texttt{g} and \texttt{y} require changing the inital input position, and we already compute the group indices required by the \texttt{d} flag.

More importantly, our semantics also formalizes all possible matches of a regex on an input, not just the top-priority one.
Later, we show that this is not only helpful to verify algorithms that explore more paths than a backtracking algorithm (\Autoref{sec:pikevm}), but also to reason about contextual equivalence in backtracking semantics (\Autoref{sec:rewrite}).

% explain semantics
\paragraph{Semantics statement}
\Autoref{fig:tree_semantics} presents the rules of our semantics.
This relation naturally provides an induction principle that we use extensively in our mechanized proofs. 
We define the big-step relation $\istree{\cont}{\inp}{\gm}{\dir}{\treecont}$, to mean that $\treecont$ is the backtracking tree for the list of actions $\cont$, the input $\inp$, the group map $\gm$ and the reading direction $\dir$.

% bidirectional matching
With lookbehinds, it is possible to read the input in reverse.
As a result, there are two possible reading directions $\dir$: forward ($\forward$) and backward ($\backward$).
To support bidirectional reading, the input string $\inp$ is represented with a zipper (a pair of the list of next characters to consume, and the reversed list of characters already consumed). The relation $\inpgt{\inp_1}{\inp_2}{\dir}$ on inputs means that input $\inp_1$ is the same as $\inp_2$ after having read one or more characters following the direction $\dir$.
We define $\idx{\inp}$ to be the number of characters already read in $\inp$, and $\nextinpdir{\inp}{\dir}$ corresponds to $\inp$ after reading one character.

% explain actions
The list $\cont$ represents a stack of remaining actions, each of which can take one of three shapes.
First, regexes are actions: initially for a regex $\subreg$, the list of actions is simply $[\areg{\re{\subreg}}]$.
Second, $\aclose{\gid}$ is the action that closes the group $\gid$.
For instance, in \Autoref{fig:tree_example}, after reading the first character $b$ in the middle branch, one needs to first close group 1 before matching the rest of the regex.
Third, $\acheck{\inp}$ means that we need to check that the current input has made some progress compared to the input $\inp$; this is used in quantifier semantics to check that each optional iteration has not matched the empty string.

% group maps
The values of all capture groups are stored in a group map, $\gm$, associating to each capture group index $\gid$ the optional range that the group last matched.
In the initial group map $\gmempty$, none of the groups are defined.
The function $\gmopen{\gm}{\gid}{n}$ updates the group map $\gm$ to record that capture group $\gid$ is open at position $n$.
Then, $\gmclose{\gm}{\gid}{n}$ and $\gmreset{\gm}{\gidl}$ respectively close the group $\gid$ and reset the value of each group in the list $\gidl$.

% explain each rule
\paragraph{Explanations of rules}
We discuss here the most interesting rules of the semantics.
To pass a progress check (\ruleref{tree}{Check}), the semantics ensures that some progress has been made in the current input $\inp$ compared to the input $\inpcheck$ at which the progress check was generated.
For character descriptors (\ruleref{tree}{Read}), the semantics checks with the function $\inpadvance{\cd}{\inp}{\dir}$ that the next character $\rchar$ (for direction $\dir$) in the input $\inp$ can be matched by the character descriptor $\cd$, and returns the next input zipper $\inp'$.
In contrast with traditional regular expressions, matching the sequence $\subreg_1\subreg_2$ does not mean that we can split the input into disjoint parts (one that matches $\subreg_1$ and the other that matches $\subreg_2$): a lookahead in $\subreg_1$ might explore the same part of the input that is matched by $\subreg_2$. In \ruleref{tree}{SeqForward}, we instead match $\subreg_1$ at the current input, and add $\subreg_2$ to the list of actions to match it afterwards.
Handling the sequence also depends on the current direction of matching (\ruleref{tree}{SeqBackward}).
For anchors (\ruleref{tree}{Anchor}), $\checkanchor{\anc}{\inp}$ returns $\top$ when the input $\inp$ satisfies the anchor $\anc$ (for instance, when the input is in the final position and the anchor is $\regeos$), and $\bot$ otherwise.
For backreferences (\ruleref{tree}{Backref}), the function $\readbackref{\gm}{\gid}{\inp}{\dir}$ looks up the value of group $\gid$ in $\gm$ and checks that the next characters of $\inp$ match. When the group $\gid$ is not defined, matching the backreference corresponds to reading the empty substring.

% quantifier rules
The rules for quantifiers come in three shapes.
When a quantifier has forced repetitions ($\rmin>0$, \ruleref{tree}{Forced}), the corresponding tree consists in resetting the groups inside the subregex, then matching the regex followed by the quantifier where the $\rmin$ has decreased. The function $\defgroups{\subreg}$ returns the list of groups defined in $\subreg$.
When both the $\rmin$ and the $\Delta$ of a quantifier are 0, the quantifier is forced to be skipped (\ruleref{tree}{Done}).
Finally, when $\rmin=0$ and $\Delta>0$, there is a $\treechoice{}{}$ in the tree, between doing one more iteration or skipping the quantifier.
For a greedy quantifier (\ruleref{tree}{Greedy}), the extra iteration has higher priority, while for a lazy quantifier (\ruleref{tree}{Lazy}), skipping has higher priority.

% lookaround rules
For lookarounds (\ruleref{tree}{Lookaround}) the semantics considers a tree $\treelook$ of the regex inside the lookaround from the current input position and the direction of $\lk$ ($\forward$ for lookaheads, $\backward$ for lookbehinds).
Then, the function $\lkresult{\lk}{\treelook}{\gm}{i}$ checks that this tree $\treelook$ contains an accepting branch if the lookaround is positive, in which case it returns an updated group map with the groups defined in the first accepting branch of that tree.
If the lookaround is negative, the function checks that the tree has no accepting branch, and then returns the group map unchanged.
%% In JavaScript, lookarounds are \textit{atomic}: once a match is found that satisfies the lookaround, it is not possible to backtrack inside it.
%% This explains why the tree $\treelook$ is kept separate from the other branches.

% results
\paragraph{Semantics result}
We call \textit{accepting branches} the branches of a tree that end with a $\treematch$.
The \textit{leaf} of an accepting branch (a final input position and a final group map), is obtained by replaying the group operations in the branch.
The result of matching a regex on a string corresponds to the leaf of the first accepting branch in the corresponding tree.
When there is no such accepting branch, there is no match for the regex on the string.
We define the function $\firstbranch{\treecont}{\inp}$ to return either the first leaf of $\treecont$ for input $\inp$, or $\None$.\\
In \Autoref{fig:tree_example}, $\firstbranch{\treecont}{\sinput [\texttt{a};\texttt{b};\texttt{b};\texttt{c}] []} = \Some{(\sinput [] [\texttt{c};\texttt{b};\texttt{b};\texttt{a}], \gmclose{\gmopen{\gmempty}{1}{1}}{1}{2})}$,
where $\sinput~l_1~l_2$ denotes an input zipper, with $l_1$ being the list of next characters and $l_2$ the list of previously seen characters.

\subsection{Semantics Properties}
\label{subsec:props}
In this section, we show that for any list of actions, input, group map and direction, there exists a unique backtracking tree.
We first prove that our semantics is deterministic.

\begin{theorem}[Determinism]
  \label{thm:tree_det}
  $\forall\: \actions,\: \inp,\: \gm,\: \dir,\: \treecont_1,\: \treecont_2.\;$
  $\istree{\actions}{\inp}{\gm}{\dir}{\treecont_1} \wedge \istree{\actions}{\inp}{\gm}{\dir}{\treecont_2} \implies \treecont_1 = \treecont_2$.
\end{theorem}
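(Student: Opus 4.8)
The plan is to prove determinism by well-founded induction on the \emph{derivation} of the first hypothesis $\istree{\actions}{\inp}{\gm}{\dir}{\treecont_1}$, keeping $\treecont_2$ universally quantified so that the induction hypotheses are strong enough to cover the recursive premises. Since the relation in \Autoref{fig:tree_semantics} is inductively defined, its native induction principle is exactly what we need: every premise of a rule is a strictly smaller derivation, so even though the action list $\cont$ can grow (e.g.\ in \ruleref{tree}{SeqForward} or \ruleref{tree}{Group}) and quantifiers unfold into larger stacks (\ruleref{tree}{Forced}, \ruleref{tree}{Greedy}), the height of the derivation decreases at each step and no separate termination measure is required.

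For the induction step, I would case-analyze (invert) the second derivation $\istree{\actions}{\inp}{\gm}{\dir}{\treecont_2}$. The crucial observation is that the head of the action list $\cont$, together with the direction $\dir$, determines which rule(s) could have produced either derivation. An empty list forces \ruleref{tree}{Match}; an $\aclose{\gid}$ head forces \ruleref{tree}{Close}; a sequence head selects \ruleref{tree}{SeqForward} or \ruleref{tree}{SeqBackward} according to $\dir$; a quantifier head selects among \ruleref{tree}{Forced}, \ruleref{tree}{Done}, \ruleref{tree}{Greedy}, \ruleref{tree}{Lazy} according to whether $\rmin>0$, whether $\Delta=0$, and the greediness flag, which are pairwise exclusive; and so on for the remaining constructors. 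The only residual ambiguity is within the success/failure pairs \ruleref{tree}{Read}/\ruleref{tree}{ReadFail}, \ruleref{tree}{Check}/\ruleref{tree}{CheckFail}, \ruleref{tree}{Anchor}/\ruleref{tree}{AnchorFail}, \ruleref{tree}{Backref}/\ruleref{tree}{BackrefFail}, and \ruleref{tree}{Lookaround}/\ruleref{tree}{LookaroundFail}.

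These pairs are resolved by the determinacy of the auxiliary functions and predicates guarding them: $\inpadvance{\cd}{\inp}{\dir}$, $\readbackref{\gm}{\gid}{\inp}{\dir}$ and $\lkresult{\lk}{\treelook}{\gm}{\idx{\inp}}$ are genuine functions, $\checkanchor{\anc}{\inp}$ returns a boolean, and $\inpgt{\inp}{\inpcheck}{\dir}$ is decidable. Hence a function yielding $\Some{\cdot}$ in a success rule cannot simultaneously yield $\None$ in the matching failure rule, ruling out mismatched pairings; and when both derivations use the same success rule, the function returns the same value, fixing the recursive arguments (e.g.\ the same $\inp'$ in \ruleref{tree}{Read}, the same updated group map $\gm'$ in \ruleref{tree}{Lookaround}). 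With the recursive arguments pinned down, the induction hypothesis applied to each recursive premise yields equality of the corresponding subtrees, and the final tree equality follows by congruence of the tree constructor. For the branching rules \ruleref{tree}{Disj}, \ruleref{tree}{Greedy}, \ruleref{tree}{Lazy} and the two-premise lookaround rule, this requires applying the induction hypothesis to both premises.

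I expect the main obstacle to be the lookaround case. There the relation is invoked recursively both on the lookaround body (producing $\treelook$ with direction $\dir'=\lkdir{\lk}$) and on the continuation (with a group map $\gm'$ that itself depends on $\treelook$). I must first use the induction hypothesis on the body to show the two candidate $\treelook$ subtrees coincide, then exploit the functionality of $\lkresult$ to conclude that $\gm'$ — and hence the continuation's starting group map — agrees across both derivations, before the induction hypothesis on the continuation can fire. Ensuring that the induction principle makes the body's hypothesis available (and not only the continuation's) is the one spot requiring care; everything else is a mechanical, if lengthy, traversal of the 21 rules.
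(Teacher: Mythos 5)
Your proposal is correct and follows exactly the paper's approach: the authors also prove determinism by induction over the derivation of $\istree{\actions}{\inp}{\gm}{\dir}{\treecont_1}$, and your elaboration (inverting the second derivation, using the functionality of the auxiliary operations to resolve the success/failure rule pairs, and sequencing the two induction hypotheses in the lookaround case) is precisely the work that their one-line proof sketch leaves implicit.
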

\begin{proof}
  By induction over a derivation of $\istree{\actions}{\inp}{\gm}{\dir}{\treecont_1}$.
\end{proof}
We then prove its productivity.
To do so, we define a function that computes a backtracking tree, and show that this tree respects the relation defined in \Autoref{fig:tree_semantics}.
However, the termination of this function is not straightforward: the termination of quantifiers relies on the fact that each non-forced iteration must consume at least one character, and that the input itself is finite.
This is made even harder with lookarounds allowing to change the direction of matching.

We first define a function $\computetreefuel{\actions}{\inp}{\gm}{\dir}{n}$ with a fuel $n$, allowing only $n$ recursive calls to ensure termination, but which may return $\None$ when not provided with enough fuel.
Next, we define a function $\fuel{\actions}{\inp}{\dir}$ which computes an upper bound on the minimal amount of recursive calls needed for the function to terminate.
%fuel definition on regexes
The fuel of a list of actions $\actions$ and a regex $\subreg$, respectively $\fuel{\actions}{\inp}{\dir}$ (left)  and $\fuel{\subreg}{\inp}{\dir}$ (right), is defined as follows:

\smallskip
\noindent\begin{tabular*}{\textwidth}{@{}l@{\extracolsep{\fill}}r@{}}
\(\begin{aligned}
  \fuel{[]}{\inp}{\dir}
  &\defequal 1
  \\
  \fuel{\areg{\re{\subreg}}::\cont}{\inp}{\dir}
  &\defequal \fuel{\subreg}{\inp}{\dir} + \fuel{\cont}{\inp}{\dir}
  \\
  \fuel{\aclose{\gid}::\cont}{\inp}{\dir}
  &\defequal 1 + \fuel{\cont}{\inp}{\dir}
  \\
  \fuel{\acheck{\inpcheck}::\cont}{\inp}{\dir}
  &\defequal 0
  \\
  &\hspace{-2em}\text{(when $\inpcheck$ is at the end for $\dir$)}
  \\
  \fuel{\acheck{\inpcheck}::\cont}{\inp}{\dir}
  &\defequal 1 + \fuel{\cont}{\nextinpdir{\inpcheck}{\dir}}{\dir}
  \\
  &\hspace{-2em}\text{(otherwise)}
\end{aligned}\)
&
\(\begin{aligned}
  \fuel{\re{\regepsilon}}{\inp}{\dir} =
  \fuel{\re{\regchar{\cd}}}{\inp}{\dir}
  &\defequal 1
  \\
  \fuel{\re{\anchor{\anc}}}{\inp}{\dir} =
  \fuel{\re{\backref{\gid}}}{\inp}{\dir}
  &\defequal 1
  \\
  \fuel{\re{\disjunction{\subreg_1}{\subreg_2}}}{\inp}{\dir} =
  \fuel{\re{\sequence{\subreg_1}{\subreg_2}}}{\inp}{\dir}
  &\defequal 1 + \fuel{\re{\subreg_1}}{\inp}{\dir} + \fuel{\re{\subreg_2}}{\inp}{\dir}
  \\
  \fuel{\re{\group{\gid}{\subreg}}}{\inp}{\dir}
  &\defequal 2 + \fuel{\re{\subreg}}{\inp}{\dir}
  \\
  \fuel{\re{\lookaround{\lk}{\subreg}}}{\inp}{\dir}
  &\defequal 2 + \fuel{\re{\subreg}}{\worstinp{\lk}{\inp}}{\lkdir{\lk}}
  \\
  \fuel{\re{\quant{\subreg}{\rmin}{\Delta}{\greedy}}}{\inp}{\dir}
  &\defequal (2 + \fuel{\re{\subreg}}{\inp}{\dir}) \times (1 + \rmin + \inpsize{\inp}{\dir})
  \\
  &\text{}
\end{aligned}\)
\end{tabular*}

\noindent{}There are three interesting cases.
For quantifiers, we know that each non-forced repetition has to make progress.
As a result, there can be at most $1 + min + \inpsize{\inp}{\dir}$ iterations of the quantifiers, where $\inpsize{\inp}{\dir}$ is the number of characters left to be read in the string for direction $\dir$.
For lookarounds, we cannot know ahead-of-time where they will be matched from, so we define  $\worstinp{\lk}{\inp}$ to be the worst possible input for the lookaround direction (at the beginning for a lookahead, at the very end for a lookbehind).
For $\acheck{\inpcheck}$ actions, we know that if the input being compared to is at the end position (for instance, when the current direction is $\forward$ and the input zipper is of the form $\sinput~[]~l_2$), then the progress check cannot succeed.
And otherwise, we know that in the worst case, the input after passing the check will be $\nextinpdir{\inpcheck}{\dir}$, after having read exactly one character.

% Proving termination
With these definitions, we can prove that this fuel computation is indeed an upper bound on the number of recursive calls sufficient to finish the computation of the tree.

\begin{theorem}[Termination]
  \label{thm:termination}
  $\forall\: \actions,\: \inp,\: \gm,\: \dir,\: n.\;$
  $n > \fuel{\actions}{\inp}{\dir} \implies \computetreefuel{\actions}{\inp}{\gm}{\dir}{n} \neq \None$.
\end{theorem}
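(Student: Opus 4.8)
The plan is to prove the statement by induction on the fuel $n$, following the recursive structure of $\computetreefuel{\actions}{\inp}{\gm}{\dir}{n}$, which decrements its fuel on every recursive call. Since every value of $\fuel{\cdot}{\cdot}{\cdot}$ is a natural number, the hypothesis $n > \fuel{\actions}{\inp}{\dir}$ forces $n \geq 1$, so the computation is not immediately out of gas and instead performs a case analysis on the head of $\actions$, recursing with fuel $n-1$. The whole argument then reduces to a single local claim that I would establish uniformly across all cases: each one-step unfolding, taking a configuration $(\actions,\inp,\dir)$ to a successor $(\actions',\inp',\dir')$, strictly decreases the bound, i.e. $\fuel{\actions'}{\inp'}{\dir'} < \fuel{\actions}{\inp}{\dir}$. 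Given this, $n > \fuel{\actions}{\inp}{\dir}$ yields $n-1 > \fuel{\actions'}{\inp'}{\dir'}$, so the induction hypothesis discharges each recursive call; the non-recursive leaves ($\treematch$ for $\actions=[]$, and the $\treemismatch$ produced by \ruleref{tree}{CheckFail}, \ruleref{tree}{ReadFail}, \ruleref{tree}{AnchorFail}, \ruleref{tree}{BackrefFail}, \ruleref{tree}{LookaroundFail}) return a value distinct from $\None$ outright.

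Before the case analysis I would isolate one auxiliary lemma: $\fuel{\cdot}{\cdot}{\dir}$ is monotone in the input, namely $\inpsize{\inp'}{\dir} \leq \inpsize{\inp}{\dir}$ implies $\fuel{\actions}{\inp'}{\dir} \leq \fuel{\actions}{\inp}{\dir}$. This holds because the input enters the fuel only through the factor $1 + \rmin + \inpsize{\inp}{\dir}$ of quantifiers (monotone in $\inpsize{\inp}{\dir}$) and through $\worstinp{\lk}{\inp}$, which denotes the extreme position for the lookaround direction and is insensitive to the current position. With this lemma, the structural cases are immediate: \ruleref{tree}{Epsilon}, \ruleref{tree}{Close}, \ruleref{tree}{Disj}, the two sequencing rules (\ruleref{tree}{SeqForward}, \ruleref{tree}{SeqBackward}) and \ruleref{tree}{Group} each strip a positive constant summand from the bound at an unchanged input, while \ruleref{tree}{Read} advances the input to $\inp'$ with $\inpsize{\inp'}{\dir} = \inpsize{\inp}{\dir} - 1$, so the continuation's bound can only shrink while a $+1$ is removed.

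Three cases use the lemma in an essential way. For \ruleref{tree}{Lookaround} (and its failing variant), the body is explored at the current input $\inp$ in direction $\dir' = \lkdir{\lk}$, whereas the bound budgeted $\fuel{\subreg}{\worstinp{\lk}{\inp}}{\dir'}$; since $\worstinp{\lk}{\inp}$ maximizes the number of characters readable in direction $\dir'$, monotonicity gives $\fuel{\subreg}{\inp}{\dir'} \leq \fuel{\subreg}{\worstinp{\lk}{\inp}}{\dir'}$, and the constant $2$ covers the extra singleton action list. For \ruleref{tree}{Check}, if the progress test $\inpgt{\inp}{\inpcheck}{\dir}$ fails we emit $\treemismatch$ and are done; if it succeeds, $\inp$ lies at least one character past $\inpcheck$, so $\inpsize{\inp}{\dir} \leq \inpsize{\nextinpdir{\inpcheck}{\dir}}{\dir}$, and monotonicity makes the budgeted $1 + \fuel{\cont}{\nextinpdir{\inpcheck}{\dir}}{\dir}$ dominate the continuation's actual bound $\fuel{\cont}{\inp}{\dir}$ — while the degenerate sub-case where the compared input $\inpcheck$ is already at the end never reaches this recursive branch, because no progress is then possible.

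The main obstacle is the quantifier rule \ruleref{tree}{Greedy} (and symmetrically \ruleref{tree}{Lazy}), whose bound is the \emph{multiplicative} term $(2 + \fuel{\subreg}{\inp}{\dir}) \times (1 + \rmin + \inpsize{\inp}{\dir})$. The skip branch trivially drops the whole product, but the iterate branch re-inserts $\subreg$, an $\acheck{\inp}$, and the quantifier with a decremented $\Delta$, all at the same input $\inp$. I would expand the new bound, using that the interposed progress check evaluates its continuation at $\nextinpdir{\inp}{\dir}$ so that the quantifier factor becomes $1 + \inpsize{\nextinpdir{\inp}{\dir}}{\dir} = \inpsize{\inp}{\dir}$ (for $\rmin=0$), and then verify by arithmetic — together with the monotonicity of $\fuel{\subreg}{\cdot}{\dir}$ and $\fuel{\cont}{\cdot}{\dir}$ between $\inp$ and $\nextinpdir{\inp}{\dir}$ — that the reduction of the multiplier by one more than compensates the re-inserted iteration, leaving the bound strictly smaller. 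Getting this accounting to close is the crux: the multiplicative fuel and the placement of $\acheck{\inp}$ are designed precisely so that each non-forced iteration spends one unit of the $\inpsize{\inp}{\dir}$ budget. The forced-repetition rule \ruleref{tree}{Forced} ($\rmin>0$) is by comparison routine, since the multiplier drops from $2+\rmin+\inpsize{\inp}{\dir}$ to $1+\rmin+\inpsize{\inp}{\dir}$ at a fixed input, and the resulting $2 + \fuel{\subreg}{\inp}{\dir}$ already covers the single re-inserted copy of $\subreg$.
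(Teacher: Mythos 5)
Your proposal is correct and follows exactly the paper's strategy: the paper's proof is the one-line ``induction over $n$, showing each recursive call strictly decreases the fuel computation,'' and your write-up is a faithful elaboration of that argument, including the input-monotonicity lemma and the multiplicative accounting for quantifiers that the paper leaves implicit. Your reading of the degenerate check case (the stored input $\inpcheck$ being at the end, so the check cannot succeed) matches the paper's prose explanation, so no discrepancies remain.
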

\begin{proof}
  By induction over $n$. In the inductive case, we prove that each recursive call strictly decreases the fuel computation.
\end{proof}

From here on we omit the fuel argument.
Finally, we prove that the tree we compute is a correct backtracking tree for its arguments.
Combined with the determinism of \Autoref{thm:tree_det}, this shows that for any list of actions, input, and direction, there exists a unique backtracking tree.

\begin{theorem}[Functional Semantics Correctness]
  \label{thm:functional_correctness}
  $\forall\: \actions,\: \inp,\: \gm,\: \dir.\;
  \istree{\actions}{\inp}{\gm}{\dir}{\computetree{\actions}{\inp}{\gm}{\dir}}$.
\end{theorem}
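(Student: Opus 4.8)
The plan is to reduce this claim about the fuel-free function $\computetree{\actions}{\inp}{\gm}{\dir}$ to a statement about the fuel-indexed function and then invoke \Cref{thm:termination}. Concretely, I would first prove the auxiliary lemma that for every fuel $n$, if $\computetreefuel{\actions}{\inp}{\gm}{\dir}{n} = \Some{\treecont}$ then $\istree{\actions}{\inp}{\gm}{\dir}{\treecont}$, by induction on the natural number $n$ (the base case $n=0$ being vacuous, since the function returns $\None$ with no fuel). For the theorem itself, I would then fix $\actions,\inp,\gm,\dir$, instantiate the lemma at $n = \fuel{\actions}{\inp}{\dir}+1$, and use \Cref{thm:termination} to obtain $\computetreefuel{\actions}{\inp}{\gm}{\dir}{n} = \Some{\treecont}$ for some $\treecont$. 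Since $\computetree{\actions}{\inp}{\gm}{\dir}$ is by definition that very $\treecont$, the lemma immediately closes the goal.

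In the inductive step of the lemma I would case-split on the shape of $\actions$, exactly mirroring the definition of $\computetreefuel{\actions}{\inp}{\gm}{\dir}{n}$; each branch of the function matches one rule of \Cref{fig:tree_semantics}. The leaf branches are immediate: when $\actions=[]$ the function yields $\treematch$ and \ruleref{tree}{Match} applies with no premise, and in the failure branches (where a side condition such as $\inpadvance{\cd}{\inp}{\dir}=\None$ or $\checkanchor{\anc}{\inp}=\bot$ holds) the function yields $\treemismatch$ and the corresponding failure rule applies directly. For the branches that recurse, the hypothesis $\computetreefuel{\actions}{\inp}{\gm}{\dir}{n}=\Some{\treecont}$ forces each recursive call — made at fuel $n-1$ — to itself return some $\Some{\cdot}$; the induction hypothesis converts these into derivations of exactly the premises demanded by the relevant rule, and applying that rule produces the goal. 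The group-map updates and input advances appearing in the rules are precisely the ones the function performs, so the side conditions line up syntactically.

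The cases needing the most care are those with more than one recursive call and with direction- or result-dependent dispatch. For sequences I would split on $\dir$ to select \ruleref{tree}{SeqForward} or \ruleref{tree}{SeqBackward}; for \ruleref{tree}{Disj}, \ruleref{tree}{Greedy}, and \ruleref{tree}{Lazy} I would apply the induction hypothesis to both recursive results before assembling the branching node. The genuinely delicate case is the lookaround, where the function first computes the inner tree $\treelook$ by a recursive call in direction $\dir' = \lkdir{\lk}$ and then branches on $\lkresult{\lk}{\treelook}{\gm}{\idx{\inp}}$: when this is $\Some{\gm'}$ a second recursive call computes the continuation and \ruleref{tree}{Lookaround} applies, and when it is $\None$ only the $\treelook$ derivation is needed and \ruleref{tree}{LookaroundFail} applies. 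The main obstacle is thus bookkeeping rather than conceptual: there are many cases, and the lookaround case invokes the induction hypothesis twice on distinct smaller-fuel subcalls. Crucially, because the induction is on the raw fuel counter $n$ rather than on the estimate $\fuel{\actions}{\inp}{\dir}$, every recursive call sits at fuel $n-1<n$ and the induction hypothesis applies uniformly, so no fresh argument about how $\fuel{\actions}{\inp}{\dir}$ decreases is required here — that reasoning was already discharged in \Cref{thm:termination}.
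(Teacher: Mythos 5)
Your proposal is correct and matches the paper's proof, which is exactly an induction over the number of recursive calls (the fuel counter $n$), combined with \Cref{thm:termination} to discharge the $\None$ case when passing to the fuel-free $\computetree{\actions}{\inp}{\gm}{\dir}$. The case analysis you describe, including the two-recursive-call cases and the lookaround dispatch on $\lkresult{\lk}{\treelook}{\gm}{\idx{\inp}}$, is the expected bookkeeping and raises no gaps.
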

\begin{proof}
  By induction over the number of recursive calls.
\end{proof}

\section{Semantics Faithfulness: Equivalence to Warblre}
\label{sec:warblre_equiv}

In this section, we prove that our tree semantics is faithful to the 14\textsuperscript{th} edition of the ECMAScript specification~\cite{ecma_2023}.
To do so, we provide a Rocq proof that it is equivalent to Warblre~\cite{warblre_icfp}, a faithful, manually audited, line-by-line translation of the ECMAScript regex chapter into Rocq.
The ECMAScript regex chapter consists of more than 30 pages of a pseudocode algorithm.
The main function of that algorithm, \textit{CompilePattern} (that we name $\warblrecompile{\subreg}$ for short), compiles a regex into a pseudocode \textit{matcher function} that implements a backtracking search.
This matcher function takes as input a string and a starting index position, and returns a \textit{match result}, which either indicates that no match was found at that starting position, or otherwise returns the final string position and the set of capture group values. 
We prove that the first accepting branch of our backtracking tree semantics corresponds to the result of the Warblre matcher function.

\paragraph{Regex equivalence}
First, to facilitate proofs and provide concise rules, the regex type we use in our semantics differs slightly from the AST used in ECMAScript and Warblre.
While our quantifiers use convenient $\rmin$ and $\Delta$ parameters, in Warblre each quantifier has a $\rmin$ and a $\rmax$ parameter, which corresponds to $\rmin+\Delta$.
At first glance, our type appears less expressive, but ECMAScript starts compilation by checking that the regex is well-formed, which includes that in all quantifiers, $\rmax\geq\rmin$.
We denote by $\wellformed$ the set of well-formed Warblre regexes (those that pass \textit{Early Errors} in ECMAScript parlance).
We define a function, $\tolinden{\subreg_w}$ translating a well-formed regex $\iswf{\subreg_w}$ from the Warblre AST into ours.
Another interesting difference is the indexing of capture groups.
In Warblre, capture groups have no indices in the AST.
Instead, at each manipulation of a group, its index is recomputed by counting the number of open parentheses that occurred before in the original regex.
This process requires carrying around a cumbersome compilation context in Warblre definitions.
To facilitate definitions and proofs instead, our $\tolinden{\subreg_w}$ function annotates in the AST the index of each group.
Similarly, it translates every named group or named backreference into the corresponding indexed group and indexed backreference.
Finally, there are other minor differences in the way characters are represented. For instance, to facilitate formal proofs, we use the same type to represent both single characters and Unicode or identity escapes.

Finally, we can prove the equivalence below (\Autoref{thm:warblre_equiv}).
The Warblre definitions use strings (list of characters) and natural numbers to encode position.
Our zipper inputs are more convenient to express the tree semantics, and we note $\strof{\inp}$ the original string of the zipper input $\inp$.
Finally, as our group map type slightly differs from the Warblre definition, we define a function $\towarblre{\result}$ that transforms our result into a Warblre one.

\begin{theorem}[Faithfulness]
  \label{thm:warblre_equiv}
  $\forall\: \inp,\: \iswf{\subreg_w},\:$\\
  $\warblrecompile{\subreg_w}(\strof{\inp},\idx{\inp}) = \towarblre{\firstbranch{\computetree{[\areg{\tolinden{\re{\subreg_w}}}]}{\inp}{\gmempty}{\forward}{}}{\inp}}$
\end{theorem}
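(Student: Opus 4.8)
The plan is to generalize the fixed top-level call into a statement about arbitrary continuations and then induct. The Warblre matcher is written in continuation-passing style: \textit{CompileSubPattern} turns a regex and a direction into a matcher that, given a continuation $k$ and a state $x$, backtracks and calls $k$ on success. Our action list $\cont$ plays exactly the role of $k$. I would therefore define an encoding relation, written $\encodes{k}{b}{\cont}$, between a Warblre continuation $k$, a flag $b \in \{\canexit, \cannotexit\}$ recording whether reaching the empty stack yields a top-level $\treematch$ or instead sits inside a lookaround body, and an action list $\cont$ (for an ambient direction $\dir$). The defining clauses mirror the tree rules: the empty stack $[]$ encodes the accepting continuation selected by $b$; $\areg{\re{\subreg}} :: \cont$ encodes ``compile $\subreg$ for $\dir$ with the continuation encoded by $\cont$''; and $\aclose{\gid} :: \cont$, $\acheck{\inpcheck} :: \cont$ encode the continuations that respectively record a group close or perform the endIndex progress check before proceeding as $\cont$. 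A companion state correspondence pairs $(\gm, \inp)$ with a Warblre state $x$: equal positions ($\idx{\inp}$ is $x$'s endIndex) and equal captures up to the translation $\towarblre{\cdot}$.

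The central lemma states: if $\encodes{k}{b}{\cont}$ and $(\gm, \inp)$ corresponds to $x$, then $k(x) = \towarblre{\firstbranch{\computetree{\cont}{\inp}{\gm}{\dir}}{\inp}}$. I would prove it by induction on the number of recursive calls of $\computetree{\cont}{\inp}{\gm}{\dir}$ (finite by \Cref{thm:termination}), replacing the computed tree by the unique derivation of the inductive relation via \Cref{thm:functional_correctness} and \Cref{thm:tree_det}, then case-splitting on the head of $\cont$ (equivalently, on the applicable rule). The base and first-order cases --- \ruleref{tree}{Match}, \ruleref{tree}{Read}/\ruleref{tree}{ReadFail}, \ruleref{tree}{Epsilon}, \ruleref{tree}{Anchor}/\ruleref{tree}{AnchorFail}, \ruleref{tree}{Backref}/\ruleref{tree}{BackrefFail}, \ruleref{tree}{Close}, \ruleref{tree}{Check}/\ruleref{tree}{CheckFail} --- each unfold one Warblre step against one clause of $\firstbranch{}{}$, using that the character, anchor and backreference primitives are literally the ones reused from Warblre, and that the position/capture updates agree with the state correspondence. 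For \ruleref{tree}{Disj} I would use the structural identity that the first leaf of $\treechoice{\treecont_1}{\treecont_2}$ is that of $\treecont_1$ if it accepts and otherwise that of $\treecont_2$, matching Warblre's try-left-then-right; \ruleref{tree}{SeqForward}/\ruleref{tree}{SeqBackward} and \ruleref{tree}{Group} reduce directly to the induction hypothesis once I check that pushing the new actions preserves $\encodes{k}{b}{\cont}$ (for sequences the direction dictates the push order, exactly as Warblre reverses composition when matching backward).

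The two delicate families are quantifiers and lookarounds. For quantifiers I must line up Warblre's single recursive \textit{RepeatMatcher} with our three-way split: its base case (remaining maximum $0$) is \ruleref{tree}{Done}; its forced branch (remaining minimum nonzero, which resets the inner captures and recurses without a progress check) is \ruleref{tree}{Forced} with $\gmreset{\gm}{\defgroups{\subreg}}$; and its optional branch is \ruleref{tree}{Greedy}/\ruleref{tree}{Lazy}. The optional branch must reproduce both the priority (greedy places the extra iteration $\treeiter$ in the left child of the $\treechoice{}{}$, lazy in the right, matching Warblre's greedy flag) and the \emph{Nullable Quantifier} guard, for which the pushed $\acheck{\inp}$ action must coincide exactly with Warblre's endIndex comparison inside the iteration continuation; the \emph{Capture Reset} is realized by the $\gmreset{\gm}{\defgroups{\subreg}}$ update and the recorded $\treereset{\defgroups{\subreg}}{\cdot}$ node. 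Here the quantifier fuel bound keeps the induction well-founded, and well-formedness $\iswf{\subreg_w}$ (hence $\rmax = \rmin + \Delta \geq \rmin$) is what makes our $\rmin/\Delta$ encoding faithful to Warblre's $\rmin/\rmax$ one. For lookarounds I would apply the induction hypothesis to the body with the flipped direction $\lkdir{\lk}$, the $\cannotexit$ flag, and a lookaround-local continuation, proving that the materialized subtree $\treelook$ has the same first accepting branch (positive case) or the same absence of one (negative case) as Warblre's recursive search; then $\lkresult{\lk}{\treelook}{\gm}{\idx{\inp}}$ returns precisely Warblre's updated captures, so \ruleref{tree}{Lookaround}/\ruleref{tree}{LookaroundFail} match its positive/negative branches.

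Finally I would discharge the theorem by instantiating the lemma at $\cont = [\areg{\tolinden{\re{\subreg_w}}}]$, $\gm = \gmempty$, $\dir = \forward$, $b = \canexit$ and the initial state of $\inp$, together with a separate translation-correctness lemma: that matching $\tolinden{\re{\subreg_w}}$ under our rules agrees with $\warblrecompile{\subreg_w}$, i.e.\ that $\tolinden{\cdot}$ preserves the computed group indices, resolves named groups and backreferences to the right indices, and re-encodes quantifier bounds and character descriptors faithfully on well-formed inputs. The main obstacle I expect is the quantifier case: reconciling Warblre's monolithic, count-indexed \textit{RepeatMatcher} --- which fuses the empty-match guard, the capture reset, and the greedy/lazy ordering into one pseudocode procedure --- with our decomposed, stack-based rules while simultaneously keeping the induction well-founded through the fuel measure is the most error-prone step. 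The lookaround case is the runner-up, because of the direction flip and the need to re-establish first-accepting-branch agreement for a freshly materialized subtree.
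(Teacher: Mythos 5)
Your overall architecture matches the paper's: the mechanized proof likewise defines a direction-parameterized equivalence relation between Warblre's matcher continuations and our lists of actions, proves as an intermediate theorem that the matcher continuation of $\subreg_w$ for direction $\dir$ is equivalent, for $\dir$, to $[\tolinden{\subreg_w}]$, derives the theorem as a corollary, and handles quantifiers by an inner induction on the maximum number of remaining iterations ($\rmin$ plus the length of the remaining input in direction $\dir$), which plays the role of your fuel bound. Your treatment of lookarounds (flipped direction, agreement on the first accepting branch of the materialized subtree, $\lkresult{\lk}{\treelook}{\gm}{\idx{\inp}}$ matching Warblre's capture update) is also as in the paper.

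The one genuine gap is your state correspondence, and it surfaces exactly where you claim the proof ``reduces directly to the induction hypothesis'': the \ruleref{tree}{Group} case. You require that $(\gm,\inp)$ and the Warblre state $x$ have \emph{equal captures up to the translation}. But the two semantics disagree on \emph{when} a group is recorded: rule \ruleref{tree}{Group} applies $\gmopen{\gm}{\gid}{\idx{\inp}}$ \emph{before} matching $\subreg$ and pushes $\aclose{\gid}$, whereas Warblre's matcher for \re{\group{\gid}{\subreg}} stores the starting index in a closure, matches the body against an unchanged capture array, and only writes the group's range in the continuation executed after the body succeeds. So throughout the matching of the body, $\gm$ contains a half-open entry for $\gid$ with no counterpart in $x$, and your pointwise capture equality is simply false there; the induction hypothesis cannot be applied after entering a group. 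This is not a detail --- it is the difficulty the paper's proof singles out, and it is repaired by weakening the state correspondence to permit partially defined (opened but not yet closed) groups in $\gm$, together with an invariant that the encoded Warblre continuation will later open and close exactly those groups at exactly the recorded indices. With that strengthened invariant in place, the rest of your plan (including the quantifier reconciliation you rightly flag as the delicate step) is sound and consistent with the mechanized proof.
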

\begin{proof}
  We summarize a few key insights of this considerable proof (over 6000 lines of Rocq).
  In order to define the function $\warblrecompile{\subreg}$, ECMAScript generates \textit{matcher continuations}, functions that can match a regex in a particular direction, open and close a capture group, or check for progress in the string.
  Our proof starts by defining an equivalence relation, parameterized by a direction, between these continuations and our lists of actions.
  We prove an intermediate theorem: the matcher continuation of a regex $\subreg_w$ and a direction $\dir$ is equivalent, for $\dir$, to the list of actions $[\tolinden{\subreg_w}]$.
  \Autoref{thm:warblre_equiv} then follows as a corollary.

  The proof of that intermediate theorem proceeds by induction.
  For quantifiers, we then proceed by induction on their maximum number of iterations (the sum of $\rmin$ and the size of the remaining string for the current direction).
  Groups are also handled differently in the two semantics. In our semantics, we generate an $\treeopen{}{}$ node before matching the subregex, but in ECMAScript, the corresponding matcher continuation remembers the initial input position, matches the regex, then opens and closes the group at the same time.
  As a result, we prove as an invariant that whenever our group maps have partially defined groups (opened but not yet closed), the corresponding matcher continuations in Warblre will later open the same groups at the same index.
\end{proof}

\section{Formally Verified Regex Equivalence}
\label{sec:rewrite}

To demonstrate the practicality of our semantics, we use it to define a new notion of contextual regex equivalence for backtracking semantics.  We use it to prove and disprove some seemingly intuitive equivalences, and even highlight bugs in a real-world JavaScript regex processor.

Reasoning about modern regex equivalence has many real-world uses.
Derivative-based engines compute regex equivalence classes among derivatives during matching~\cite{derivatives_reexamined, dotnet_pldi,resharp}.
Regex-optimization libraries (like \texttt{regexp-tree}~\cite{regexp_tree} for JavaScript regexes) rewrite regexes into equivalent, \textit{optimized} regexes that are expected to be faster to match.
The topic of traditional regular expression equivalence is well studied, even in proof assistants~\cite{decision_procedure, unified_decision_procedure, deciding_equivalence_coq, equivalence_compact_proof, pearl_equivalence}, but these definitions are not applicable to modern regexes with backtracking semantics and nontraditional features.
For these, the literature is much more sparse.
\citet{decidability} has shown that backreferences make regex equivalence undecidable.
Recent work has also established a few results for regex languages with restricted feature sets.
For instance, \citet{lean_lookarounds} show that some anchors can be expressed in terms of lookarounds and
\citet{efficient_lookarounds} establish regex equivalence formulae about lookarounds.

However, even these new equivalences may not hold for a full real-world language with backtracking semantics.
For instance, we found that some lookaround equivalences from~\citet{efficient_lookarounds}, while correct in a language without backreferences, do not hold in JavaScript.
This is because lookarounds do not commute: the regex \re{\lookahead{\group{1}{a}}\lookahead{a\backref{1}}a} does not match the string \str{a}, as the first lookaround sets group 1 to \str{a}, but then the second one fails to read \str{a} twice.
In contrast, \re{\lookahead{a\backref{1}}\lookahead{\group{1}{a}}a} does match the string \str{a}: since group 1 is not defined yet, the backreference matches the empty string.

We present a new way to formally verify regex equivalence using our backtracking tree semantics.  To facilitate comparison to previous work, we restrict this discussion to the default JavaScript semantics, leaving out the \flag{i} (\texttt{ignoreCase}), \flag{m} (\texttt{multiline}), and \flag{s} (\texttt{dotAll}) flags.

\subsection{A New Definition of Regex Equivalence}

\paragraph{Observational equivalence and its limitations}
The most natural way to define regex equivalence is to say that $\subreg_1$ and $\subreg_2$ are equivalent when for any input $\inp$, matching $\subreg_1$ on $\inp$ returns the same result as matching $\subreg_2$ on $\inp$.
We refer to this as \textit{observational equivalence}, and denote it as $\regeq{\subreg_1}{\subreg_2}$.
While observational equivalence is enough to replace the matching of a regex with another, it is not strong enough to be used locally, to replace a subregex within a bigger context.

For instance, consider the regexes \re{\disjunction{\regepsilon}{a}} and \re{\disjunction{\regepsilon}{b}}.
These two regexes are observationally equivalent: on any input, they will both match the empty substring since this corresponds to the match with the highest priority.
However, we cannot replace \re{\disjunction{\regepsilon}{a}} with \re{\disjunction{\regepsilon}{b}} in a bigger context and preserve equivalence.
For instance, \re{c\noncap{\disjunction{\regepsilon}{a}}c} matches the input \str{cac}, but  \re{c\noncap{\disjunction{\regepsilon}{b}}c} does not.

\paragraph{Directional contextual equivalence}
Instead, we need a notion of equivalence that allows local rewriting of regexes.
We refer to this as \textit{contextual equivalence}, and use the notation $\leafeq{\subreg_1}{\subreg_2}$.
We would like this definition to imply observational equivalence in a bigger context: $\leafeq{\subreg_1}{\subreg_2} \implies \forall\: \ctx.\; \regeq{\plug{\ctx}{\subreg_1}}{\plug{\ctx}{\subreg_2}}$,\: where $\ctx$ is a regex context (a regex with a hole).

But in fact, we show that we can define a more precise equivalence with the following novel observation: some regex equivalences only hold in a given matching direction.
As a result, in the rest of this section we define two notions of equivalence, $\leafeqdir{\re{\subreg_1}}{\re{\subreg_2}}{\forward}$ and $\leafeqdir{\re{\subreg_1}}{\re{\subreg_2}}{\backward}$, for equivalences that can be rewritten inside different contexts.

We distinguish three types of contexts.
\textit{Forward contexts}, denoted as $\fwdctx$, have their hole inside a lookahead, meaning that we know the regex will be matched in the forward direction.
\textit{Backwards contexts}, denoted as $\bwdctx$, have their hole inside a lookbehind.
In cases where lookarounds are nested, the deepest lookaround sets the direction of the context.
Finally, \textit{bidirectional} contexts, denoted as $\samectx$, are contexts in which the hole does not appear in a lookaround.

\paragraph{Using leaves to define contextual equivalence}
We now show that \textit{leaves} of backtracking trees of regexes are sufficient to define this directional contextual equivalence.
Crucially, the top priority match of \re{c\noncap{\disjunction{\regepsilon}{a}}c} on \str{cac} (our previous example) uses the \textit{second} priority match of subexpression \re{\disjunction{\regepsilon}{a}}.
To define a contextual equivalence, we must reason on several possible matches of a regex, which our tree semantics was designed to formalize.
We first define $\leaves{\treecont}{\inp}{\dir}$ to return the order-preserving list of leaves (pairs of a final input and a final group map at a $\treematch$ node) of the tree $\treecont$ for input $\inp$ and direction $\dir$.
We say that two lists of leaves are equivalent, noted $\leaveseq{\lea_1}{\lea_2}$, when they are equal after removing lower-priority duplicates in each list.
For instance, $\leaveseq{[(\inp_1,\gm_1);(\inp_2,\gm_2);(\inp_1,\gm_1)]}{[(\inp_1,\gm_1);(\inp_2,\gm_2)]}$.
Finally, we define directional contextual equivalence as follows:\\
$\leafeqdir{\re{\subreg_1}}{\re{\subreg_2}}{\dir}$ $\defequal$ $\defgroups{\subreg_1}=\defgroups{\subreg_2} \wedge \forall\: \inp,\: \gm.\; \leaveseq{\leaves{\computetree{[\subreg_1]}{\inp}{\gm}{\dir}}{\inp}{\dir}}{\leaves{\computetree{[\subreg_2]}{\inp}{\gm}{\dir}}{\inp}{\dir}}$

We require the two regexes to define exactly the same groups, so that replacing $\subreg_1$ with $\subreg_2$ will not turn a well-formed regex into an ill-formed one with duplicated groups.
We use notation $\leafeqdir{\re{\subreg_1}}{\re{\subreg_2}}{\both}$ for equivalence in both directions.
Removing duplicates in the list of leaves allows to prove equivalences like $\leafeqdir{\re{\disjunction{\rchar}{\rchar}}}{\re{\rchar}}{\both}$~: even though the first regex has more leaves, any result that can be obtained from the second branch could also be obtained from the first one.

Using these definitions, we were able to prove the following theorems:
\begin{theorem}[Bidirectional Equivalence]
  \label{thm:same_equiv}
  $\forall\: \subreg_1,\: \subreg_2,\: \dir.\;$
  $\leafeqdir{\re{\subreg_1}}{\re{\subreg_2}}{\dir} \implies \forall \samectx.\; \leafeqdir{\re{\plug{\samectx}{\subreg_1}}}{\re{\plug{\samectx}{\subreg_2}}}{\dir}$
\end{theorem}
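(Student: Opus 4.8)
The plan is to fix $\subreg_1$, $\subreg_2$, and $\dir$ with $\leafeqdir{\re{\subreg_1}}{\re{\subreg_2}}{\dir}$, and to prove $P(\samectx) \defequal \leafeqdir{\re{\plug{\samectx}{\subreg_1}}}{\re{\plug{\samectx}{\subreg_2}}}{\dir}$ by structural induction on the bidirectional context $\samectx$. The base case $\samectx = [\cdot]$ is exactly the hypothesis. The defining restriction of a bidirectional context is precisely what makes this work: because the hole never lies under a lookaround, every rule on the path from the root to the hole ($\ruleref{tree}{Disj}$, $\ruleref{tree}{SeqForward}$/$\ruleref{tree}{SeqBackward}$, $\ruleref{tree}{Group}$, and the three quantifier rules) preserves the matching direction, so the plugged subregex is always matched in the very direction $\dir$ for which equivalence is assumed.

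Before the case analysis I would set up three reusable facts about the leaves function $\leaves{\treecont}{\inp}{\dir}$ and the equivalence $\leaveseq{}{}$. First, $\leaves{\treecont}{\inp}{\dir}$ commutes with the tree shape: on a choice node it splits as a concatenation, $\leaves{\treechoice{\treecont_1}{\treecont_2}}{\inp}{\dir} = \leaves{\treecont_1}{\inp}{\dir} \app \leaves{\treecont_2}{\inp}{\dir}$, while the group-management nodes $\treeopen{\gid}{\treecont}$, $\treeclose{\gid}{\treecont}$, $\treereset{\gidl}{\treecont}$ only thread the group map. Second, a \emph{sequencing} lemma: the leaves produced by an appended action list $\actions_1 \app \actions_2$ are the concatenation, over each leaf $\ell$ of the tree for $\actions_1$, of the leaves of the tree for $\actions_2$ started from the final input and group map recorded in $\ell$ (a $\fn{flatMap}$); this holds because reaching a $\treematch$ coincides with reaching the empty action list ($\ruleref{tree}{Match}$), so prepending $\actions_1$ merely replaces each $\treematch$ of the tree for $\actions_1$ by the continuation tree for $\actions_2$. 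Third, a congruence lemma: $\leaveseq{}{}$, which identifies two leaf lists that agree after deleting lower-priority duplicates, is a congruence for $\app$ and for $\fn{flatMap}$ — in the latter both in the mapped list and pointwise in the mapped function, using that the tree, hence the leaves, is a function of its arguments ($\autoref{thm:functional_correctness}$, $\autoref{thm:tree_det}$). I also need that $\defgroups{\cdot}$ is compositional over contexts, so that $\defgroups{\subreg_1} = \defgroups{\subreg_2}$ lifts to $\defgroups{\plug{\samectx}{\subreg_1}} = \defgroups{\plug{\samectx}{\subreg_2}}$, discharging the group-set component of $P(\samectx)$ uniformly.

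With these lemmas the disjunction, sequence, and group cases are uniform. For $\disjunction{\samectx'}{\subreg}$ (and its mirror), $\ruleref{tree}{Disj}$ makes the tree a choice; the leaves split as a concatenation, the fixed alternative $\subreg$ contributes identical sublists on both sides, and the induction hypothesis on $\samectx'$ together with congruence of $\app$ closes the case. For $\sequence{\samectx'}{\subreg}$ and $\sequence{\subreg}{\samectx'}$, I apply $\ruleref{tree}{SeqForward}$ or $\ruleref{tree}{SeqBackward}$ as dictated by $\dir$ to order the two actions, then invoke the sequencing lemma: the fixed side contributes identical leaves, the hole side is equivalent by the induction hypothesis (instantiated, as the hypothesis permits, at whatever input and group map the fixed side reaches), and the two directions of $\fn{flatMap}$-congruence conclude. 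The group case $\group{\gid}{\samectx'}$ reduces to the sequence case, since $\ruleref{tree}{Group}$ opens $\gid$ and pushes a close action, leaving the hole in front of the fixed continuation $[\aclose{\gid}]$.

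The hard case, and the core of the argument, is the quantifier $\quant{\samectx'}{\rmin}{\Delta}{\greedy}$, where the hole-bearing body is re-examined on every iteration, so the structural induction hypothesis — which only speaks of a single body match — does not by itself cover the fully unrolled tree. I would therefore run a nested well-founded induction on the measure $\rmin + \inpsize{\inp}{\dir}$, the same quantity that bounds the number of iterations in $\autoref{thm:termination}$. The $\ruleref{tree}{Done}$ subcase is immediate; $\ruleref{tree}{Forced}$ consumes one mandatory iteration, lowering $\rmin$; and $\ruleref{tree}{Greedy}$/$\ruleref{tree}{Lazy}$ expose one optional iteration, for which the progress check $\acheck{\inp}$ discards every body leaf that did not advance, so each surviving leaf strictly lowers $\inpsize{\inp}{\dir}$ (using that matching in direction $\dir$ never increases it). In each peeling step I glue together the sequencing lemma, the structural induction hypothesis on the single body occurrence, and the nested induction hypothesis on the smaller quantifier, all through $\fn{flatMap}$-congruence. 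The only place the two sides could diverge is the reset node emitted with group set $\defgroups{\plug{\samectx'}{\subreg_i}}$ by $\ruleref{tree}{Forced}$, $\ruleref{tree}{Greedy}$, and $\ruleref{tree}{Lazy}$; the definition of directional contextual equivalence deliberately requires $\defgroups{\subreg_1} = \defgroups{\subreg_2}$ precisely so that these resets coincide. I expect this nested induction to be the main obstacle — keeping the decreasing measure, the progress checks, and the resets aligned across all three quantifier rules — while the remaining cases become routine once the sequencing and congruence lemmas are in place.
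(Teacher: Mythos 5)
Your proposal is correct and follows essentially the same route as the paper: structural induction over the bidirectional context, with the quantifier case handled by a nested induction on a measure bounding the number of remaining iterations (the paper uses the length of the remaining input; your $\rmin + \inpsize{\inp}{\dir}$ is the same idea, matching the fuel bound of \autoref{thm:termination}). The supporting lemmas you identify --- leaf concatenation at choice nodes, the sequencing/flatMap decomposition, congruence of $\leaveseq{}{}$, and compositionality of $\defgroups{\cdot}$ --- are exactly the infrastructure the paper's one-line proof sketch presupposes.
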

\begin{theorem}[Forward Equivalence]
  \label{thm:forward_equiv}
  $\forall\: \subreg_1,\: \subreg_2.\;$
  $\leafeqdir{\re{\subreg_1}}{\re{\subreg_2}}{\forward} \implies \forall\: \fwdctx,\: \dir.\; \leafeqdir{\re{\plug{\fwdctx}{\subreg_1}}}{\re{\plug{\fwdctx}{\subreg_2}}}{\dir}$
\end{theorem}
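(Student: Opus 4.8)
The plan is to prove the statement by structural induction on the forward context $\fwdctx$, discharging the base cases with \Autoref{thm:same_equiv} and isolating a single lemma about lookarounds. The governing intuition is that the hole of a forward context is guarded by a lookahead, and by \ruleref{tree}{Lookaround} a lookahead always matches its body in the forward direction regardless of the ambient direction $\dir$. This is exactly why \emph{forward} equivalence of $\subreg_1$ and $\subreg_2$ is enough to equate $\plug{\fwdctx}{\subreg_1}$ and $\plug{\fwdctx}{\subreg_2}$ in \emph{every} direction $\dir$.

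The crux is a lookaround congruence lemma: if $\leafeqdir{\re{\tau_1}}{\re{\tau_2}}{\lkdir{\lk}}$, then $\leafeqdir{\re{\lookaround{\lk}{\tau_1}}}{\re{\lookaround{\lk}{\tau_2}}}{\dir}$ for every $\dir$. To prove it, I would first show that $\lkresult{\lk}{\treelook}{\gm}{\idx{\inp}}$ depends on $\treelook$ only through the $\leaveseq$-class of its leaves $\leaves{\treelook}{\inp}{\lkdir{\lk}}$: for a positive lookaround it returns $\Some{\gm'}$ with $\gm'$ built from the highest-priority leaf when the leaf list is non-empty and $\None$ otherwise, while for a negative lookaround it tests only emptiness; since $\leaveseq$ deletes only lower-priority duplicates it preserves both the first leaf and (non)emptiness, so $\lkresult$ is unchanged. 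Because the body of a lookaround is matched in direction $\lkdir{\lk}$, the hypothesis $\leafeqdir{\re{\tau_1}}{\re{\tau_2}}{\lkdir{\lk}}$ gives precisely leaf-equivalence of the two inner trees, so $\lkresult$ agrees on them. The two outer trees are then either both of the form $\treelk{\lk}{\treelook_i}{\treematch}$, with a single leaf $(\inp,\gm')$ that does not depend on the discarded subtree $\treelook_i$, or both of the form $\treelkfail{\lk}{\treelook_i}$ with no leaf; in either case they are leaf-equivalent in every $\dir$.

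With this lemma the induction on $\fwdctx$ is routine. When the head constructor is not a lookaround (a disjunction, sequence, capture group, or quantifier), the subcontext containing the hole is again a forward context, so the induction hypothesis applies in every direction, and I conclude with the per-constructor leaf-compositionality facts already used for \Autoref{thm:same_equiv} (the leaves of a $\treechoice{}{}$ node are the concatenation of the branch leaves, a sequence threads each leaf of its first component into the continuation, and so on) together with the fact that $\leaveseq$ is a congruence for these operations. When the head constructor is a lookaround, $\fwdctx = \re{\lookaround{\lk}{\fwdctx'}}$, there are two subcases. If the hole of $\fwdctx'$ lies outside every lookaround, then $\fwdctx'$ is a bidirectional context and the displayed $\lk$ must be a lookahead (as $\fwdctx$ is forward), so \Autoref{thm:same_equiv} instantiated at $\forward$ yields $\leafeqdir{\re{\plug{\fwdctx'}{\subreg_1}}}{\re{\plug{\fwdctx'}{\subreg_2}}}{\forward}$; otherwise $\fwdctx'$ is itself a forward context and the induction hypothesis supplies the analogous equivalence in every direction, in particular at $\lkdir{\lk}$. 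Either way the lookaround congruence lemma closes the case for all $\dir$. The side condition $\defgroups{\plug{\fwdctx}{\subreg_1}} = \defgroups{\plug{\fwdctx}{\subreg_2}}$ follows from $\defgroups{\subreg_1} = \defgroups{\subreg_2}$ by a separate, straightforward induction on $\fwdctx$.

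The main obstacle is the lookaround congruence lemma, and inside it the verification that $\lkresult$ genuinely factors through $\leaveseq$ — in particular that the group map it synthesizes for a positive lookahead reads only the highest-priority accepting leaf, which is the one $\leaveseq$ is guaranteed to preserve. A secondary difficulty, inherited from \Autoref{thm:same_equiv}, is the leaf-compositionality of quantifiers: establishing it needs an auxiliary induction on the remaining number of iterations and some care that deletion of lower-priority duplicates commutes with the sequential plugging of leaves into continuations.
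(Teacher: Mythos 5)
Your proposal is correct and follows essentially the same route as the paper, which proves this family of theorems by induction over the context (including the nested induction on the length of the remaining input for quantifiers that you also identify). The lookaround congruence lemma you isolate---that $\lkresult{\lk}{\treelook}{\gm}{\idx{\inp}}$ depends on the body's tree only through the first accepting leaf and the (non)emptiness of the leaf list, both of which are preserved by removing lower-priority duplicates---is exactly the step the paper's one-sentence proof sketch leaves implicit, and your reduction of the innermost (bidirectional) layer to \Autoref{thm:same_equiv} at direction $\forward$ matches the intended decomposition.
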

\begin{theorem}[Backward Equivalence]
  \label{thm:backward_equiv}
  $\forall\: \subreg_1,\: \subreg_2.\;$
  $\leafeqdir{\re{\subreg_1}}{\re{\subreg_2}}{\backward} \implies \forall\: \bwdctx,\: \dir.\; \leafeqdir{\re{\plug{\bwdctx}{\subreg_1}}}{\re{\plug{\bwdctx}{\subreg_2}}}{\dir}$
\end{theorem}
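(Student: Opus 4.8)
The three equivalence theorems share the same structure, so the plan is to derive all of them---and in particular the backward case at hand---from a single \emph{congruence lemma} that tracks how the matching direction evolves while descending into a context. For a context $\ctx$ and an outer direction $\dir$, let $\dir_\ctx$ be the direction induced at the hole: it starts as $\dir$ and is overwritten by $\lkdir{\lk}$ every time the path to the hole crosses a lookaround $\lk$, so it coincides with the direction of the innermost lookaround enclosing the hole (and with $\dir$ if there is none). The lemma I would prove states: for every $\ctx$, every $\dir$, and all $\subreg_1,\subreg_2$, $\leafeqdir{\re{\subreg_1}}{\re{\subreg_2}}{\dir_\ctx} \implies \leafeqdir{\re{\plug{\ctx}{\subreg_1}}}{\re{\plug{\ctx}{\subreg_2}}}{\dir}$. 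The backward theorem is then the instance $\ctx = \bwdctx$: because the innermost lookaround enclosing the hole is a lookbehind, $\dir_{\bwdctx} = \backward$ independently of the outer $\dir$, so the single hypothesis $\leafeqdir{\re{\subreg_1}}{\re{\subreg_2}}{\backward}$ discharges the conclusion for all $\dir$. (The bidirectional and forward theorems are the instances $\ctx=\samectx$, giving $\dir_\ctx=\dir$, and $\ctx=\fwdctx$, giving $\dir_\ctx=\forward$.)

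I would prove the congruence lemma by structural induction on $\ctx$. The hole case is exactly the hypothesis. Every other case is a compatibility statement for one constructor: leaf-equivalence in a fixed direction is preserved by wrapping both sides in a sequence, a disjunction, a group, or a quantifier while keeping that direction; and wrapping both sides in a lookaround $\lk$ turns a hypothesis of equivalence for $\lkdir{\lk}$ into a conclusion valid for every outer $\dir$. This lookaround case is precisely what lets the backward theorem quantify over all outer directions. It relies on the fact that the embedded lookaround subtree $\treelook$ contributes no leaves of its own to the surrounding tree: only the value $\lkresult{\lk}{\treelook}{\gm}{\idx{\inp}}$ it resolves to feeds the continuation, and that value depends only on the first accepting leaf of $\treelook$ and on whether $\treelook$ has any accepting leaf at all---both of which are preserved by $\leaveseq{\cdot}{\cdot}$, since removing lower-priority duplicates keeps the first leaf and preserves emptiness. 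Finally, the $\defgroups{\subreg_1}=\defgroups{\subreg_2}$ conjunct of leaf-equivalence propagates to $\defgroups{\plug{\ctx}{\subreg_1}}=\defgroups{\plug{\ctx}{\subreg_2}}$, discharging that half of the conclusion and making the quantifier reset nodes $\treereset{\defgroups{\subreg}}{}$ coincide on the two sides.

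The technical heart, and the step I expect to be hardest, is a \emph{leaf-composition} lemma for continuations: the leaf list of $\computetree{[\subreg]\app\cont}{\inp}{\gm}{\dir}$ is the priority-ordered concatenation, taken over each leaf $(\inp',\gm')$ of $\computetree{[\subreg]}{\inp}{\gm}{\dir}$, of the leaves of $\computetree{\cont}{\inp'}{\gm'}{\dir}$. Granting this, replacing $\subreg_1$ by $\subreg_2$ alters only the outer factor of the concatenation, so $\leaveseq{\cdot}{\cdot}$-equivalent leaf lists for $\subreg_1$ and $\subreg_2$ produce $\leaveseq{\cdot}{\cdot}$-equivalent concatenations---the crucial observation being that duplicate intermediate leaves feed identical continuation subtrees, so deduplication commutes with the factorization. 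The sequence and disjunction compatibility cases then follow immediately. The quantifier case is the subtle one, since the body $\subreg$ reappears inside the continuation of its own unfolded tree; a single use of leaf-composition does not suffice, so I would nest an induction on the iteration bound $\rmin+\inpsize{\inp}{\dir}$ (as in the Faithfulness proof) to rewrite every unfolded copy of the body simultaneously, using $\defgroups$-equality to align the interleaved $\ctor{Reset}$ and $\ctor{Acheck}$ actions on the two sides. Threading the leaf-list-up-to-duplicates algebra cleanly through this nested unfolding is where most of the work lies.
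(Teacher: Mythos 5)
Your proposal takes essentially the same route as the paper: the paper's proof likewise proceeds by induction over the context, handling the quantifier case by a nested induction on the length of the input remaining to match (as an upper bound on the number of remaining iterations). Your unification of the three context theorems into one congruence lemma parameterized by the direction induced at the hole, and your explicit leaf-composition and lookaround observations, are just a more detailed rendering of the argument the paper sketches.
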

\begin{theorem}[Contextual to Observational Equivalence]
  \label{thm:ctx_observ_equiv}
  $\forall\: \subreg_1,\: \subreg_2.\;$
  $\leafeqdir{\re{\subreg_1}}{\re{\subreg_2}}{\forward} \implies \regeq{\subreg_1}{\subreg_2}$
\end{theorem}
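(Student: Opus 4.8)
The plan is to reduce observational equivalence to an equality between the top-priority matches of the two regexes, and then to recover that equality from the \emph{heads} of the two leaf lists. Recall that the top-level match of a regex $\subreg$ on an input $\inp$ is exactly $\firstbranch{\computetree{[\subreg]}{\inp}{\gmempty}{\forward}}{\inp}$, the leaf of the first accepting branch of the forward tree built from the empty group map --- this is the object that appears in \Autoref{thm:warblre_equiv}. Thus, unfolding $\regeq{\subreg_1}{\subreg_2}$ amounts to showing, for every input $\inp$, that $\firstbranch{\computetree{[\subreg_1]}{\inp}{\gmempty}{\forward}}{\inp}$ and $\firstbranch{\computetree{[\subreg_2]}{\inp}{\gmempty}{\forward}}{\inp}$ coincide.

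The first auxiliary step is a lemma connecting $\firstbranch$ with $\leaves$: for any tree $\treecont$ and input $\inp$, $\firstbranch{\treecont}{\inp}$ is $\None$ when $\leaves{\treecont}{\inp}{\forward}$ is empty, and is otherwise the head of that list wrapped in $\ctor{Some}$. Intuitively the first accepting branch of the tree is precisely the source of the first leaf collected, so the top-priority match is the first element of the leaf list; this follows by a straightforward induction on $\treecont$ mirroring the traversal order used by $\leaves$.

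The second, and key, step is to show that leaf-list equivalence preserves heads: if $\leaveseq{\lea_1}{\lea_2}$, then $\lea_1$ and $\lea_2$ are either both empty or share the same first element. Since $\leaveseq$ is equality after removing lower-priority duplicates, and duplicate removal can never discard the first element of a list (it has no earlier, higher-priority occurrence of which it could be a duplicate), the head is invariant under this normalization. Equal normal forms therefore have equal heads.

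Putting these together, I instantiate the hypothesis $\leafeqdir{\re{\subreg_1}}{\re{\subreg_2}}{\forward}$ at an arbitrary $\inp$ and at $\gm = \gmempty$, obtaining $\leaveseq{\leaves{\computetree{[\subreg_1]}{\inp}{\gmempty}{\forward}}{\inp}{\forward}}{\leaves{\computetree{[\subreg_2]}{\inp}{\gmempty}{\forward}}{\inp}{\forward}}$. The head-preservation lemma then gives that the two leaf lists share their head (or are both empty), and the $\firstbranch$--$\leaves$ lemma turns this into the equality of the two top-priority matches, establishing $\regeq{\subreg_1}{\subreg_2}$. Note that the group-set condition $\defgroups{\subreg_1}=\defgroups{\subreg_2}$ carried by the hypothesis is not needed for this corollary. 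I expect the main obstacle to be the head-preservation lemma: it hinges on pinning down the exact definition of the duplicate-removal normalization underlying $\leaveseq$ and verifying that it fixes the first element, which is the crux that lets the full many-matches equivalence collapse to the single top-priority observation demanded by $\regeq{\,}{\,}$.
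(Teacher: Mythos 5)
Your proposal is correct: observational equivalence only observes the top-priority match, which is exactly $\firstbranch{\cdot}{\inp}$, i.e.\ the head of the leaf list, and since removing lower-priority duplicates can never discard the first element, $\leaveseq{\lea_1}{\lea_2}$ forces the two heads (or both-emptiness) to agree. The paper gives no separate argument for this corollary (its grouped proof sketch about induction over contexts concerns \Cref{thm:same_equiv,thm:forward_equiv,thm:backward_equiv}), and your decomposition into the $\firstbranch$--$\leaves$ lemma plus head-preservation of the deduplication is the natural and essentially the intended route, including the observation that the $\defgroups{}$ side condition is not needed here.
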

\begin{proof}
  Most proofs proceed by induction over the context. In the case of quantifiers for the bidirectional equivalence, we proceed by induction over the length of the input remaining to match (giving an upper-bound for the number of remaining iterations). %Both the proofs of \Autoref{thm:forward_equiv} and \Autoref{thm:backward_equiv} us \Autoref{thm:same_equiv}.
\end{proof}

\subsection{Formally Verified and Invalid Equivalences}
Using these theorems, we can now locally rewrite subregexes, using the forward equivalence when inside a lookahead or outside of lookarounds, and the backward equivalence when inside a lookbehind.
As a case study, we have proved or disproved rewrites from the literature.
We consider three sets of regex rewrites: rewriting anchors as lookarounds, quantifier merging, and the traditional associativity and distributivity of sequence and disjunction.

\begin{figure}
  \begin{tabular}[t]{cc}
    \begin{tabular}[t]{c}
      Associativity\\
      \hline
      \(\begin{aligned}\Tstrut
        \re{\disjunction{\subreg_1}{\noncap{\disjunction{\subreg_2}{\subreg_3}}}} & \botheq \re{\disjunction{\noncap{\disjunction{\subreg_1}{\subreg_2}}}{\subreg_3}} \\
        \re{\subreg_1\noncap{\subreg_2\subreg_3}} & \botheq \re{\noncap{\subreg_1\subreg_2}\subreg_3}
      \end{aligned}\)\\
      Distributivity (when $\subreg_1$ has no group)\Tstrut\\
      \hline
      \(\begin{aligned}\Tstrut
        \re{\subreg_1\noncap{\disjunction{\subreg_2}{\subreg_3}}} & \bwdeq \re{\disjunction{\noncap{\subreg_1\subreg_2}}{\noncap{\subreg_1\subreg_3}}} \\
        \re{\noncap{\disjunction{\subreg_2}{\subreg_3}}\subreg_1} & \fwdeq \re{\disjunction{\noncap{\subreg_2\subreg_1}}{\noncap{\subreg_3\subreg_1}}} \\
      \end{aligned}\)
    \end{tabular}
    &
    \begin{tabular}[t]{c}
      Anchors as lookarounds\\
      \hline
      \(\begin{aligned}\Tstrut
        \re{\regbos} & \botheq \re{\neglookbehind{\regall}} \\
        \re{\regeos} & \botheq \re{\neglookahead{\regall}} \\
        \re{\esc{b}} & \botheq \re{\disjunction{\neglookbehind{\esc{w}}\lookahead{\esc{w}}}{\lookbehind{\esc{w}}\neglookahead{\esc{w}}}} \\
        \re{\esc{B}} & \botheq \re{\noncap{\disjunction{\lookbehind{\esc{w}}}{\neglookahead{\esc{w}}}}\noncap{\disjunction{\neglookbehind{\esc{w}}}{\lookahead{\esc{w}}}}}
      \end{aligned}\)
    \end{tabular}
  \end{tabular}
  \caption{Correct regex equivalences --- associativity, distributivity and anchors}
  \Description{} % Text-only figure
\label{fig:correct_rewrites}
\end{figure}

\paragraph{Associativity and distributivity}
The regex equivalences we have proved correct are shown on \Autoref{fig:correct_rewrites}.
The missing directions for distributivity are not shown because they are not correct: we provide counter-examples in \appendixref{app:counterex} and \Autoref{fig:distr_counterex}.
We prove associativity for both the sequence and the disjunction.
These proofs illustrate that our semantics eliminates some unnecessary complexity from the Warblre definitions.
Consider for instance the associativity of disjunction. 
In Warblre, we would need to prove that even though $\subreg_2$ is compiled in different contexts, these two contexts are equivalent in the sense that they do not change the ordering of capture groups, requiring a new general property for matcher functions compiled in equivalent contexts. 
In contrast, with our equivalence definition, this associativity boils down to the associativity of the concatenation of lists of leaves, and the proof concludes in five lines.
We also find that contrary to traditional regular expressions, concatenation and disjunction can only distribute under certain conditions.
First, the distributed regex must not define any groups, to preserve the property that each capture group is defined at most once in JavaScript regexes.
Second, depending on the context direction, distributing can change the top priority result.
For instance, in \Autoref{fig:distr_counterex}, we can see that distributing \re{\disjunction{a}{ab}} over \re{\disjunction{c}{b}} changes which disjunction is considered first in their respective backtracking trees, as a result the two middle branches are inverted and the regexes are not equivalent.

\begin{figure}
\begin{tikzpicture}[%
          every node/.style={rectangle,minimum size=6pt,minimum height=4pt, inner sep=1pt, align=center},
          node distance=.2cm, >=latex
              ]
    \begin{pgfonlayer}{foreground}
    % Tree nodes
    \node [] (t0) [] {$\treechoice{}{}$};
    \node [] (t1) [below left=of t0, xshift=.5cm] {$\treeread{a}{}$};
    \node [] (t2) [below=of t1] {$\treechoice{}{}$};
    \node [] (t3) [below left=of t2] {$\treemismatch$};
    \begin{scope}[local bounding box=bb2]
    \node [] (t4) [below=of t2] {$\treeread{b}{}$};
    \node [] (t5) [below=of t4] {$\treematch$};
    \end{scope}
    \begin{scope}[local bounding box=bb1]
    \node [] (t6) [below right=of t0, xshift=-.5cm] {$\treeread{a}{}$};
    \node [] (t7) [below=of t6] {$\treeread{b}{}$};
    \end{scope}
    \node [] (t8) [below=of t7] {$\treechoice{}{}$};
    \begin{scope}[local bounding box=bb3]
    \node [] (t9) [below=of t8] {$\treeread{c}{}$};
    \node [] (t10) [below=of t9] {$\treematch$};
    \end{scope}
    \node [] (t11) [below right=of t8] {$\treemismatch$};
    % Second tree
    \node [] (s0) [right=of t0, xshift=5.5cm] {$\treechoice{}{}$};
    \begin{scope}[local bounding box=bb6]
    \node [] (s1) [below left=of s0, xshift=.5cm] {$\treechoice{}{}$};
    \node [] (s4) [below=of s1] {$\treeread{a}{}$};
    \node [] (s5) [below=of s4] {$\treeread{b}{}$};
    \end{scope}
    \node [] (s2) [below left=of s1] {$\treeread{a}{}$};
    \node [] (s3) [below=of s2] {$\treemismatch$};
    \begin{scope}[local bounding box=bb4]
    \node [] (s6) [below=of s5] {$\treeread{c}{}$};
    \node [] (s7) [below=of s6] {$\treematch$};
    \end{scope}
    \begin{scope}[local bounding box=bb8]
    \node [] (s8) [below right=of s0, xshift=-.5cm] {$\treechoice{}{}$};
    \node [] (s9) [below=of s8] {$\treeread{a}{}$};
    \end{scope}
    \begin{scope}[local bounding box=bb5]
    \node [] (s10) [below=of s9] {$\treeread{b}{}$};
    \node [] (s11) [below=of s10] {$\treematch$};
    \end{scope}
    \begin{scope}[local bounding box=bb7]
    \node [] (s12) [below right=of s8] {$\treeread{a}{}$};
    \node [] (s13) [below=of s12] {$\treeread{b}{}$};
    \end{scope}
    \node [] (s14) [below=of s13] {$\treemismatch$};
    % Tree edges
    \path [draw] (t0) edge[->] node {} (t1);
    \path [draw] (t0) edge[->] node {} (t6);
    \path [draw] (t1) edge[->] node {} (t2);
    \path [draw] (t2) edge[->] node {} (t3);
    \path [draw] (t2) edge[->] node {} (t4);
    \path [draw] (t4) edge[->] node {} (t5);
    \path [draw] (t6) edge[->] node {} (t7);
    \path [draw] (t7) edge[->] node {} (t8);
    \path [draw] (t8) edge[->] node {} (t9);
    \path [draw] (t8) edge[->] node {} (t11);
    \path [draw] (t9) edge[->] node {} (t10);
    % second tree
    \path [draw] (s0) edge[->] node {} (s1);
    \path [draw] (s0) edge[->] node {} (s8);
    \path [draw] (s1) edge[->] node {} (s2);
    \path [draw] (s1) edge[->] node {} (s4);
    \path [draw] (s2) edge[->] node {} (s3);
    \path [draw] (s4) edge[->] node {} (s5);
    \path [draw] (s5) edge[->] node {} (s6);
    \path [draw] (s6) edge[->] node {} (s7);
    \path [draw] (s8) edge[->] node {} (s9);
    \path [draw] (s8) edge[->] node {} (s12);
    \path [draw] (s9) edge[->] node {} (s10);
    \path [draw] (s10) edge[->] node {} (s11);
    \path [draw] (s12) edge[->] node {} (s13);
    \path [draw] (s13) edge[->] node {} (s14);
    \end{pgfonlayer}
    % color boxes
    \begin{pgfonlayer}{background}
    % first tree
    \fill [ex1] (t0.north west) rectangle ([yshift=-.47cm]t0.south east);
    \fill [ex1] (t1.north west) rectangle (t1.south east);
    \fill [ex1] (bb1.north west) rectangle (bb1.south east);
    \fill [ex2] (t3.north west) rectangle (t3.south east);
    \fill [ex2] (bb3.north west) rectangle (bb3.south east);
    \fill [ex3] (t11.north west) rectangle (t11.south east);
    \fill [ex3] (bb2.north west) rectangle (bb2.south east);
    % second tree
    \fill [ex2] (s3.north west) rectangle (s3.south east);
    \fill [ex2] (bb4.north west) rectangle (bb4.south east);
    \fill [ex3] (bb5.north west) rectangle (bb5.south east);
    \fill [ex3] (s14.north west) rectangle ([yshift=-.05cm]s14.south east);
    \fill [ex1] (s2.north west) rectangle ([xshift=1cm]s2.south east);
    \fill [ex1] (bb6.north west) rectangle (bb6.south east);
    \fill [ex1] ([xshift=-.075cm]bb7.north west) rectangle (bb7.south east);
    \fill [ex1] (bb8.north west) rectangle ([xshift=.075cm]bb8.south east);
    \end{pgfonlayer}
    \end{tikzpicture}
    \savere{aabcb}{\re{\noncap{\reghl{ex1}{\disjunction{a}{ab}}}\noncap{\disjunction{\reghl{ex2}{c}}{\reghl{ex3}{b}}}}}
    \savere{aabcaabb}{\re{\disjunction{\noncap{\reghl{ex1}{\disjunction{a}{ab}}}\reghl{ex2}{c}}{\noncap{\reghl{ex1}{\disjunction{a}{ab}}}\reghl{ex3}{b}}}}
    \caption{Backtracking trees of \usere{aabcb}
      and \usere{aabcaabb} on string \str{abc}}
    \Description{Left tree: `Choice' root, with `Read a' on both sides.  To the left, a `Choice' with `Mismatch' on the left and `Read b', `Match' on the right.  To the right, a `Read b', `Choice' with `Read c', `Match' on the left, `Mismatch' on the right.  Right tree: `Choice' root, with `Choice' on both sides.  To the left, the choice has `Read a', `Match' on the left, and `Read a', `Read b', `Read c', `Match' on the right.  To the right, the choice has `Read a', `Read b', `Match' on the left, and `Read a', `Read b', `Mismatch' on the right.}
    \label{fig:distr_counterex}
\end{figure}

\paragraph{Anchors as lookarounds}
We also proved that anchors can be rewritten as lookarounds, where \re{\regall} is a regex that matches all characters and \re{\esc{w}} matches all word characters.
These equivalences had been formulated by~\citet{derivatives_extended} and the first two had been mechanized in Lean by~\citet{lean_lookarounds}, but for a regex language without backtracking semantics.
Now, using our semantics and equivalence definitions, we can prove that these rewrites hold for JavaScript regexes as well.

\paragraph{Quantifier merging}
Finally, the JavaScript regex processor library \texttt{regexp-tree}~\cite{regexp_tree} optimizes repeated quantifiers in a regex.
It replaces every instance of \re{\quant{\subreg}{\rmin_1}{\Delta_1}{\greedy}\quant{\subreg}{\rmin_2}{\Delta_2}{\greedy}} with \re{\quant{\subreg}{\rmin_1+\rmin_2}{\Delta_1+\Delta_2}{\greedy}}, where $+$ is defined such that $n+\infty\defequal\infty+n\defequal\infty+\infty\defequal\infty$.
While trying to prove the correctness of this equivalence, we have found counter-examples.
But we also found that, when $\subreg$ defines no group, depending on the $\rmin$ and $\Delta$ parameters, their order, and the direction of the context, it can be correct to perform the optimization.
First, we prove the following equivalence, allowing to change the greediness of a quantifier with only forced iterations:
$\leafeqdir{\re{\quant{\subreg}{\rmin}{0}{\top}}}{\re{\quant{\subreg}{\rmin}{0}{\bot}}}{\both}$.
Then, we prove the eight correct equivalences shown on \Autoref{fig:quantifier_merge_correct}.
We give counter-examples for incorrect ones in \appendixref{app:counterex}.
The $\ok$ and $\nok$ labels indicate correct and incorrect rewrite directions, and $\notapp$ indicates that merging is not applicable since it mixes non-forced repetitions of different priorities.  Since \texttt{regexp-tree} performed all rewrites in both directions, all $\nok$ marks are \texttt{regexp-tree} bugs.
We have reported the issue to the \texttt{regexp-tree} maintainers.\footnote{\url{https://github.com/DmitrySoshnikov/regexp-tree/issues/267}}

\begin{figure}[h]
\begin{tabular}{c | c c c}
\diagbox[width=11em]{\small1\textsuperscript{st} quantifier}{\small2\textsuperscript{nd} quantifier}
                                          & \re{\quant{\subreg}{\rmin_2}{0}{\greedy}} & \re{\quant{\subreg}{0}{\Delta_2}{\top}} & \re{\quant{\subreg}{0}{\Delta_2}{\bot}}\\
\hline
\re{\quant{\subreg}{\rmin_1}{0}{\greedy}} & $\both^\ok$                                & $\forward^\ok~\backward^\nok$ & $\forward^\ok~\backward^\nok$ \\
\re{\quant{\subreg}{0}{\Delta_1}{\top}}           & $\backward^\ok~\forward^\nok$               & $\both^\ok$                  & $\notapp$ \\
\re{\quant{\subreg}{0}{\Delta_1}{\bot}}           & $\backward^\ok~\forward^\nok$               & $\notapp$                   & $\both^\nok$ \\
\end{tabular}
\caption{Correct and incorrect regex quantifier merging equivalences}
\Description{} % Text-only figure
\label{fig:quantifier_merge_correct}
\end{figure}

To provide an insight into these equivalence proofs, consider for instance proving one direction of the middle case of \Autoref{fig:quantifier_merge_correct}: $\leafeqdir{\re{\quant{\subreg}{0}{\Delta_1}{\top} \quant{\subreg}{0}{\Delta_2}{\top}}}{\re{\quant{\subreg}{0}{\Delta_1 + \Delta_2}{\top}}}{\forward}$.
In the case where both $\Delta_1$ and $\Delta_2$ are natural numbers, this is proved by induction on $\Delta_1$.
The simplified trees of each regex are shown on \Autoref{fig:atmost_atmost}.
By induction hypothesis, the two framed subtrees have equivalent leaves.
However, when $\Delta_2 > 0$, the tree on the left has an extra subtree when the first quantifier is skipped, but the second one iterates.
Since $\Delta_2-1 \leq \Delta_1+\Delta_2$, we can prove that each leaf of this extra subtree is a leaf of the subtree of \re{\quant{\subreg}{0}{\Delta_1+\Delta_2}{\top}}, and then by induction hypothesis that this is a leaf of \re{\quant{\subreg}{0}{\Delta_1}{\top} \quant{\subreg}{0}{\Delta_2}{\top}} as well.
As a result, all leaves of this extra subtree are duplicates of the leaves of the leftmost subtree, and the leaf equivalence between the two regexes hold.

\begin{figure}[h]
\begin{tikzpicture}[%
          every node/.style={rectangle,minimum size=6pt,minimum height=4pt, inner sep=1pt, align=center},
          node distance=.5cm, >=latex, shorten >=0.1cm
    ]
  \begin{pgfonlayer}{foreground}
  \node [] (m1tn) [] {\annotation{\re{\quant{\subreg}{0}{\Delta_1+1}{\top}\quant{\subreg}{0}{\Delta_2}{\top}}}\\$\treechoice{}{}$};
  \node [draw] (mtn) [below left=of m1tn, xshift=1.4cm] {\annotation{\re{\quant{\subreg}{0}{\Delta_1}{\top}\quant{\subreg}{0}{\Delta_2}{\top}}}};
  \node [] (n) [below right=of m1tn, xshift=-1.4cm] {\annotation{\re{\quant{\subreg}{0}{\Delta_2}{\top}}}\\$\treechoice{}{}$};
  \node [] (n1) [below left=of n] {\annotation{\re{\quant{\subreg}{0}{\Delta_2-1}{\top}}}};
  \node [] (match1) [below right=of n] {$\treematch$};
  \node [] (mn1) [right=of m1tn, xshift=4cm] {\annotation{\re{\quant{\subreg}{0}{\Delta_1+\Delta_2+1}{\top}}}\\$\treechoice{}{}$};
  \node [draw] (mn) [below left=of mn1, xshift=1cm] {\annotation{\re{\quant{\subreg}{0}{\Delta_1+\Delta_2}{\top}}}};
  \node [] (match2) [below right=of mn1, xshift=-1cm] {$\treematch$};
  \path [draw] (m1tn) edge[->,above left] node {{\tiny iterate}} (mtn);
  \path [draw] (m1tn) edge[->,above right] node {\tskip} (n);
  \path [draw] (n) edge[->,above left] node {{\tiny iterate}} (n1);
  \path [draw] (n) edge[->,above right] node {\tskip} (match1);
  \path [draw] (mn1) edge[->,above left] node {{\tiny iterate}} (mn);
  \path [draw] (mn1) edge[->,above right] node {\tskip} (match2);
  \node [] (eq) [left=of mn] {$\fwdeq$};
  \node [] (extra) [left=of n1,xshift=.4cm] {{\tiny extra subtree:}};
  \end{pgfonlayer}
\end{tikzpicture}
\savere{quanteq}{$\leafeqdir{\re{\quant{\subreg}{0}{\Delta_1}{\top} \quant{\subreg}{0}{\Delta_2}{\top}}}{\re{\quant{\subreg}{0}{\Delta_1 + \Delta_2}{\top}}}{\forward}$}
\caption{Inductive case when proving \usere{quanteq}}
\Description{Two side-by-side trees.  Left tree: a `Choice' leads on the left to the same regex with Delta1+1 replaced with Delta1 (iteration), and on the right to the same regex with the first quantifier skipped (skip).  Below the right regex is another choice leading to Delta2 being decremented on the left (iteration), and a Match on the right (skip).  Right tree: a `Choice' leads to the same regex with Delta1+Delta2 (iteration) or a Match (skip).}
\label{fig:atmost_atmost}
\end{figure}

These rewrites can be combined.
For instance, the following chain of forward equivalences holds:\\
\begin{tabular}{l c l}
\re{\quant{\subreg}{\rmin_0}{0}{\bot}\quant{\subreg}{\rmin_1}{\Delta_1}{\top}\quant{\subreg}{0}{\Delta_2}{\top}} &$\fwdeq$&
\re{\quant{\subreg}{\rmin_0}{0}{\top}\quant{\subreg}{\rmin_1}{0}{\top}\quant{\subreg}{0}{\Delta_1}{\top}\quant{\subreg}{0}{\Delta_2}{\top}}\\
&$\fwdeq$& \re{\quant{\subreg}{\rmin_0+\rmin_1}{0}{\top}\quant{\subreg}{0}{\Delta_1+\Delta_2}{\top}}\\
&$\fwdeq$& \re{\quant{\subreg}{\rmin_0+\rmin_1}{\Delta_1+\Delta_2}{\top}}\\
\end{tabular}

For each invalid rewrite, we conjecture that restrictions either on the subregex $\subreg$ or on the contexts could restore correctness.
For traditional regular expressions, previous work explored transformations valid within an optional repetition or a quantifier~\cite{simplifying_regular}.

\section{Formal Verification of the PikeVM Matching Algorithm}
\label{sec:pikevm}

To implement modern backtracking semantics, most engines use a backtracking algorithm.
For instance, both V8 and SpiderMonkey (the JavaScript implementations of Google Chrome and Firefox, respectively) use Irregexp, a backtracking regex engine.
However, bactracking algorithms suffer from string-size exponential complexity, leading to
the \textit{REgex Denial Of Service} (ReDoS) vulnerability, a complexity attack harming many real-world programs~\cite{freezing_the_web}.

To sidestep this complexity vulnerability, some modern engines instead use a more restrictive flavor of regexes and implement linear-time matching algorithms.
For instance, by removing backreferences (which make the matching problem NP-hard~\cite{nphard}) and lookarounds, libraries like RE2~\cite{re2}, HyperScan~\cite{hyperscan}, the Rust \texttt{Regex} crate~\cite{rust_regex} or the Go \texttt{regexp} package~\cite{go_regexp} all achieve linear-time guarantees.
For languages with non-linear features like .NET~\cite{dotnet_pldi} or JavaScript~\cite{non_backtrack_v8}, implementers sometimes provide both a complete backtracking engine and a restricted linear engine for a subset of regexes.

The PikeVM algorithm~\cite{regexp_vm_approach} is a popular linear-time matching algorithm for modern regexes with backtracking semantics.
This algorithm is a descendant of NFA simulation~\cite{thompson68}, where the NFA is extended to encode priority and is represented as interpreted bytecode.
With its encoding of priority, the PikeVM algorithm supports capture groups, and greedy and lazy quantifiers, and returns the same result as a backtracking algorithm but without the exponential complexity.
In the examples above, all the linear engines supporting capture groups implement a PikeVM (among other linear algorithms): the RE2 library\footnote{\url{https://github.com/google/re2/blob/main/re2/nfa.cc}}, the Rust \texttt{Regex} crate\footnote{\url{https://github.com/rust-lang/regex/blob/master/regex-automata/src/nfa/thompson/pikevm.rs}}, the Go engine\footnote{\url{https://cs.opensource.google/go/go/+/master:src/regexp/exec.go}} and the linear engine of V8 for JavaScript regexes.\footnote{\url{https://github.com/v8/v8/tree/main/src/regexp/experimental}}

However, the execution of the PikeVM algorithm is vastly different from the execution of a backtracking algorithm.
It explores several paths in parallel, and discards some paths to avoid the exponential complexity.
As a result, a PikeVM can be challenging to implement.
Worse: small variations in semantics can affect the correctness of the base algorithm, and \citet{regelk_pldi} have shown that the PikeVM implementation deployed in V8 used to contain a semantic bug.
While traditional NFA simulation itself has been formalized and verified many times (see \autoref{sec:related}), handling priority requires new formal arguments, and to the best of our knowledge the PikeVM has never been formally verified despite its widespread use in modern engines.

In this section, we provide the first formal verification of the PikeVM algorithm.
The proof has been fully mechanized in the Rocq proof assistant.
Using an inductive backtracking tree semantics greatly facilitates the proof, as several crucial properties can be conveniently expressed on trees.
We prove that the algorithm returns the result specified by the ECMAScript standard, making it the first time a modern regex matching algorithm is formally verified against a real-world specification.

\subsection{The PikeVM Algorithm}

\begin{figure}
\centering
\begin{minipage}[t]{.6\textwidth}
  \centering
  \raggedright
%  Regular Expressions:\\
  \begin{tabular}{l r l l}
    \re{\subreg} & $::=$ & \re{\regepsilon} & Empty \\
    &$\mid$& \re{\regchar{\cd}} & Character\\
    &$\mid$& \re{\disjunction{\subreg_1}{\subreg_2}} & Disjunction\\
    &$\mid$& \re{\sequence{\subreg_1}{\subreg_2}} & Sequence\\
    &$\mid$& \re{\group{\gid}{\subreg}} & Capture group\\
    &$\mid$& \re{\regstar{\subreg}} & Greedy star\\
    &$\mid$& \re{\lazystar{\subreg}} & Lazy star\\
  \end{tabular}\\
  \captionof{figure}{The subset of regexes, $\pikesub$,\\ supported by the PikeVM algorithm}
  \Description{} % Text-only figure
  \label{fig:syntax_pikevm}
\end{minipage}%
\begin{minipage}[t]{.4\textwidth}
  \centering
  \raggedright
%  Bytecode instructions:\\
  \begin{tabular}{l r l}
    $\instr$ & $::=$ & \accept\\
    &$\mid$& \consume{$\cd$}\\
    &$\mid$& \jmp{$\lbl$}\\
    &$\mid$& \fork{$\lbl_1$}{$\lbl_2$}\\
    &$\mid$& \setregopen{$\gid$}\\
    &$\mid$& \setregclose{$\gid$}\\
    &$\mid$& \resetregs{$\gidl$}\\
    &$\mid$& \beginloop\\
    &$\mid$& \iendloop{$\lbl$}\\
  \end{tabular}\\
  \captionof{figure}{NFA bytecode instructions}
  \Description{} % Text-only figure
  \label{fig:bytecode}
\end{minipage}
\end{figure}

In this section, we present a formal model of the PikeVM algorithm, first introduced by Rob Pike in the text editor \texttt{sam}~\cite{pike_sam}.
We formalize and prove a version of the PikeVM that slightly differs from the original algorithm to account for JavaScript-specific semantics: we generate instructions for capture reset, and we use the extension described in~\citet{regelk_pldi} to support JavaScript nullable quantifiers.
Both of these extensions have been used in the PikeVM engine of V8 (we discuss adapting these definitions for other languages in \Autoref{sec:related}).
This algorithm supports a subset of JavaScript regexes, $\pikesub$, depicted on \Autoref{fig:syntax_pikevm}.
This corresponds to the traditional features of regular expressions, extended with capture groups, and with the star being either greedy or lazy.
We extend this subset to lists of actions (all $\acheck{}$ and $\aclose{}$ actions belong to the subset).
We also extend this subset to trees (rejecting trees containing nodes corresponding to features that are not in $\pikesub$, like $\treebackref{}{}$ or $\treelk{}{}{}$).

\begin{wrapfigure}{L}{5.3cm}
%\vspace{-5pt}
  \begin{tabular}{@{}l r l@{}}
    \textbf{Regex} \re{\subreg} & \textbf{Label} & $\compilation{\re{\subreg}}$\\
    \hline
    \re{\regepsilon} & & \textit{no instruction}\\
    \hline
    \re{\regchar{\cd}} & & \consume{$\cd$}\\
    \hline
    \re{\sequence{\subreg_1}{\subreg_2}} & & $\compilation{\subreg_1}$\\
    & & $\compilation{\subreg_2}$\\
    \hline
    \re{\disjunction{\subreg_1}{\subreg_2}} & & \fork{$\lbl_1$}{$\lbl_2$}\\
    & $\lbl_1:$ & $\compilation{\subreg_1}$\\
    & & \jmp{$\lbl_{3}$}\\
    & $\lbl_2:$ & $\compilation{\subreg_2}$\\
    & $\lbl_{3}:$ & \dots\\
    \hline
    \re{\group{\gid}{\subreg}} & & \setregopen{$\gid$}\\
    & & $\compilation{\subreg}$\\
    & & \setregclose{$\gid$}\\
    \hline
    \re{\regstar{\subreg}} & $\lbl_{s}:$ & \fork{$\lbl_{in}$}{$\lbl_{out}$}\\
    & $\lbl_{in}:$ & \beginloop\\
    & & \resetregs{$\defgroups{\subreg}$}\\
    & & $\compilation{\subreg}$\\
    & & \iendloop{$\lbl_{s}$}\\
    & $\lbl_{out}:$ & \dots\\
    \hline
    \re{\lazystar{\subreg}} & $\lbl_{s}:$ & \fork{$\lbl_{out}$}{$\lbl_{in}$}\\
    & $\lbl_{in}:$ & \beginloop\\
    & & \resetregs{$\defgroups{\subreg}$}\\
    & & $\compilation{\subreg}$\\
    & & \iendloop{$\lbl_{s}$}\\
    & $\lbl_{out}:$ & \dots\\
    \hline
  \end{tabular}
  \caption{Compiling a regex to its\\ bytecode extended NFA}
  \Description{} % Text-only figure
  \label{fig:pikevm_compile}
\end{wrapfigure}

\paragraph{Compilation}
The first step of the PikeVM algorithm consists in computing a bytecode representing the NFA of the regex.
An NFA is encoded as a list of bytecode instructions.
Each bytecode instruction corresponds to a state of the NFA; these instructions are represented on \Autoref{fig:bytecode}.
All instructions are labeled with a natural number $\lbl$ indicating their position in the list.
Edges of the NFA are represented in two ways: either the instructions directly contain the labels of their successor states, or there is an implicit edge between each instruction without a label and the following instruction in the list.
The NFA bytecode compilation function, shown on \Autoref{fig:pikevm_compile}, is an extension of the traditional Thompson NFA construction~\cite{thompson68}.
The recursive compilation function, $\compilation{\subreg}$, transforms a regex into a list of bytecode instructions.
The dots ``\dots'' indicate a fresh label, to be used for the next instruction.
At the end of compilation, an \accept~ instruction (the accepting state of the NFA) is appended to the end of the list of instructions.
To encode priority, the labels in the \fork{}{}~ instruction are ordered: the first label corresponds to the top priority branch to explore.
As a result, the compilation of \re{\regstar{\subreg}} and \re{\lazystar{\subreg}} differ in their first instruction, the greedy star giving more priority to doing one more iteration, and the lazy one giving more priority to exiting.

\begin{figure}
  \mbox{\infer[\ruledef{pikevm}{Final}]{\pikevmstep{\pvstate{\inp}{\best}{[]}{[]}{\seen}}{\pvsfinal{\best}}{\code}}{}}\hspace{0.4cm}%
  \mbox{\infer[\ruledef{pikevm}{NextChar}]
    {\pikevmstep
      {\pvstate{\inp}{\best}{[]}{\blocked}{\seen}}
      {\pvstate{\inp'}{\best}{\blocked}{[]}{\emptyset}}{\code}}
    {\nextinp{\inp} = \Some{\inp'} \newpremise \blocked \neq []}}

  \semspace
  \mbox{\infer[\ruledef{pikevm}{Skip}]
    {\pikevmstep
      {\pvstate{\inp}{\best}{\thread{\pc}{\gm}{\bo} :: \pactive}{\blocked}{\seen}}
      {\pvstate{\inp}{\best}{\pactive}{\blocked}{\seen}}{\code}}
    {\inseen{\pc}{\bo}{\seen}}}
  
  \semspace
  \mbox{\infer[\ruledef{pikevm}{Match}]
    {\pikevmstep
      {\pvstate{\inp}{\best}{\thread{\pc}{\gm}{\bo} :: \pactive}{\blocked}{\seen}}
      {\pvstate{\inp}{\Some{(\inp,\gm)}}{[]}{\blocked}{\seen'}}{\code}}
    {\getpc{\code}{\pc} = \accept \newpremise \notseen{\pc}{\bo}{\seen} \newpremise \seen' = \addseen{\seen}{\pc}{\bo}}}

  \semspace
  \mbox{\infer[\ruledef{pikevm}{Block}]
    {\pikevmstep
      {\pvstate{\inp}{\best}{\thread{\pc}{\gm}{\bo} :: \pactive}{\blocked}{\seen}}
      {\pvstate{\inp}{\best}{\pactive}{\blocked \app [\thread{\pc+1}{\gm}{\canexit}]}{\seen'}}{\code}}
    {\begin{array}{r@{\newpremise} l}
        \getpc{\code}{\pc} = \consume{\cd} & \inpadvance{\cd}{\inp}{\forward} \neq \None \\
        \notseen{\pc}{\bo}{\seen} & \seen' = \addseen{\seen}{\pc}{\bo}
      \end{array}}}

  \semspace
  \mbox{\infer[\ruledef{pikevm}{FailBlock}]
    {\pikevmstep
      {\pvstate{\inp}{\best}{\thread{\pc}{\gm}{\bo} :: \pactive}{\blocked}{\seen}}
      {\pvstate{\inp}{\best}{\pactive}{\blocked}{\seen'}}{\code}}
    {\begin{array}{r@{\newpremise} l}
        \getpc{\code}{\pc} = \consume{\cd} & \inpadvance{\cd}{\inp}{\forward} = \None\\
        \notseen{\pc}{\bo}{\seen} & \seen' = \addseen{\seen}{\pc}{\bo}
      \end{array}}}

  \semspace
  \mbox{\infer[\ruledef{pikevm}{Jump}]
    {\pikevmstep
      {\pvstate{\inp}{\best}{\thread{\pc}{\gm}{\bo} :: \pactive}{\blocked}{\seen}}
      {\pvstate{\inp}{\best}{\thread{\pc'}{\gm}{\bo} :: \pactive}{\blocked}{\seen'}}{\code}}
    {\getpc{\code}{\pc} = \jmp{\pc'} \newpremise \notseen{\pc}{\bo}{\seen} \newpremise \seen' = \addseen{\seen}{\pc}{\bo}}}

  \semspace
  \mbox{\infer[\ruledef{pikevm}{Fork}]
    {\pikevmstep
      {\pvstate{\inp}{\best}{\thread{\pc}{\gm}{\bo} :: \pactive}{\blocked}{\seen}}
      {\pvstate{\inp}{\best}{\thread{\pc_1}{\gm}{\bo} :: \thread{\pc_2}{\gm}{\bo} :: \pactive}{\blocked}{\seen'}}{\code}}
    {\getpc{\code}{\pc} = \fork{\pc_1}{\pc_2} \newpremise \notseen{\pc}{\bo}{\seen} \newpremise \seen' = \addseen{\seen}{\pc}{\bo}}}

    \semspace
  \mbox{\infer[\ruledef{pikevm}{Open}]
    {\pikevmstep
      {\pvstate{\inp}{\best}{\thread{\pc}{\gm}{\bo} :: \pactive}{\blocked}{\seen}}
      {\pvstate{\inp}{\best}{\thread{\pc+1}{\gm'}{\bo} :: \pactive}{\blocked}{\seen'}}{\code}}
    {\begin{array}{r@{\newpremise} l}
        \getpc{\code}{\pc} = \setregopen{\gid} & \gmopen{\gm}{\gid}{\idx{\inp}} = \gm'\\
        \notseen{\pc}{\bo}{\seen} & \seen' = \addseen{\seen}{\pc}{\bo}
      \end{array}}}

  \semspace
  \mbox{\infer[\ruledef{pikevm}{Close}]
    {\pikevmstep
      {\pvstate{\inp}{\best}{\thread{\pc}{\gm}{\bo} :: \pactive}{\blocked}{\seen}}
      {\pvstate{\inp}{\best}{\thread{\pc+1}{\gm'}{\bo} :: \pactive}{\blocked}{\seen'}}{\code}}
    {\begin{array}{r@{\newpremise} l}
        \getpc{\code}{\pc} = \setregclose{\gid} & \gmclose{\gm}{\gid}{\idx{\inp}} = \gm'\\
        \notseen{\pc}{\bo}{\seen} & \seen' = \addseen{\seen}{\pc}{\bo}
      \end{array}}}

  \semspace
  \mbox{\infer[\ruledef{pikevm}{Reset}]
    {\pikevmstep
      {\pvstate{\inp}{\best}{\thread{\pc}{\gm}{\bo} :: \pactive}{\blocked}{\seen}}
      {\pvstate{\inp}{\best}{\thread{\pc+1}{\gm'}{\bo} :: \pactive}{\blocked}{\seen'}}{\code}}
    {\begin{array}{r@{\newpremise} l}
        \getpc{\code}{\pc} = \resetregs{\gidl} & \gmreset{\gm}{\gidl} = \gm'\\
        \notseen{\pc}{\bo}{\seen} & \seen' = \addseen{\seen}{\pc}{\bo}
      \end{array}}}
  
  \semspace
  \mbox{\infer[\ruledef{pikevm}{Begin}]
    {\pikevmstep
      {\pvstate{\inp}{\best}{\thread{\pc}{\gm}{\bo} :: \pactive}{\blocked}{\seen}}
      {\pvstate{\inp}{\best}{\thread{\pc+1}{\gm}{\cannotexit} :: \pactive}{\blocked}{\seen'}}{\code}}
    {\getpc{\code}{\pc} = \beginloop \newpremise \notseen{\pc}{\bo}{\seen} \newpremise \seen' = \addseen{\seen}{\pc}{\bo}}}

  \semspace
  \mbox{\infer[\ruledef{pikevm}{End}]
    {\pikevmstep
      {\pvstate{\inp}{\best}{\thread{\pc}{\gm}{\canexit} :: \pactive}{\blocked}{\seen}}
      {\pvstate{\inp}{\best}{\thread{\pc'}{\gm}{\canexit} :: \pactive}{\blocked}{\seen'}}{\code}}
    {\getpc{\code}{\pc} = \iendloop{\pc'} \newpremise \notseen{\pc}{\canexit}{\seen} \newpremise \seen' = \addseen{\seen}{\pc}{\canexit}}}

  \semspace
  \mbox{\infer[\ruledef{pikevm}{EndStuck}]
    {\pikevmstep
      {\pvstate{\inp}{\best}{\thread{\pc}{\gm}{\cannotexit} :: \pactive}{\blocked}{\seen}}
      {\pvstate{\inp}{\best}{\pactive}{\blocked}{\seen'}}{\code}}
    {\getpc{\code}{\pc} = \iendloop{\pc'} \newpremise \notseen{\pc}{\cannotexit}{\seen} \newpremise \seen' = \addseen{\seen}{\pc}{\cannotexit}}}
  
\caption{PikeVM small-step semantics}%
\Description{} % Text-only figure
\label{fig:pikevm_smallstep}
\end{figure}

\paragraph{Execution}
After the regex is compiled to bytecode, PikeVM interprets this bytecode on
a string of characters.
The algorithm reads each character of the string one at a time.
In between each character, it builds a list of possible incomplete paths (ordered by priority) in the bytecode NFA.

To explore different parts of the bytecode simultaneously, PikeVM keeps track of several threads.
In a thread $\thread{\pc}{\gm}{\bo}$, $\pc$ is the current label, $\gm$ is a group map, and $\bo$ is a Boolean indicating whether the thread has consumed a character since entering its last quantifier.
In this section, we describe the execution algorithm with small-step semantics, shown on \Autoref{fig:pikevm_smallstep}.\footnote{For a more traditional imperative description of the algorithm, we refer the reader to~\citet{regexp_vm_approach}.}
States of the PikeVM algorithm are either of the form $\pvsfinal{\best}$ or tuples $\pvstate{\inp}{\best}{\pactive}{\blocked}{\seen}$.
In these states, $\inp$ is the current input (although PikeVM explores several threads in parallel, they are all synchronized at the same position in the original string).
$\best$ is the top priority result found so far if any (a match with even higher priority may be found later when not in a $\pvsfinal{}$ state).
$\pactive$ is a list of active threads to explore, ordered by priority.
$\blocked$ is an ordered list of blocked threads that reached a \consume{}~ instruction for the current input, which will become the list of active threads when the algorithm advances to the next input.
Finally, $\seen$ is a set of pairs $(\pc,\bo)$ that have already been visited in the execution for the current input.
The initial state of the algorithm is defined to be $\pvsinit{\inp} \defequal \pvstate{\inp}{\None}{[\thread{0}{\gmempty}{\canexit}]}{[]}{\emptyset}$.

\paragraph{Small-step rules}
A final state is reached when there are no more active or blocked threads (\ruleref{pikevm}{Final}).
When a thread reaches an already explored state (where $\inseen{\pc}{\bo}{\seen}$), it is discarded entirely (\ruleref{pikevm}{Skip}).
This mechanism is what makes the algorithm linear: each NFA state is explored at most once per input character (the $\seen$ set is reset each time PikeVM moves to the next input (\ruleref{pikevm}{NextChar})).
Since threads are always ordered by priority, when two threads eventually reach the same configuration $(\pc,\bo)$, the lower priority one is discarded.
The other rules explain how to handle each instruction.
When reaching an \accept~instruction (\ruleref{pikevm}{Match}), a new top-priority match is found and stored in $\best$. In that case, the algorithm discards the remaining, lower-priority active threads, but keeps the blocked threads that could lead to a higher-priority match.
When reaching a \consume{}~ instruction, the current active thread is added to the bottom of the blocked list (\ruleref{pikevm}{Block}), and the Boolean is set to $\canexit$ to indicate that a character has been read since entering the last star.
Conversely, when reaching a \beginloop~ instruction, the Boolean is set to $\cannotexit$.
Finally, when reaching an \iendloop{}~ instruction, the thread is either kept or discarded depending on its Boolean (\ruleref{pikevm}{End}, \ruleref{pikevm}{EndStuck}).

An example of PikeVM bytecode for the regex \re{\regexample} is shown on \Autoref{subfig:bytecode}.
An example execution for the input \str{ab} is shown on the right column of \Autoref{subfig:exec}, showing the evolution of the lists $\pactive$ and $\blocked$ where each thread is represented only with its $\pc$.
The algorithm uses an unconventional exploring order of branches: it alternates between a breadth-first search (computing all possible reachable states for a given input position), and a depth-first search within a given input position.
In that example, we can also see PikeVM skipping a branch: it visits $\pc$ 9 twice for the same input position (once per branch of the disjunction) and discards the second visit.

\paragraph{Correctness}
To formally verify the correctness of PikeVM algorithm, one needs to prove that if a final state can be reached using the small-step semantics of \Autoref{fig:pikevm_smallstep}, then this result corresponds to the ECMAScript specification.
As PikeVM is vastly different from a backtracking algorithm, the proof needs to address the following verification challenges:
\begin{itemize}
\item \textbf{Different progress checks:} Instead of comparing inputs, PikeVM uses a Boolean to encode progress in star iterations, as suggested in~\cite{regelk_pldi}. One needs to verify that this Boolean correctly mirrors the outcomes of each progress check.
\item \textbf{Different exploration schemes:} PikeVM explores several possible paths in parallel, while the backtracking semantics only explores one at a time in a depth-first search. One needs to prove that these two exploration schemes return the same result.
\item \textbf{Skipping branches for linearity:} To ensure linearity, PikeVM skips entire branches when they correspond to previously visited states $(\pc,\bo)$ of the extended NFA. One needs to prove that skipping these paths does not change the result.
\item \textbf{Compilation correctness:} The PikeVM compiles the regex to a bytecode. This compilation needs to be formally verified.
\end{itemize}

We have designed a proof methodology in three steps, addressing most of these challenges separately.
In all steps, the backtracking tree semantics allows to write elegant invariants, and offers a convenient induction principle.
In a first step, in \autoref{subsec:boolean}, we handle the first verification challenge by presenting an alternative tree semantics, resembling the rules of \Autoref{fig:tree_semantics}, but encoding progress with a Boolean instead of comparing the current input with the input in an $\acheck{}$ action.
Then, to handle the two next challenges, we present an intermediate algorithm in \autoref{subsec:piketree}, which follows the path exploration order and the skipping of PikeVM, but without compiling to bytecode.
Finally, in \autoref{subsec:pikevm_piketree}, we tackle the final challenge, and show that the PikeVM algorithm working on the compiled bytecode returns the same result as the intermediate algorithm.

\subsection{Correctness of Encoding Progress}
\label{subsec:boolean}

First, we prove that encoding progress with a Boolean correctly allows PikeVM to predict the behavior of the progress checks.
This property can be expressed purely using trees, without even considering the PikeVM compilation process.

We present a new semantics, the Boolean tree semantics, of which a few selected rules are depicted on \Autoref{fig:bool_semantics}.
This semantics is identical to the previous tree semantics of \autoref{sec:semantics}, with two exceptions.
First, we add a Boolean parameter indicating whether a check would succeed.
Rules \ruleref{bool}{Check} and \ruleref{bool}{CheckFail} use this parameter to decide whether to fail or not, independently from the input inside the $\acheck{}$ action.
The Boolean is updated in the premises of the rules that read a character (\ruleref{bool}{Read}) or enter a non-forced iteration of a quantifier (\ruleref{bool}{Greedy}).
Second, since this semantics is used to verify the correctness of PikeVM, we make simplifications to reflect the restricted subset of supported regexes $\pikesub$.
We remove irrelevant parameters and rules, for instance the direction (no lookarounds) and the group map (no backreferences).

Linear algorithms rely on the uniform-futures property~\cite{regelk_pldi}, that states that while matching a regex, the future of a path is independent from the current values of each group. With the removal of the group map parameter, the Boolean tree semantics of \Autoref{fig:bool_semantics} has this property by construction.

\begin{figure}
  \mbox{\infer[\ruledef{bool}{Check}]{\booltree{\acheck{\inpcheck} :: \cont}{\inp}{\canexit}{\treeprogress{\treecont}}}
    {\booltree{\cont}{\inp}{\canexit}{\treecont}}}\hspace{.2cm}%
  \mbox{\infer[\ruledef{bool}{CheckFail}]{\booltree{\acheck{\inpcheck} :: \cont}{\inp}{\cannotexit}{\treemismatch}}
    {}}%

  \semspace
  \mbox{\infer[\ruledef{bool}{Read}]{\booltree{\areg{\re{\regchar{\cd}}} :: \cont}{\inp}{\bo}{\treeread{\rchar}{\treecont}}}
    {\inpadvance{\cd}{\inp}{\dir} = \Some{(\rchar, \inp')} \newpremise \booltree{\cont}{\inp'}{\canexit}{\treecont}}}%

    \semspace
  \mbox{\infer[\ruledef{bool}{Greedy}]{\booltree{\areg{\re{\quant{\subreg}{0}{\Delta+1}{\top}}} :: \cont}{\inp}{\bo}{\treechoice{(\treereset{\defgroups{\subreg}}{\treeiter})}{\treecont_{skip}}}}
    {\booltree{\cont}{\inp}{\bo}{\treecont_{skip}} \newpremise \booltree{\areg{\re{\subreg}} :: \acheck{\inp} :: \areg{\re{\quant{\subreg}{0}{\Delta}{\top}}} :: \cont}{\inp}{\cannotexit}{\treeiter}}}%

  \caption{Boolean tree semantics --- selected rules}%
\Description{} % Text-only figure
\label{fig:bool_semantics}
\end{figure}

\newcommand{\encodespace}{\hspace{0.2cm}}
\begin{figure}
  \mbox{\infer[]
    {\encodes{[]}{\inp}{\bo}}
    {}}\encodespace%
  \mbox{\infer[]
    {\encodes{\areg{\re{\subreg}}::\actions}{\inp}{\bo}}
    {\encodes{\actions}{\inp}{\bo}}}\encodespace%
  \mbox{\infer[]
    {\encodes{\aclose{\gid}::\actions}{\inp}{\bo}}
    {\encodes{\actions}{\inp}{\bo}}}\encodespace%
  \mbox{\infer[]
    {\encodes{\acheck{\inpcheck}::\actions}{\inp}{\canexit}}
    {\encodes{\actions}{\inp}{\canexit}\newpremise\inpgt{\inp}{\inpcheck}{\forward}}}\encodespace%
  \mbox{\infer[]
    {\encodes{\acheck{\inp}::\actions}{\inp}{\cannotexit}}
    {\encodes{\actions}{\inp}{\bo}}}

  \caption{Encoding actions with a Boolean}%
  \Description{} % Text-only figure
  \label{fig:encodes}
\end{figure}

To relate the two semantics, we define an invariant, $\encodes{\actions}{\inp}{\bo}$ on \Autoref{fig:encodes}.
To understand how it can be used to prove the correctness of the Boolean encoding, consider a backward proof of $\encodes{\actions}{\inp}{b}$.
After using the fourth rule of \Autoref{fig:encodes}, the hypothesis requires the Boolean to be $\canexit$.
Consequently, the rest of the proof cannot use the fifth rule anymore.
In other words, $\encodes{\actions}{\inp}{b}$ implies that $\actions$ can be split into two lists:
first, a list where each $\acheck{\inpcheck}$ action has its input $\inpcheck$ equal to the current input $\inp$ (corresponding to stars that have just been entered),
followed by a list where each $\acheck{\inpcheck}$ action has its input $\inpcheck$ strictly smaller than $\inp$ (for stars in which we have made progress already).
When $b=\canexit$, it means that the first list contains no $\acheck{}$ action: all stars have progressed, and all progress checks will succeed.
When $b=\cannotexit$, there is at least one check for which progress has not been made, and the next check will fail unless we advance in the input.

Finally, we can prove that for the same regex and the same input, the two semantics build the same tree.
With this proof, we have showed that the progress strategy used by PikeVM (encoding with a Boolean the validity of future checks), matches the specification.

\begin{theorem}[Correctness of the Boolean semantics]
  \label{thm:boolean}
  $\forall\: \insub{\actions},\: \inp,\: \gm,\: \treecont,\: \bo.\;$\\
  $\encodes{\actions}{\inp}{\bo} \wedge \istree{\actions}{\inp}{\gm}{\forward}{\treecont} \implies$ 
  $\booltree{\actions}{\inp}{\bo}{\treecont}$.
\end{theorem}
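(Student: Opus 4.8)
The plan is to proceed by induction on the derivation of $\istree{\actions}{\inp}{\gm}{\forward}{\treecont}$, keeping the boolean $\bo$ universally quantified so that the induction hypothesis can be instantiated at whatever boolean each premise demands. The direction is fixed to $\forward$ throughout, which is sound because $\pikesub$ contains no lookarounds and hence never triggers backward reading, so \ruleref{tree}{SeqBackward} and every backward-only side condition are vacuous. Since $\actions \in \pikesub$, only the rules applicable to the restricted subset can fire, so it suffices to treat \ruleref{tree}{Match}, \ruleref{tree}{Read}, \ruleref{tree}{ReadFail}, \ruleref{tree}{Disj}, \ruleref{tree}{SeqForward}, \ruleref{tree}{Epsilon}, \ruleref{tree}{Group}, \ruleref{tree}{Close}, \ruleref{tree}{Check}, \ruleref{tree}{CheckFail}, \ruleref{tree}{Greedy}, and \ruleref{tree}{Lazy}. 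In each case the goal is to fire the mirror rule of the boolean semantics on the same tree $\treecont$; group-map manipulations (open, close, reset) carry through verbatim and play no role, since neither $\encodes{\cdot}{\cdot}{\cdot}$ nor the boolean tree mentions $\gm$. All the real work is in re-establishing $\encodes{\cdot}{\cdot}{\cdot}$ for the premises.

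The crucial ingredient is a lemma stating that reading a character restores the invariant with a $\canexit$ boolean: if $\encodes{\actions}{\inp}{\bo}$ and $\inpgt{\inp'}{\inp}{\forward}$ (the input has advanced forward by at least one character), then $\encodes{\actions}{\inp'}{\canexit}$. I would prove this by induction on the derivation of $\encodes{\actions}{\inp}{\bo}$ from \Autoref{fig:encodes}: the empty, regex-action, and close-action cases follow immediately from the induction hypothesis, since pushing such an action preserves the boolean; for a check action $\acheck{\inpcheck}$, whether the original boolean was $\canexit$ (so $\inpgt{\inp}{\inpcheck}{\forward}$) or $\cannotexit$ (so $\inpcheck = \inp$), transitivity of $<_{\forward}$ together with $\inpgt{\inp'}{\inp}{\forward}$ yields $\inpgt{\inp'}{\inpcheck}{\forward}$, which is exactly the premise the $\canexit$ check rule requires. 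This lemma discharges \ruleref{tree}{Read}: the standard rule advances with $\inpadvance{\cd}{\inp}{\forward} = \Some{(\rchar, \inp')}$, so $\inpgt{\inp'}{\inp}{\forward}$ holds, the lemma gives $\encodes{\cont}{\inp'}{\canexit}$, and the induction hypothesis then supplies the premise $\booltree{\cont}{\inp'}{\canexit}{\treecont}$ demanded by \ruleref{bool}{Read}.

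For the purely structural cases (\ruleref{tree}{Disj}, \ruleref{tree}{SeqForward}, \ruleref{tree}{Epsilon}, \ruleref{tree}{Group}, \ruleref{tree}{Close}), the invariant is simply propagated: pushing or popping a regex or close action never changes the boolean, so the induction hypothesis applies at the same $\bo$. The two check cases are where the boolean and the input comparison must be reconciled. In \ruleref{tree}{Check} the side condition is $\inpgt{\inp}{\inpcheck}{\forward}$; inverting $\encodes{\acheck{\inpcheck} :: \cont}{\inp}{\bo}$ shows $\bo$ cannot be $\cannotexit$ (that case forces $\inpcheck = \inp$, contradicting progress), so $\bo = \canexit$, and \ruleref{bool}{Check} applies after using the induction hypothesis on $\encodes{\cont}{\inp}{\canexit}$. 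Symmetrically, in \ruleref{tree}{CheckFail} the condition $\neg(\inpgt{\inp}{\inpcheck}{\forward})$ rules out $\canexit$, forcing $\bo = \cannotexit$ and enabling \ruleref{bool}{CheckFail}. Finally, for \ruleref{tree}{Greedy} and \ruleref{tree}{Lazy} the skip premise reuses $\bo$ unchanged, while the iterate premise has action list $\areg{\re{\subreg}} :: \acheck{\inp} :: \areg{\re{\quant{\subreg}{0}{\Delta}{\top}}} :: \cont$; here the freshly inserted check equals the current input $\inp$ exactly, which is precisely the shape consumed by the $\cannotexit$ check rule, so $\encodes{\cdot}{\inp}{\cannotexit}$ can be assembled and the induction hypothesis delivers the $\cannotexit$ iterate premise required by \ruleref{bool}{Greedy} and its lazy analogue.

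I expect the main obstacle to be the read lemma and its interplay with the two check cases: getting the transitivity argument right and verifying that the invariant of \Autoref{fig:encodes} genuinely pins down $\bo$ from the input comparison in both directions (progress forces $\canexit$, lack of progress forces $\cannotexit$). Once that correspondence between the stack of pending checks and the boolean is established, every remaining case is a routine propagation of the invariant through the induction hypothesis.
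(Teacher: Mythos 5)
Your proposal is correct and follows essentially the same route as the paper, which proves \Autoref{thm:boolean} by induction on the derivation of $\istree{\actions}{\inp}{\gm}{\forward}{\treecont}$. The auxiliary lemma you identify (advancing the input restores $\encodes{\cdot}{\cdot}{\canexit}$) and the case analysis reconciling the check rules with the boolean are exactly the supporting details such an induction requires, and your treatment of each is sound.
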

\begin{theorem}
  \label{thm:boolean_correct}
  $\forall\: \insub{\subreg},\: \inp,\: \treecont.\;$ 
  $\istree{[\subreg]}{\inp}{\gmempty}{\forward}{\treecont} \iff$ 
  $\booltree{[\subreg]}{\inp}{\canexit}{\treecont}$.
\end{theorem}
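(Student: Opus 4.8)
The plan is to read this statement as a specialization of \Cref{thm:boolean} to the top-level matching of a single regex, upgraded from an implication to a two-way equivalence. The $\Rightarrow$ direction will follow almost immediately from \Cref{thm:boolean}, whereas the $\Leftarrow$ direction needs a short extra argument, for which I would route through determinism and productivity rather than proving a direct converse of \Cref{thm:boolean}.

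For the forward direction, suppose $\istree{[\subreg]}{\inp}{\gmempty}{\forward}{\treecont}$. To invoke \Cref{thm:boolean} I must first discharge the invariant $\encodes{[\subreg]}{\inp}{\canexit}$. This holds trivially: a single regex action carries the invariant down to the empty list by the regex-action rule of \Cref{fig:encodes}, and the empty list satisfies $\encodes{[]}{\inp}{\canexit}$ by the base rule. Intuitively, a freshly started regex has no pending $\acheck{}$ actions, so every future check is vacuously encoded by $\canexit$. Applying \Cref{thm:boolean} with $\gm = \gmempty$ and $\bo = \canexit$ then yields $\booltree{[\subreg]}{\inp}{\canexit}{\treecont}$.

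For the backward direction I would first establish, as an auxiliary lemma, that the boolean tree semantics is deterministic, by the same derivation induction used for \Cref{thm:tree_det}: each rule of \Cref{fig:bool_semantics} is syntax-directed on the head action and on the boolean (the two \textsc{Check} rules split on $\canexit$ versus $\cannotexit$), so a given $(\actions, \inp, \bo)$ admits at most one tree. Now suppose $\booltree{[\subreg]}{\inp}{\canexit}{\treecont}$. By \Cref{thm:functional_correctness} the full semantics produces some tree $\treecont_0$ with $\istree{[\subreg]}{\inp}{\gmempty}{\forward}{\treecont_0}$; applying the forward direction just proved gives $\booltree{[\subreg]}{\inp}{\canexit}{\treecont_0}$. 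Boolean determinism then forces $\treecont = \treecont_0$, and hence $\istree{[\subreg]}{\inp}{\gmempty}{\forward}{\treecont}$, as required. It is essential here that $\insub{\subreg}$: with no backreferences or lookarounds the tree is independent of the group-map values, so dropping the group map in the boolean semantics is sound and the two trees genuinely coincide structurally, including their $\treeopen{}{}$, $\treeclose{}{}$, and $\treereset{}{}$ nodes.

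The main obstacle is the $\Leftarrow$ direction, since \Cref{thm:boolean} provides only one implication and the boolean semantics discards exactly the information (the compared input in each $\acheck{}$ action) that the full semantics uses to adjudicate checks; a naive converse would have to reconstruct that input. Threading the argument through productivity (\Cref{thm:functional_correctness}) and the new boolean-determinism lemma avoids this reconstruction entirely: existence of a full-semantics tree, agreement of the two semantics on that witness via the forward direction, and uniqueness in the boolean semantics together pin down $\treecont$. The only genuinely new proof content beyond the results already available is therefore the boolean-determinism lemma, which is routine.
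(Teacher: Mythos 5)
Your proof is correct and follows essentially the same route as the paper: the forward direction instantiates \Cref{thm:boolean} after discharging the trivial invariant $\encodes{[\subreg]}{\inp}{\canexit}$, and the backward direction combines productivity (\Cref{thm:functional_correctness}), determinism of the boolean tree semantics (proved analogously to \Cref{thm:tree_det}), and \Cref{thm:boolean} to pin down the tree. This is exactly the decomposition the paper uses.
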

\begin{proof}
  \Autoref{thm:boolean} is proved by induction on a derivation of $\istree{\actions}{\inp}{\gm}{\forward}{\treecont}$.
  The left to right direction of \Autoref{thm:boolean_correct} is a direct consequence of \Autoref{thm:boolean}.
  The right to left direction is a consequence of the productivity of the tree semantics (\Autoref{thm:functional_correctness}), the determinism of the Boolean tree semantics (similar to \Autoref{thm:tree_det}) and \Autoref{thm:boolean}.
\end{proof}

\subsection{Correctness of the PikeVM Exploration Scheme}
\label{subsec:piketree}

\savere{example}{{\re{\regexample}}}

\begin{figure}
\begin{minipage}{.55\textwidth}

\begin{subfigure}{\linewidth}
\centering\begin{tabular}{r | l}
\multicolumn{2}{c}{$\compilation{\re{\regexample}}$}\\
\hline
0 & $\setregopen{1}$\\
1 & $\fork{2}{8}$\\
2 & $\fork{3}{7}$\\
3 & $\beginloop$\\
4 & $\resetregs{[]}$\\
5 & $\consume{\re{a}}$\\
6 & $\iendloop{2}$\\
7 & $\jmp{9}$\\
8 & $\consume{\re{a}}$\\
9 & $\setregclose{1}$\\
10 & $\consume{\re{b}}$\\
11 & $\accept$\\
\end{tabular}
\caption{Bytecode NFA of \usere{example}}
\Description{} % Text-only figure
\label{subfig:bytecode}
\end{subfigure}
\vspace{.3cm}

\begin{subfigure}{\linewidth}
  \begin{tikzpicture}[%
          every node/.style={rectangle,minimum size=6pt,minimum height=4pt, inner sep=2pt, align=center},
          node distance=.2cm, >=latex
    ]
    \begin{pgfonlayer}{foreground}
    % Tree nodes
    \node [] (t0) [] {\tnode{0}$\treeopen{1}{}$};
    \node [] (t1) [below=of t0] {\tnode{1}$\treechoice{}{}$};
    \node [] (t2) [below left=of t1] {\tnode{2}$\treechoice{}{}$};
    \node [] (t3) [below left=of t2, xshift=.75cm] {\tnode{3}$\treereset{[]}{}$};
    \node [] (t4) [below=of t3] {\tnode{4}$\treeread{a}{}$};
    \node [] (t5) [below=of t4] {\tnode{5}$\treeprogress{}$};
    \node [] (t6) [below=of t5] {\tnode{6}$\treechoice{}{}$};
    \node [] (t7) [below left= of t6, xshift=0.75cm] {\tnode{7}$\treereset{[]}{}$};
    \node [] (t8) [below=of t7] {\tnode{8}$\treemismatch$};
    \node [] (t9) [below right=of t6, xshift=-0.75cm] {\tnode{9}$\treeclose{1}{}$};
    \node [] (t10) [below=of t9] {\tnode{10}$\treeread{b}{}$};
    \node [] (t11) [below=of t10] {\tnode{11}$\treematch$};
    \node [] (t12) [below right=of t2, xshift=-0.75cm] {\tnode{12}$\treeclose{1}{}$};
    \node [] (t13) [below=of t12] {\tnode{13}$\treemismatch$};
    \node [] (t14) [below right=of t1] {\tnode{14}$\treeread{a}{}$};
    \node [] (t15) [below=of t14] {\tnode{15}$\treeclose{1}{}$};
    \node [] (t16) [below=of t15] {\tnode{16}$\treeread{b}{}$};
    \node [] (t17) [below=of t16] {\tnode{17}$\treematch$};
    % Tree edges
    \path [draw] (t0) edge[->] node {} (t1);
    \path [draw] (t1) edge[->] node {} (t2);
    \path [draw] (t1) edge[->] node {} (t14);
    \path [draw] (t2) edge[->] node {} (t3);
    \path [draw] (t2) edge[->] node {} (t12);
    \path [draw] (t3) edge[->] node {} (t4);
    \path [draw] (t4) edge[->] node {} (t5);
    \path [draw] (t5) edge[->] node {} (t6);
    \path [draw] (t6) edge[->] node {} (t7);
    \path [draw] (t6) edge[->] node {} (t9);
    \path [draw] (t7) edge[->] node {} (t8);
    \path [draw] (t9) edge[->] node {} (t10);
    \path [draw] (t10) edge[->] node {} (t11);
    \path [draw] (t12) edge[->] node {} (t13);
    \path [draw] (t14) edge[->] node {} (t15);
    \path [draw] (t15) edge[->] node {} (t16);
    \path [draw] (t16) edge[->] node {} (t17);
    % Rectangles around equal subtrees
    \draw [draw=black] (t9.north west) rectangle (t11.south east);
    \draw [draw=black] (t15.north west) rectangle (t17.south east);
    % link between equal subtrees
    \path [draw] (t9.east) edge[to-to, bend right,below right,dashed] node {{\tiny Equal subtrees}} ([yshift=-.05cm]t17.south);
    \end{pgfonlayer}
    \end{tikzpicture}
    \caption{Backtracking tree of \usere{example} and \str{ab}}
    \Description{The tree is rooted at `Open 1' followed by `Choice'. To the right is `Read a', `Close 1', `Read b', `Match'; to the left is a `Choice'.  That `Choice' has two subtrees: one with `Close 1', `Mismatch' to the right, and one with  `Reset []'  `Read a', `Progress', `Choice' to the left.  Under that final choice is `Close 1', `Read b', `Match' to the right, and `Reset []', `Mismatch' to the left.  An arrow points out the repeated `Close 1', `Read b', `Match' sequence.}
    \label{subfig:tree}
\end{subfigure}

\end{minipage}%
\hfill
\begin{minipage}{.42\textwidth}

\begin{subfigure}{\linewidth}
\bgroup
\def\arraystretch{1.1}
\begin{tabular}{l l | l l}
\multicolumn{2}{c}{\textbf{PikeTree}} & \multicolumn{2}{c}{\textbf{PikeVM}}\\
$\pactive$ & $\blocked$ & $\pactive$ & $\blocked$\\
\hline
\pike{\ttree{0}}{}{0}{}
\pike{\ttree{1}}{}{1}{}
\pike{\ttree{2};\ttree{14}}{}{2;8}{}
\pike{\ttree{3};\ttree{12};\ttree{14}}{}{3;7;8}{}
& & $[4;7;8]$ & []\\
\pike{\ttree{4};\ttree{12};\ttree{14}}{}{5;7;8}{}
\pike{\ttree{12};\ttree{14}}{\ttree{5}}{7;8}{6}
& & $[9;8]$ & $[6]$\\
\pike{\ttree{13};\ttree{14}}{\ttree{5}}{10;8}{6}
\pike{\ttree{14}}{\ttree{5}}{8}{6}
\pike{}{\ttree{5};\ttree{15}}{}{6;9}
\hline
\multicolumn{4}{c}{Advance after reading \str{a}}\\
\hline
\pike{\ttree{5};\ttree{15}}{}{6;9}{}
\pike{\ttree{6};\ttree{15}}{}{2;9}{}
\pike{\ttree{7};\ttree{9};\ttree{15}}{}{3;7;9}{}
& & $[4;7;9]$ & []\\
\pike{\ttree{8};\ttree{9};\ttree{15}}{}{5;7;9}{}
\pike{\ttree{9};\ttree{15}}{}{7;9}{}
& & $[9;9]$ & []\\
\pike{\ttree{10};\ttree{15}}{}{10;9}{}
\pike{\ttree{15}}{\ttree{11}}{9}{11}
\multicolumn{2}{c|}{\textbf{Skip} \ttree{15}$=$\ttree{9}} & \multicolumn{2}{c}{\textbf{Skip} 9}\\
\pike{}{\ttree{11}}{}{11}
\hline
\multicolumn{4}{c}{Advance after reading \str{b}}\\
\hline
\pike{\ttree{11}}{}{11}{}
\multicolumn{2}{c|}{\textbf{Found result}} & \multicolumn{2}{c}{\textbf{Found result}}\\
\end{tabular}
\egroup
\caption{Execution of PikeTree and PikeVM}
\Description{} % Text-only figure
\label{subfig:exec}
\end{subfigure}
\end{minipage}

\caption{Example execution of the PikeTree and PikeVM algorithms for regex \usere{example} and input \str{ab}}
\Description{} % In subfigures
\label{fig:pike_ex}
\end{figure}

% intro, intermediate algo motivation
As the second step of our proof, we present an intermediate algorithm, resembling PikeVM but without bytecode compilation.
This algorithm, which we name PikeTree, is given a backtracking tree, and finds its top priority accepting branch.
Instead of exploring the tree in depth-first order like a backtracking algorithm, the algorithm emulates the
mixed breadth-first and depth-first
exploration order of PikeVM: it explores several paths in parallel, all synchronized on the same input position.
Just like PikeVM avoids exploring the same NFA state twice, this algorithm also skips some subtrees that have already been explored.
Our strategy consists in first showing that this algorithm always computes the first accepting branch of a given tree, then showing that a PikeVM execution of a regex and a string corresponds to a PikeTree execution of the corresponding tree (\autoref{subsec:pikevm_piketree}).
The key advantage of this strategy is that we can prove two crucial properties without even considering compilation to a bytecode: the correctness of the exploration order, and the possibility of skipping entire branches.
Here we also make use of a key property of our backtracking trees: since the PikeVM algorithm explores paths that are not explored by a backtracking algorithm that stops at the first priority match, it is crucial for our semantics to materialize these extra paths.

% example
An example execution of the PikeTree algorithm is shown on \Autoref{fig:pike_ex}.
\Autoref{subfig:tree} shows the backtracking tree of \re{\regexample} on input \str{ab}, which is used to initialize the algorithm.
Instead of maintaining lists of threads like PikeVM, PikeTree maintains lists of trees to explore.
It can also skip trees already explored for the current input position.
For instance, the subtree \ttree{9} is equal to the subtree \ttree{15} (the two rectangles in \Autoref{subfig:tree}), which allows the PikeTree algorithm to skip \ttree{15} in its execution on \Autoref{subfig:exec}.
On this diagram, we can see how the PikeTree algorithm is designed to match the behavior of PikeVM: each active tree corresponds to an active thread (but note that PikeTree sometimes takes fewer steps than PikeVM, see \Autoref{subsec:pikevm_piketree}).

% semantics
Selected rules of the PikeTree small-step semantics are presented in \Autoref{fig:piketree_smallstep}.
The full semantics is available in \appendixref{app:smallstep_piketree}.
States of the PikeTree algorithm are either of the form $\ptsfinal{\best}$ when the algorithm terminates, or $\ptstate{\inp}{\best}{\pactive}{\blocked}{\seen}$ otherwise.
Given a tree $\treecont$ and an input $\inp$, the initial state of the PikeTree algorithm is defined to be
$\ptsinit{\treecont}{\inp} \defequal \ptstate{\inp}{\None}{[(\treecont,\gmempty)]}{[]}{\emptyset}$.
These states closely resemble those of the PikeVM semantics of \autoref{fig:pikevm_smallstep}, with the following differences.
First, $\pactive$ and $\blocked$ are now lists of pairs $(\treecont,\gm)$ of a subtree to explore and a group map.
Second, the $\seen$ set now contains subtrees that have already been explored.

% different skipping and nondeterminism
More importantly, the way in which the PikeTree algorithm skips branches is different from PikeVM.
In PikeVM, different threads, at different program counters, can correspond to the execution of the same subtree.
For instance, when compiling the regex $ab~|~ab$, the PikeVM algorithm will compile $ab$ twice in different locations.
Two threads of PikeVM executing these two pieces of bytecode will always correspond to the same tree, but since they are located at different places in the bytecode, PikeVM will not skip either of them.
As our goal is to relate a PikeVM execution to a PikeTree execution, we make the PikeTree algorithm non-deterministic: when it sees a tree that is already in the $\seen$ set, it can either skip it (\ruleref{piketree}{Skip}), or not skip it and process it anyways (for instance \ruleref{piketree}{Blocked} does not require that $\treeread{\rchar}{\treecont} \notin \seen$).
While the PikeTree algorithm is non-deterministic, we prove that each possible execution produces the correct result, and then show that the PikeVM execution corresponds to one specific PikeTree execution.

\begin{figure}
  \mbox{\infer[\ruledef{piketree}{Final}]{\piketreestep{\ptstate{\inp}{\best}{[]}{[]}{\seen}}{\ptsfinal{\best}}}{}}\hfill%
  \mbox{\infer[\ruledef{piketree}{Skip}]{\piketreestep
      {\ptstate{\inp}{\best}{(\treecont,\gm) :: \pactive}{\blocked}{\seen}}
      {\ptstate{\inp}{\best}{\pactive}{\blocked}{\seen}}}
    {\intseen{\treecont}{\seen}}}

  \semspace
  \mbox{\infer[\ruledef{piketree}{Blocked}]{\piketreestep
      {\ptstate{\inp}{\best}{(\treeread{\rchar}{\treecont},\gm)::\pactive}{\blocked}{\seen}}
      {\ptstate{\inp}{\best}{\pactive}{\blocked\app[(\treecont,\gm)]}{\seen'}}}
    {\seen' = \addtseen{\seen}{\treeread{\rchar}{\treecont}}}}
  
\caption{PikeTree small-step semantics --- selected rules (full version in \appendixref{app:smallstep_piketree})}%
\Description{} % Text-only figure
\label{fig:piketree_smallstep}
\end{figure}

% proof plan
To prove the correctness of our new algorithm we define a relation, $\piketreeinv{\pts}{\result}$, between a state of the PikeTree semantics ($\pts$) and an optional match result ($\result$), then show that this relation is an invariant of the execution.
This invariant is a combination of several other relations.

% nondeterministic results
First, we define a non-deterministic relation $\ptres{(\treecont,\gm)}{\result}{\seen}{\inp}$, relating a tree $\treecont$ to an optional result $\result$.
Informally, $\ptres{(\treecont,\gm)}{\result}{\seen}{\inp}$ holds when $\result$ is a leftmost accepting leaf of $\treecont$
after removing any number of subtrees found in $\seen$,
for input $\inp$ and group map $\gm$ .
This means that executing the PikeTree algorithm on tree $\treecont$ with the set $\seen$ can produce result $\result$ (recall that PikeTree  may, but does not \textit{have to}, skip trees in $\seen$).
For instance, if $\result_1$ is the first accepting branch of $\treecont_1$ and $\result_2$ of $\treecont_2$, then both
$\ptres{\treechoice{\treecont_1}{\treecont_2}}{\result_1}{\{\treecont_1\}}{\inp}$ and
$\ptres{\treechoice{\treecont_1}{\treecont_2}}{\result_2}{\{\treecont_1\}}{\inp}$ hold.

We extend this definition to ordered lists of trees: $\ptres{\pactive}{\result}{\seen}{\inp}$ means that $\result$ is one possible first result of the list $\pactive$ after removing any number of subtrees in $\seen$.
We can further extend this definition to PikeTree semantics states (where $\seqop{r_1}{r_2}$ is equal to $r_1$ when it is different from $\None$, and $r_2$ otherwise):
\[\ptres{\ptstate{\inp}{\best}{\pactive}{\blocked}{\seen}}{\result}{}{} \defequal
\exists \resulta,\: \resultb.\;
\ptres{\blocked}{\resultb}{\emptyset}{\nextinp{\inp}} \wedge
\ptres{\pactive}{\resulta}{\seen}{\inp} \wedge
\result = \seqop{\resultb}{\seqop{\resulta}{\best}}\]

% deterministic invariant
While this relation is non-deterministic (because we can choose whether to skip subtrees in $\seen$), it turns out that the PikeTree algorithm can only evaluate to a single result.
Our final invariant,  $\piketreeinv{\pts}{\result}$, captures that fact.
We define the invariant below, then prove that the invariant is correctly initialized (\Autoref{thm:piketree_init}) and preserved (\Autoref{thm:piketree}).
\[
\infer[]{\piketreeinv{\ptsfinal{\best}}{\best}}{}\hspace{1cm}
\infer[]
{\piketreeinv{\ptstate{\inp}{\best}{\pactive}{\blocked}{\seen}}{\result}}
{\forall r.\; \ptres{\ptstate{\inp}{\best}{\pactive}{\blocked}{\seen}}{r}{}{} \implies r=\result}
\]

\begin{theorem}[Invariant initialization]
  \label{thm:piketree_init}
  $\forall\: \inp,\: \insub{\treecont}.\;$
  $\piketreeinv{\ptsinit{\treecont}{\inp}}{\firstbranch{\treecont}{\inp}}$.
\end{theorem}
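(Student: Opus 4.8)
The plan is to unfold the definition of the invariant $\piketreeinv{\pts}{\result}$ at the initial state $\ptsinit{\treecont}{\inp} = \ptstate{\inp}{\None}{[(\treecont,\gmempty)]}{[]}{\emptyset}$. Since this is a non-final state, the invariant's premise requires showing that every $r$ with $\ptres{\ptsinit{\treecont}{\inp}}{r}{}{}$ satisfies $r = \firstbranch{\treecont}{\inp}$. So the entire goal reduces to characterizing which results the non-deterministic relation $\ptres{\cdot}{\cdot}{}{}$ can produce out of the initial configuration.

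First I would simplify the state-level relation using its definition specialized to $\blocked = []$ and $\best = \None$. The blocked list is empty, so the only witness for $\ptres{[]}{\resultb}{\emptyset}{\nextinp{\inp}}$ is $\resultb = \None$ (the base case of the list extension). Since $\best = \None$ as well, the composition $\result = \seqop{\resultb}{\seqop{\resulta}{\best}}$ collapses: $\seqop{\None}{\seqop{\resulta}{\None}} = \resulta$. Hence the goal reduces to showing that the active-list relation $\ptres{[(\treecont,\gmempty)]}{\resulta}{\emptyset}{\inp}$ forces $\resulta = \firstbranch{\treecont}{\inp}$, and since the active list is a singleton, this in turn reduces to the single-tree relation $\ptres{(\treecont,\gmempty)}{\resulta}{\emptyset}{\inp}$.

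The heart of the proof is a lemma stating that, when the seen set is empty, the non-deterministic relation $\ptres{(\treecont,\gm)}{\result}{\emptyset}{\inp}$ is in fact functional, and its unique solution is $\firstbranch{\treecont}{\inp}$. I would prove this by structural induction on $\treecont$. The non-determinism of $\ptres{\cdot}{\cdot}{\seen}{\cdot}$ arises solely from the freedom to delete subtrees already appearing in $\seen$; with $\seen = \emptyset$ there is nothing to delete, so at each node the relation is forced to follow the leftmost accepting branch, matching the defining recursion of $\firstbranch{\cdot}{\cdot}$ exactly. The interesting case is a choice node $\treechoice{\treecont_1}{\treecont_2}$: the empty seen set forbids skipping the left subtree, so the result must be the first leaf of $\treecont_1$ when one exists, and otherwise the first leaf of $\treecont_2$ — precisely how $\firstbranch{\cdot}{\cdot}$ treats a choice. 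Chaining this lemma with the simplifications above closes the goal directly.

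I expect the main obstacle to be the careful handling of this determinism-at-empty-seen lemma. One must verify that none of the skipping rules of the PikeTree relation can fire when $\seen = \emptyset$, and that the remaining non-skipping cases reconstruct $\firstbranch{\cdot}{\cdot}$ node by node, including the threading of the group map through the group-opening, group-closing, and group-reset nodes so that the leaf returned carries the correct final group map. Once this lemma is established, the initialization theorem follows with little further work, since the initial state exercises only the simplest instance (empty seen, empty blocked, no prior best) of the general invariant machinery.
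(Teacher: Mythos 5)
Your proposal is correct and follows essentially the same route as the paper, whose proof is exactly the observation that the initial state has an empty $\seen$ set, so the non-deterministic relation can only produce the first accepting branch; you simply spell out the supporting details (collapsing the empty blocked list and $\None$ best, and the induction establishing determinism of $\ptres{(\treecont,\gm)}{\result}{\emptyset}{\inp}$) that the paper leaves implicit.
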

\begin{theorem}[Preservation]
  \label{thm:piketree}
  $\forall\: \pts_1,\: \pts_2,\: \result.\;$
  $\piketreeinv{\pts_1}{\result} \wedge \piketreestep{\pts_1}{\pts_2} \implies \piketreeinv{\pts_2}{\result}$.
\end{theorem}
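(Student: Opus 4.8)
The plan is to prove \Autoref{thm:piketree} by a single case analysis on the rule used to derive $\piketreestep{\pts_1}{\pts_2}$, reducing everything to one structural lemma: \emph{a PikeTree step never enlarges the set of achievable non-deterministic results}, i.e. $\ptres{\pts_2}{r}{}{} \implies \ptres{\pts_1}{r}{}{}$ for every $r$. Granting this lemma, the non-final case is immediate. Note first that $\pts_1$ cannot itself be final (final states do not step), so $\piketreeinv{\pts_1}{\result}$ is the universal form, stating that every $r$ with $\ptres{\pts_1}{r}{}{}$ equals $\result$. Then for any $r$ with $\ptres{\pts_2}{r}{}{}$, the lemma gives $\ptres{\pts_1}{r}{}{}$, hence $r = \result$, which is exactly $\piketreeinv{\pts_2}{\result}$. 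The only rule producing a final state is \ruleref{piketree}{Final}, where $\pts_1 = \ptstate{\inp}{\best}{[]}{[]}{\seen}$; since $\ptres{[]}{\cdot}{\cdot}{\cdot}$ forces the $\None$ result for both the active and blocked components, unfolding the state-level definition of $\ptres$ shows that $\best$ is an achievable result of $\pts_1$ (take $\resulta=\resultb=\None$, giving $\seqop{\None}{\seqop{\None}{\best}} = \best$) and that it is the \emph{only} one. The invariant at $\pts_1$ therefore forces $\result = \best$, so $\piketreeinv{\ptsfinal{\best}}{\best}$ is precisely $\piketreeinv{\pts_2}{\result}$.

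The containment lemma itself rests on a handful of decomposition lemmas for $\ptres$, each proved by induction (on the tree, for node-level facts, and on the list, for list-level facts). At the node level I would characterize how each head tree contributes its leftmost leaf: a $\treechoice{\treecont_1}{\treecont_2}$ head behaves exactly like prepending $\treecont_1$ then $\treecont_2$; a $\treematch$ head short-circuits, making the tail irrelevant; a $\treemismatch$ head contributes nothing and falls through; and the group-opening, -closing and -resetting nodes ($\treeopen{}{}$, $\treeclose{}{}$, $\treereset{}{}$) together with anchor and progress nodes simply pass through to their unique child, updating $\gm$ where relevant. At the list level I would show how prepending a tree to $\pactive$ or appending one to $\blocked$ recombines through the left-biased sequencing $\seqop{\cdot}{\cdot}$, so that the state-level result $\seqop{\resultb}{\seqop{\resulta}{\best}}$ can be recomputed after the step. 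With these, the clean cases follow directly: the character-advancing rule preserves results exactly, since both $\pts_1$ and $\pts_2$ evaluate the same $\blocked$ list at $\nextinp{\inp}$ with empty $\seen$; the match rule preserves results exactly via the $\treematch$ short-circuit; and \ruleref{piketree}{Skip} gives containment because, from $\intseen{\treecont}{\seen}$, the relation on $(\treecont,\gm)::\pactive$ at $\pts_1$ may legally drop $\treecont$ and reproduce exactly the tail results used at $\pts_2$.

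The main obstacle is \ruleref{piketree}{Blocked}, which moves computation from the current input to the next, combined with the pervasive growth of $\seen$. For \ruleref{piketree}{Blocked} I must show that deferring a $\treeread{\rchar}{\treecont}$ head from the front of $\pactive$ (evaluated at $\inp$) to the back of $\blocked$ (evaluated at $\nextinp{\inp}$) preserves the global priority order: at $\pts_1$ the read head sits just below the old blocked list and just above the rest of $\pactive$, and at $\pts_2$ the appended $(\treecont,\gm)$ lands at the bottom of the blocked list, i.e. in exactly the same slot, because the leftmost leaf of $\treeread{\rchar}{\treecont}$ at $\inp$ coincides with the leftmost leaf of $\treecont$ at $\nextinp{\inp}$. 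Establishing that these two quantities agree and that $\seqop{\cdot}{\cdot}$ threads them into the same combined result is the delicate step. The remaining wrinkle, shared by every rule that inserts the processed head into $\seen$, is that a later \emph{duplicate} occurrence of that head in $\pactive$ becomes skippable at $\pts_2$ but not obviously at $\pts_1$. I would close this gap using strictness of the subtree relation (a node never equals a proper subtree of itself, so the children can never re-trigger the new skip) and the fact that the leftmost leaf of a list is fixed by the highest-priority occurrence of each repeated subtree: skipping the later duplicate at $\pts_2$ yields precisely the result obtained by processing the original head at $\pts_1$, matching the lower-priority-duplicate removal already built into the leaf-equivalence machinery.
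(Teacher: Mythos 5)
Your proposal is correct and follows essentially the same route as the paper's proof: a case analysis on the step rule, reduced to showing that every non-deterministically achievable result of the successor state is already an achievable result of the predecessor, with the two delicate points being exactly the ones the paper singles out — the growth of the $\seen$ set (handled, as you do, by observing that newly skippable duplicates can only contribute results already available from the freshly enqueued children) and the re-slotting of a read head from the active to the blocked list. Your additional details (the subtree-strictness observation, the associativity of the left-biased sequencing, and the explicit treatment of the final-state case) are consistent elaborations of the paper's terser sketch rather than a different argument.
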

\begin{proof}
  For \Autoref{thm:piketree_init}, the initial state $\ptsinit{\treecont}{\inp}$, has an empty $\seen$ set.
  As a result, the only possible result is the first accepting branch of $t$.
  The proof for \Autoref{thm:piketree} proceeds by case analysis over $\piketreestep{\pts_1}{\pts_2}$.
  Informally, when the algorithm skips a previously seen tree $\treecont$ (\ruleref{piketree}{Skip}), we know that
  all results of the new state are results of the previous state where $\treecont$ was skipped.
  In other cases, when visiting a new tree $\treecont$, PikeTree not only adds $\treecont$ to the $\seen$ set, but also adds the children of $\treecont$ either in $\pactive$ or in $\blocked$.
  Because of the addition to the  $\seen$ set, PikeTree might skip more branches (corresponding to $\treecont$) in the future.
  However, we prove that skipping these branches will not change the result of the algorithm, since any result in these branches is also a result of the children of $\treecont$.
\end{proof}

\subsection{Correctness of the PikeVM Compiler}
\label{subsec:pikevm_piketree}

% proof overview
We now prove that each execution of the PikeVM algorithm itself, as described on \Autoref{fig:pikevm_smallstep}, corresponds to an execution of the PikeTree algorithm.
Informally, at all times during the execution, each thread of the active or blocked list in the PikeVM state corresponds to a tree in the active or blocked list of the PikeTree algorithm (see \Autoref{subfig:exec}).
We then need to define an equivalence relation between threads and trees.
Whenever PikeVM skips a thread $\thread{\pc}{\gm}{\bo}$ because it has seen its program counter $\pc$ and Boolean $\bo$ already, it means that PikeTree has seen the corresponding tree already and thus is allowed to skip it as well.
Most steps of the PikeVM algorithm directly correspond to a similar step in the PikeTree algorithm, with a few exceptions explained below.

% explain tree thread
First, we formalize this notion of equivalence between threads and trees on \Autoref{fig:tree_thread}.
The first rule corresponds to the more common case.
A thread $\thread{\pc}{\gm}{\bo}$ is equivalent to a tree $\treecont$ when there exists a list of actions $\cont$, such that $\treecont$ is the tree of $\cont$ for the current input $\inp$ and Boolean $\bo$, and this list of actions is \textit{represented} in the code at label $\pc$.
This representation predicate $\rep{\code}{\cont}{\pc}$ (omitted here for brevity) is inductively defined to state that the bytecode corresponding to each action in $\cont$ starts at label $\pc$ in $\code$, and then points to an \accept~ instruction. The representation of an $\aclose{\gid}$ action is a \setregclose{$\gid$} instruction, and the representation of an $\acheck{\inpcheck}$ action is an \iendloop{} instruction. In-between the bytecode representation of each action, there may be a \jmp{} instruction, in case we are representing the first branch of a disjunction (see \Autoref{fig:pikevm_compile}).

The invariant also contains two more rules to cover cases where the thread does not directly correspond to the tree of a list of actions.
This happens in two cases, when the thread is at a \resetregs{} or a \beginloop~ instruction.
Finally, we naturally extend this equivalence to lists of threads and trees, and we write $\treethread{\pactive}{\pactive_{VM}}{\code}{\inp}$ when all elements of the lists are pairwise equivalent.

\begin{figure}
\mbox{\infer[]
        {\treethread{(\treecont,\gm)}{\thread{\pc}{\gm}{\bo}}{\code}{\inp}}
        {\booltree{\cont}{\inp}{\bo}{\treecont} \newpremise \rep{\code}{\cont}{\pc}}}\hspace{1cm}%
\mbox{\infer[]
        {\treethread{(\treecont,\gm)}{\thread{\pc}{\gm}{\bo}}{\code}{\inp}}
        {\getpc{\code}{\pc} = \beginloop \newpremise    \treethread{(\treecont,\gm)}{\thread{\pc+1}{\gm}{\cannotexit}}{\code}{\inp}}}

\semspace
\mbox{\infer[]
        {\treethread{(\treereset{\gidl}{\treecont},\gm)}{\thread{\pc}{\gm}{\bo}}{\code}{\inp}}
        {\getpc{\code}{\pc} = \resetregs{\gidl} \newpremise \gmreset{\gm}{\gidl} = \gm' \newpremise \treethread{(\treecont,\gm')}{\thread{\pc+1}{\gm'}{\bo}}{\code}{\inp}}}

\caption{Equivalence between trees and PikeVM threads}%
\Description{} % Text-only figure
\label{fig:tree_thread}
\end{figure}

% Stuttering
The executions of the PikeVM and the PikeTree algorithms are quite similar.
As one adds seen threads to its $\seen$ set, the other adds equivalent trees.
There is however one exception: the PikeVM algorithm sometimes performs more steps than the PikeTree algorithm.
This happens when executing bytecode instructions that do not correspond to operations recorded in the backtracking tree: \jmp{} and \beginloop.
We refer to these instructions as \textit{stuttering} instructions.
In \Autoref{subfig:exec}, when executing labels 3 and 7, PikeVM takes one more step than PikeTree.
As we execute these instructions, some threads are added to the $\seen$ set of PikeVM before the corresponding tree is added on the PikeTree side.
As a result, an invariant of the executions is that each thread $\thread{\pc}{\gm}{\bo}$ of the $\seen_{VM}$ set of PikeVM is either equivalent to a tree in the $\seen$ set of the PikeTree algorithm, or $\getpc{\code}{\pc}$ is a stuttering instruction and $\thread{\pc}{\gm}{\bo}$ is equivalent to the current active tree of the PikeTree algorithm (the head of the $\pactive$ list).
We write $\seenincl{\seen_{VM}}{\seen}{\pactive}$ to express this property.
% Invariant
Finally, we can define on \Autoref{fig:pikevm_inv} the invariant relating the execution of PikeVM to one execution of the PikeTree algorithm, and show that the invariant is initialized (\Autoref{thm:pikevm_init}) and preserved (\Autoref{thm:pikevm}).

\begin{figure}
\mbox{\infer[]
        {\pikeinv{\ptsfinal{\result}}{\pvsfinal{\result}}{\code}}
        {}}\hspace{.2cm}%
\mbox{\infer[] {\pikeinv{\ptstate{\inp}{\best}{\pactive}{\blocked}{\seen}}{\pvstate{\inp}{\best}{\pactive_{VM}}{\blocked_{VM}}{\seen_{VM}}}{\code}}
        {\treethread{\pactive}{\pactive_{VM}}{\code}{\inp}
         \newpremise \treethread{\blocked}{\blocked_{VM}}{\code}{\nextinp{\inp}}
         \newpremise \seenincl{\seen_{VM}}{\seen}{\pactive}}}
\caption{Equivalence between PikeTree semantic states and PikeVM semantic states}%
\Description{} % Text-only figure
\label{fig:pikevm_inv}
\end{figure}

\begin{theorem}[PikeVM invariant initialization]
  \label{thm:pikevm_init}
  $\forall\: \insub{\subreg},\: \treecont,\: \code,\: \inp.\;$\\
  $\compilation{\subreg} = \code \wedge \booltree{[\areg{\re{\subreg}}]}{\inp}{\canexit}{\treecont} \implies$
  $\pikeinv{\ptsinit{\treecont}{\inp}}{\pvsinit{\inp}}{\code}$.
\end{theorem}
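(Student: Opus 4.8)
The plan is to unfold both initial states and match the goal against the second (non-final) rule defining $\pikeinv{\cdot}{\cdot}{\code}$ in \Cref{fig:pikevm_inv}. By definition, $\ptsinit{\treecont}{\inp} = \ptstate{\inp}{\None}{[(\treecont,\gmempty)]}{[]}{\emptyset}$ and $\pvsinit{\inp} = \pvstate{\inp}{\None}{[\thread{0}{\gmempty}{\canexit}]}{[]}{\emptyset}$, so the two inputs coincide, the two $\best$ components are both $\None$, and both states carry an empty blocked list and an empty seen set. It therefore suffices to discharge the three premises of that rule: (i) $\treethread{[(\treecont,\gmempty)]}{[\thread{0}{\gmempty}{\canexit}]}{\code}{\inp}$; (ii) $\treethread{[]}{[]}{\code}{\nextinp{\inp}}$; and (iii) $\seenincl{\emptyset}{\emptyset}{[(\treecont,\gmempty)]}$.

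Premises (ii) and (iii) are immediate. The list extension of the tree/thread equivalence relates two empty lists vacuously, giving (ii); and the seen-inclusion relation $\seenincl{\seen_{VM}}{\seen}{\pactive}$ holds whenever $\seen_{VM}$ is empty, giving (iii). The genuine content is premise (i). Since both active lists are singletons of equal length, the pairwise list extension reduces (i) to the single equivalence $\treethread{(\treecont,\gmempty)}{\thread{0}{\gmempty}{\canexit}}{\code}{\inp}$.

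To establish this, I would apply the first (common-case) rule of \Cref{fig:tree_thread}, instantiating its existentially quantified action list $\cont$ to the singleton $[\areg{\re{\subreg}}]$, with group map $\gmempty$, program counter $0$, and boolean $\canexit$. The rule's first premise, $\booltree{[\areg{\re{\subreg}}]}{\inp}{\canexit}{\treecont}$, is precisely the hypothesis of the theorem. Its second premise, $\rep{\code}{[\areg{\re{\subreg}}]}{0}$, asserts that the compiled code $\code = \compilation{\subreg}$ represents the single-action list $[\areg{\re{\subreg}}]$ starting at label $0$: the bytecode for $\subreg$ is laid out at label $0$ and, once $\subreg$ is matched, control reaches the trailing $\accept$ that compilation appends to the end of the instruction list.

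The main obstacle is exactly this representation fact, $\rep{\compilation{\subreg}}{[\areg{\re{\subreg}}]}{0}$, which must be read off the compilation scheme of \Cref{fig:pikevm_compile}. I expect it to be the base instance (empty continuation) of a more general compiler-representation lemma, proved by induction on $\subreg$ together with its continuation: whenever a continuation list $\cont$ is represented immediately after the bytecode of $\subreg$ and that bytecode begins at label $\pc$, the combined list $\areg{\re{\subreg}} :: \cont$ is represented at $\pc$. Threading labels and the trailing $\accept$ correctly through the cases for disjunction, sequence, capture groups, and the greedy and lazy stars of \Cref{fig:pikevm_compile} is where the real work lies; for the present initialization theorem, however, that lemma can be invoked as a black box, with the empty continuation represented by the single appended $\accept$.
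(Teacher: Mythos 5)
Your proposal is correct and follows essentially the same route as the paper: after unfolding the two initial states and discharging the trivial premises, the paper's proof likewise reduces everything to showing that $[\areg{\re{\subreg}}]$ is represented at label $0$ of $\compilation{\subreg}$, proved by induction over $\subreg$. Your continuation-generalized formulation of that representation lemma is the natural strengthening needed to make that induction go through, so the two arguments coincide.
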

\begin{theorem}[PikeVM invariant preservation]
  \label{thm:pikevm}
  $\forall\: \pts_1,\: \pvs_1,\: \pvs_2,\: \code.\;$\\
  $\pikeinv{\pts_1}{\pvs_1}{\code} \wedge \pikevmstep{\pvs_1}{\pvs_2}{\code} \implies$
  $(\exists \pts_2.\; \piketreestep{\pts_1}{\pts_2} \wedge \pikeinv{\pts_2}{\pvs_2}{\code}) \vee (\pikeinv{\pts_1}{\pvs_2}{\code})$.
\end{theorem}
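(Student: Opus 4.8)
The plan is to prove \Autoref{thm:pikevm} as a simulation argument ``up to stuttering'': a case analysis on the PikeVM step $\pikevmstep{\pvs_1}{\pvs_2}{\code}$, using the hypothesis $\pikeinv{\pts_1}{\pvs_1}{\code}$ to extract, for each PikeVM thread, its equivalent PikeTree tree via \Autoref{fig:tree_thread}. The disjunction in the conclusion is precisely what lets the two algorithms run at different speeds: for most instructions PikeTree takes one matching step (the left disjunct $\exists \pts_2.\; \piketreestep{\pts_1}{\pts_2} \wedge \pikeinv{\pts_2}{\pvs_2}{\code}$), but the two \emph{stuttering} instructions \jmp{} and \beginloop{} are consumed by PikeVM without any corresponding PikeTree step (the right disjunct $\pikeinv{\pts_1}{\pvs_2}{\code}$), because they do not appear as nodes of the backtracking tree.

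First I would dispatch the non-stuttering rules: \ruleref{pikevm}{Final}, \ruleref{pikevm}{NextChar}, \ruleref{pikevm}{Match}, \ruleref{pikevm}{Block}, \ruleref{pikevm}{FailBlock}, \ruleref{pikevm}{Fork}, \ruleref{pikevm}{Open}, \ruleref{pikevm}{Close}, \ruleref{pikevm}{Reset}, \ruleref{pikevm}{End} and \ruleref{pikevm}{EndStuck}. For each, I invert the head equivalence $\treethread{(\treecont,\gm)}{\thread{\pc}{\gm}{\bo}}{\code}{\inp}$. Since the head instruction $\getpc{\code}{\pc}$ is not stuttering, only the first rule of \Autoref{fig:tree_thread} applies, yielding actions $\cont$ with $\rep{\code}{\cont}{\pc}$ and $\booltree{\cont}{\inp}{\bo}{\treecont}$; inverting the boolean tree semantics then pins down the shape of $\treecont$ (a $\treechoice{}{}$ for \fork{}{}, a $\treeread{\rchar}{\treecont'}$ for \consume{}, a $\treematch$ for \accept, and so on). I then fire the matching PikeTree rule to obtain $\pts_2$ and re-establish the three conjuncts of the invariant: the pairwise thread--tree equivalences of the active and blocked lists follow from $\rep$ advancing to the continuation, and the seen-inclusion $\seenincl{\seen_{VM}}{\seen}{\pactive}$ is restored because the processed tree enters PikeTree's $\seen$ exactly as $(\pc,\bo)$ enters $\seen_{VM}$.

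For the two stuttering rules \ruleref{pikevm}{Jump} and \ruleref{pikevm}{Begin} I take the right disjunct, holding $\pts_1$ fixed and proving $\pikeinv{\pts_1}{\pvs_2}{\code}$. Here the second and third rules of \Autoref{fig:tree_thread} are the point: a thread at \beginloop{} (resp. at a \jmp{}, which $\rep$ treats transparently) is equivalent to the \emph{same} tree as its successor at $\pc+1$ (resp. $\pc'$), so the active-list equivalence transfers unchanged to $\pvs_2$. The delicate part is seen-inclusion: PikeVM has added the stuttering $(\pc,\bo)$ to $\seen_{VM}$, yet no tree was added to PikeTree's $\seen$; this is absorbed by the stuttering clause of $\seenincl{\seen_{VM}}{\seen}{\pactive}$, which lets a seen thread be justified by the current head of $\pactive$ when its instruction is stuttering.

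I expect the main obstacle to be the \ruleref{pikevm}{Skip} rule together with the global bookkeeping of $\seenincl{\seen_{VM}}{\seen}{\pactive}$. When PikeVM skips its head thread because $(\pc,\bo)\in\seen_{VM}$, I must read off from seen-inclusion whether the equivalent tree already lies in PikeTree's $\seen$ (so PikeTree fires \ruleref{piketree}{Skip}, left disjunct) or is only justified by the stuttering clause; telling these apart requires that the tree attached to a thread is uniquely determined, which I would obtain from determinism of the boolean tree semantics (the analogue of \Autoref{thm:tree_det}). The subtle invariant-maintenance point is the \emph{migration of witnesses} between the two clauses of seen-inclusion: whenever PikeTree processes a tree and thereby adds it to its $\seen$, every stuttering PikeVM thread formerly justified by that tree being the head of $\pactive$ must be re-justified by membership in $\seen$, while a stuttering step conversely weakens the head of $\pactive$. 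Verifying this migration uniformly across all the step cases, rather than any single case in isolation, is where the real work of the proof lies; the companion \Autoref{thm:pikevm_init} is then routine, since the initial state has empty $\seen$, $\blocked$, and a single thread at label $0$ equivalent to the root tree.
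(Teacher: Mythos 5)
Your overall strategy is the one the paper uses: a case analysis on $\pikevmstep{\pvs_1}{\pvs_2}{\code}$, with \ruleref{pikevm}{Jump} and \ruleref{pikevm}{Begin} discharged by the right disjunct as stuttering steps justified by the extra rules of \Autoref{fig:tree_thread}, and with \ruleref{pikevm}{Skip} handled through the seen-inclusion $\seenincl{\seen_{VM}}{\seen}{\pactive}$; your ``migration of witnesses'' concern is exactly the bookkeeping the paper's invariant is designed to absorb.

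One step would fail as written: in the non-stuttering cases you claim that a single inversion of $\booltree{\cont}{\inp}{\bo}{\treecont}$ ``pins down the shape of $\treecont$'' from the instruction at $\pc$. It does not, because $\rep{\code}{\cont}{\pc}$ only says the bytecode of $\cont$ \emph{starts} at $\pc$, and the list $\cont$ may begin with arbitrarily many actions that emit no bytecode at all --- \re{\regepsilon} actions and sequence actions that merely unfold into their components --- so the action actually responsible for $\getpc{\code}{\pc}$ can sit arbitrarily deep in $\cont$. The paper therefore proves these cases by \emph{induction} over the derivation of $\booltree{\cont}{\inp}{\bo}{\treecont}$, peeling off the silent prefix until the head action matches the head instruction; you would need the same induction (or an equivalent well-founded stripping lemma) in place of your one-shot inversion. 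The rest of your plan, including using determinism of the boolean tree semantics to identify the unique tree attached to a skipped thread, is consistent with the mechanized proof.
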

\begin{proof}
For \Autoref{thm:pikevm_init}, we show that the list of actions $[\areg{\re{\subreg}}]$ is represented at the label 0 of its compiled code $\code$ by induction over $\subreg$.
Then, the proof of \Autoref{thm:pikevm}  proceeds by case analysis over $\pikevmstep{\pvs_1}{\pvs_2}{\code}$.
Most cases are proved by induction over the $\booltree{\cont}{\inp}{\bo}{\treecont}$ predicate relating the current active tree $\treecont$ being executed by PikeTree and the actions $\cont$ represented at the current active thread executed by PikeVM.
An induction is required because this list of actions could start with an arbitrary sequence of \re{\regepsilon} actions before the action responsible for the current operation in the tree and current instruction in the bytecode.
By construction of the Boolean semantics (\Autoref{fig:bool_semantics}), we deduce that progress checks on the PikeTree side match the checks on the PikeVM side.
When PikeVM decides to skip a thread, it means this thread was in its $\seen$ set.
From the inclusion between $\seen$ sets, we know that there exists an equivalent tree that can be skipped on the PikeTree side.
%\aurele{since the inclusion is not an exact inclusion (stuttering), there is an extra argument here. But I'm not sure it's worth explaining, the general idea is here}
\end{proof}

Finally, we can prove that PikeVM returns the same result as the ECMAScript specification (\Autoref{thm:pikevm_final}) by first proving that PikeVM returns a result of PikeTree (\Autoref{thm:pikevm_trc}), where $\trc$ represents the transitive reflexive closure of a small-step relation.

\begin{theorem}[PikeVM to PikeTree execution]
  \label{thm:pikevm_trc}
  $\forall\: \insub{r},\: \inp,\: \treecont,\: \result.\;$\\
  $\booltree{[\areg{\re{\subreg}}]}{\inp}{\canexit}{\treecont} \wedge {}$
  $\pikevmstar{\pvsinit{\inp}}{\pvsfinal{\result}}{\compilation{\subreg}} \implies$
  $\piketreestar{\ptsinit{\treecont}{\inp}}{\ptsfinal{\result}}$
\end{theorem}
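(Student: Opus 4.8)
The plan is to run a forward simulation of the PikeVM execution by a PikeTree execution, using the invariant $\pikeinv{\pts}{\pvs}{\code}$ of \Autoref{fig:pikevm_inv} as the coupling relation, and then to read off the conclusion from the shape of that invariant once the PikeVM run terminates. Throughout I fix $\code = \compilation{\subreg}$.

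First I would prove an auxiliary simulation lemma by induction on the length of the PikeVM derivation $\pikevmstar{\pvs_1}{\pvs_2}{\code}$: for every PikeTree state $\pts_1$ with $\pikeinv{\pts_1}{\pvs_1}{\code}$, there exists a PikeTree state $\pts_2$ with $\piketreestar{\pts_1}{\pts_2}$ and $\pikeinv{\pts_2}{\pvs_2}{\code}$. The base case (zero steps) is immediate with $\pts_2 = \pts_1$. For the inductive step I split the run into a first step $\pikevmstep{\pvs_1}{\pvs_1'}{\code}$ followed by a shorter run $\pikevmstar{\pvs_1'}{\pvs_2}{\code}$, and apply the preservation result (\Autoref{thm:pikevm}) to the first step. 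This yields two cases: either PikeTree can take a matching step $\piketreestep{\pts_1}{\pts_1'}$ with $\pikeinv{\pts_1'}{\pvs_1'}{\code}$, or PikeVM stutters and the unchanged state already satisfies $\pikeinv{\pts_1}{\pvs_1'}{\code}$. In the first case I invoke the induction hypothesis on $\pts_1'$ and $\pvs_1'$ and prepend the PikeTree step; in the second I invoke it on $\pts_1$ and $\pvs_1'$ while adding no PikeTree step. In both cases the resulting derivations compose into a single $\trc$ derivation, closing the induction.

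To obtain the theorem I instantiate the lemma with $\pvs_1 = \pvsinit{\inp}$ and $\pts_1 = \ptsinit{\treecont}{\inp}$. The hypothesis $\booltree{[\areg{\re{\subreg}}]}{\inp}{\canexit}{\treecont}$ together with $\code = \compilation{\subreg}$ discharges the initial coupling $\pikeinv{\ptsinit{\treecont}{\inp}}{\pvsinit{\inp}}{\code}$ by \Autoref{thm:pikevm_init}. Applying the lemma to the given run $\pikevmstar{\pvsinit{\inp}}{\pvsfinal{\result}}{\code}$ produces some $\pts_2$ with $\piketreestar{\ptsinit{\treecont}{\inp}}{\pts_2}$ and $\pikeinv{\pts_2}{\pvsfinal{\result}}{\code}$. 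It only remains to identify $\pts_2$: since the invariant of \Autoref{fig:pikevm_inv} has exactly two rules and only the first one carries a final PikeVM state $\pvsfinal{\cdot}$ on the right, inversion of $\pikeinv{\pts_2}{\pvsfinal{\result}}{\code}$ forces $\pts_2 = \ptsfinal{\result}$, which is exactly the desired conclusion $\piketreestar{\ptsinit{\treecont}{\inp}}{\ptsfinal{\result}}$.

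Essentially all of the difficulty is already packaged inside \Autoref{thm:pikevm}, so the present argument is a bookkeeping composition. The one point needing care is that the simulation is not lock-step: a PikeVM step consuming a stuttering instruction (a \jmp{} or \beginloop{} instruction) advances PikeVM with zero PikeTree steps, so I must allow the PikeTree side to stay put, which is precisely why I phrase the lemma in terms of $\trc$ rather than a step-for-step bisimulation. This is harmless because the given PikeVM run is finite -- PikeVM cannot stutter forever -- so the induction on run length stays well-founded and only finitely many PikeTree steps are ever composed. The only other subtle point is the final inversion, which is where the exact shape of the two invariant rules is used to pin down the PikeTree result.
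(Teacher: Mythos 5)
Your proposal is correct and matches the paper's own argument, which likewise proves the theorem by induction over the derivation of $\pikevmstar{\pvsinit{\inp}}{\pvsfinal{\result}}{\compilation{\subreg}}$, applying \Autoref{thm:pikevm} at each step (with \Autoref{thm:pikevm_init} supplying the initial coupling). Your generalization to an auxiliary lemma over arbitrary coupled states, the handling of stuttering steps, and the final inversion on the invariant are exactly the bookkeeping the paper leaves implicit.
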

\begin{theorem}[PikeVM correctness theorem]
  \label{thm:pikevm_final}
  $\forall\: \iswf{\subreg_w},\: \inp,\: \result.\;$\\
  $\insub{\tolinden{\subreg_w}} \wedge$
  $\pikevmstar{\pvsinit{\inp}}{\pvsfinal{\result}}{\compilation{\tolinden{\subreg_w}}} \implies$
  $\warblrecompile{\subreg_w}(\strof{\inp},\idx{\inp}) = \towarblre{\result}$.
\end{theorem}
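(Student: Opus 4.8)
The plan is to obtain \Autoref{thm:pikevm_final} as a corollary that simply composes the results already established in this section, since every genuinely hard argument has already been discharged in \Autoref{thm:boolean_correct}, \Autoref{thm:piketree}, \Autoref{thm:pikevm_trc}, and \Autoref{thm:warblre_equiv}. Throughout I would abbreviate $\subreg \defequal \tolinden{\subreg_w}$ and fix one single tree $\treecont \defequal \computetree{[\subreg]}{\inp}{\gmempty}{\forward}$; this one object is then handed to every lemma, so no separate identification of distinct trees is required.

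First I would reconcile the input-comparing progress checks with the boolean ones. By \Autoref{thm:functional_correctness} the computed tree satisfies $\istree{[\subreg]}{\inp}{\gmempty}{\forward}{\treecont}$, and since $\insub{\subreg}$ holds by hypothesis, the left-to-right direction of \Autoref{thm:boolean_correct} upgrades this to $\booltree{[\subreg]}{\inp}{\canexit}{\treecont}$ for that same $\treecont$. This is exactly the tree premise of \Autoref{thm:pikevm_trc}; feeding it the assumed run $\pikevmstar{\pvsinit{\inp}}{\pvsfinal{\result}}{\compilation{\subreg}}$ produces a corresponding PikeTree run $\piketreestar{\ptsinit{\treecont}{\inp}}{\ptsfinal{\result}}$.

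Next I would pin down $\result$ using the PikeTree invariant. By \Autoref{thm:piketree_init} the initial state satisfies $\piketreeinv{\ptsinit{\treecont}{\inp}}{\firstbranch{\treecont}{\inp}}$, and an induction on the length of the PikeTree run, applying the preservation result \Autoref{thm:piketree} at each step, carries this invariant to the final state, giving $\piketreeinv{\ptsfinal{\result}}{\firstbranch{\treecont}{\inp}}$. Since the only inference rule concluding an invariant for a state $\ptsfinal{\best}$ forces its predicted result to equal $\best$, this yields $\result = \firstbranch{\treecont}{\inp}$. Finally, \Autoref{thm:warblre_equiv} gives $\warblrecompile{\subreg_w}(\strof{\inp},\idx{\inp}) = \towarblre{\firstbranch{\treecont}{\inp}}$ (the tree appearing there is definitionally our $\treecont$, as $\areg{\tolinden{\subreg_w}} = \subreg$), and rewriting by the equality just obtained closes the goal as $\warblrecompile{\subreg_w}(\strof{\inp},\idx{\inp}) = \towarblre{\result}$.

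The hard part is essentially absent at this final stage: the exploration-scheme and branch-skipping difficulties live in \Autoref{thm:piketree} and \Autoref{thm:pikevm_trc}, and the progress-check mismatch in \Autoref{thm:boolean_correct}, all of which are assumed here. The only steps requiring attention are bureaucratic: discharging the various $\pikesub$ side-conditions (which follow since $\treecont$ arises from the $\pikesub$ regex $\subreg$), checking that the compiled code $\compilation{\subreg}$ and the starting input $\inp$ line up on both the PikeVM and PikeTree sides, and the short transitive-closure induction in the third paragraph, which is routine given the per-step preservation lemma.
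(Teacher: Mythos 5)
Your proposal is correct and follows essentially the same route as the paper's own proof: it composes \Autoref{thm:pikevm_trc} (via the boolean tree obtained from \Autoref{thm:functional_correctness} and \Autoref{thm:boolean_correct}) with \Autoref{thm:piketree_init}, \Autoref{thm:piketree}, and \Autoref{thm:warblre_equiv}, exactly as the paper does. The only cosmetic difference is that you invoke \Autoref{thm:boolean_correct} where the paper cites \Autoref{thm:boolean} directly, which is immaterial since the former is a corollary of the latter.
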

\begin{proof}
  \Autoref{thm:pikevm_trc} is proved by induction over the derivation of $\pikevmstar{\pvsinit{\inp}}{\pvsfinal{\result}}{\compilation{\subreg}}$, applying \Autoref{thm:pikevm} at each step.
  Our final \Autoref{thm:pikevm_final} then follows from previous results.
  From \Autoref{thm:pikevm_trc}, we know that each execution of the PikeVM algorithm on a regex and an input corresponds to an execution of the PikeTree algorithm on the corresponding Boolean tree.
  From \Autoref{thm:boolean}, we know that this corresponding Boolean tree is the backtracking tree of that regex and that input.
  From Theorems~\ref{thm:piketree} and~\ref{thm:piketree_init}, we know that each result of the PikeTree algorithm corresponds to the first accepting branch of that tree.
  Finally, from \Autoref{thm:warblre_equiv}, we know that this first accepting branch is precisely the result defined by the ECMAScript standard.
\end{proof}

Despite the PikeVM algorithm having been used widely for decades, this is the first time that it has been formally proved to return the top priority result according to backtracking semantics.
Reasoning on backtracking trees greatly facilitates the proof as it enables simple ways to express crucial properties.
For instance, proving the correctness of skipping branches would be much harder to prove directly on the PikeVM algorithm: two executions of the same bytecode instruction might yield completely different results when the $\seen$ set is different.

\section{Discussion and Related Work}
\label{sec:related}

\newcommand{\related}[1]{\tightparagraph{#1.}}

\related{Adapting our work to other languages}
While our semantics matches that of the JavaScript regexes, it could be straightforwardly modified for other languages with backtracking semantics.
We chose JavaScript for its wide use and because it is the only modern backtracking semantics regex language to come with a full mechanized specification that we can formally compare ourselves to.

For the PikeVM, we could expect the following changes.
In languages without capture reset, we could simply remove the $\treereset{}{}$ node from the tree type, and the PikeVM would not generate any \resetregs{} instruction.
In languages without the special nullable quantifier semantics of JavaScript, we would remove the $\acheck{}$ action, and make all nodes of the tree check for progress.
The PikeVM would become simpler: we could remove the Boolean from threads entirely, and remove instructions \beginloop~ and \iendloop{}. In the proof, we would remove the Boolean semantics of \Autoref{subsec:boolean} entirely.
In both cases, there would be one less case in \Autoref{fig:tree_thread}.

For contextual equivalence, most of our proofs of \Autoref{sec:rewrite} do not depend on JavaScript specificities (with the exception of anchor rewriting equivalences).
We expect them to hold in any backtracking semantics language.
Counter-examples are easier to generalize as they can be confirmed with tests, we discuss them in \appendixref{app:counterex}.

\subsection{Related Work}
With the exception of Warblre, mechanized semantics and verification work typically exclude backtracking semantics and capture groups.
This is known to be difficult: according to~\citet{lean_lookarounds}, ``\textit{Formalizing even the core aspect of the capture semantics and algorithms [\dots] is therefore a major undertaking with many challenges}''.
Our semantics and proofs provide a solution.

\related{Warblre}
Warblre~\cite{warblre_icfp} was the first mechanization, in an interactive theorem prover, of a modern, general-purpose regex language (ECMAScript 2023~\cite[\S 22.2]{ecma_2023}).  By closely following the paper specification, it traded succintness and ease-of-use for high auditability and faithfulness; accordingly, its authors described it as ``\textit{a foundation for researchers to restate the semantics in a way that better suits their field}''.
Our \autoref{sec:semantics} does that, and provides a new, complete formalization of ECMAScript 2023 \S 22.2.  It preserves the desirable properties of Warblre (in particular, \Autoref{thm:warblre_equiv} shows faithfulness), but without its shortcomings: our tree semantics are more succinct, and much easier to reason about (we provide an induction principle, and we eliminate the burden of reasoning about the error monad).

\related{JavaScript regex formalizations}
Several previous efforts have developed unmechanized formal semantics for JavaScript regexes.
\citet{expose} presented a formalized semantics for ECMAScript 2015 regexes and used it for symbolic execution of programs with regexes.
Similarly, \citet{black_ostrich} augmented the Ostrich constraint solver with a model of ECMAScript regexes, using two-way alternating automata.
Recently, \citet{regex_repair,regex_repair2} presented a formal semantics for JavaScript regexes used for regex repair.
Their semantic statement includes a matching direction and a group map just like ours, and their \textit{continuation regex} resembles our list of actions, but they only formalize the top-priority match.
Finally, \citet{psst} formalized a large subset of JavaScript regex semantics using prioritized streaming string transducers.
While there are mechanizations of the JavaScript language~\cite{jiset,jscert}, none include the regular expression chapter of ECMAScript.
In fact, besides Warblre, our semantics is the only mechanized one for JavaScript regexes, and also the only complete one: other formalizations exclude some JavaScript features such as lookbehinds, regex flags, or backreferences.
Manually defining formal semantics for a modern regex language is a complex and error-prone task.
For instance, careful inspection of the semantics proposed by~\citet[Extended version, Table 5]{black_ostrich} reveals some mistakes:
capture reset is not performed, and the semantics of lookarounds is incorrect.
JavaScript lookarounds are supposed to be atomic, meaning that once a match is found for a lookaround, an engine cannot backtrack and match it in a different way~\cite[Section 22.2.2.4, Note 3]{ecma_2025}.
Such subtle differences are easy to miss (this behavior of lookarounds can only be observed with a backreference to a capture group defined inside a positive lookaround); by contrast, our proof of \Autoref{thm:warblre_equiv} ensures the correctness of the semantics.
Still, mechanization also entails specific challenges:
for instance, some paper formalizations use \textit{non-strict positive occurrences}
by referring to the negation of matching within the definition of the matching relation itself (for instance for negative lookarounds, or for the second branch of the disjunction).
Such definitions would have to be adjusted to be accepted by Rocq~\cite{lean_lookarounds}.

For a different language, the original PCRE documentation~\cite{pcre_manual} notes that
``\textit{The set of strings that are matched by a regular expression can be represented as a tree structure [...] of infinite size}'' and mentions both depth-first and breadth-first traversal of the tree.
We have found that specializing the tree to a particular input provides a finite tree that is more practical for formal verification than a coinductive one, and that the PikeVM algorithm is more than a breadth-first traversal: it alternates between depth-first and breadth-first order to maintain priority.

\related{Formalizations of modern features}
Other work has also verified or formalized individual modern features, but in simplified languages.
For instance, matching algorithms supporting lookarounds have recently been mechanized in Lean~\cite{lean_lookarounds} and Rocq~\cite{verified_lookarounds}.
However, these works include neither capture groups nor backtracking semantics.
Other unmechanized work specify capture groups using prioritized transducers~\cite{groups_transducers} or encode lookarounds as alternating automata~\cite{lookarounds_afa}.

\related{PikeVM verification}
Recently, the \texttt{lean-regex} regex matcher has also been implemented and verified in Lean~\cite{lean_regex}.
This project also includes a PikeVM implementation, along with a proof.
However, the proof differs from our \autoref{sec:pikevm} in some ways.
First, its theorem states that the algorithm returns \textit{one of the possible matches} of the input regex on the input string.
In contrast, in our work we consider that the correctness of the PikeVM requires a stronger property: not only that it returns a result when it should, but also that it returns the top-priority one according to backtracking semantics.
Second, our theorem also relates the result of the PikeVM to a mechanized translation of a real-world regex semantics, Warblre.
On the other hand, the \texttt{lean-regex} formalization can be executed and uses efficient data-structure implementations and optimizations, while our work stops at algorithm verification.

\related{Traditional regular expressions}
Traditional regular expressions and finite state automata, including NFA simulation, have been verified and mechanized many times~\cite{coq_regular_languages, regular_representations, decision_procedure, functional_to_hol, deciding_kat, regular_sets_afp, validating_lr1_parsers, verified_operational_semantics, certified_parsing}.
Regular expression matching algorithms have been formally verified for lexers~\cite{verbatim,verbatim_pp,stainless_lexer,coqlex} and in particular for \textit{leftmost-longest} semantics~\cite{isabelle_posix_lexing,posix_lexing_itp}, but not for backtracking semantics.
%% In contrast with backtracking semantics, leftmost-longest (or POSIX) matching returns the match that starts the earliest and matches the most characters (making disjunction commutative again).

\subsection{Future Work}
Our semantics is verified to be equivalent to the latest version of Warblre, based on the 2023 edition of ECMAScript~\cite{ecma_2023}.
As of 2025, two new editions have been released~\cite{ecma_2024,ecma_2025}, introducing new regex features: the \texttt{v} Unicode flag, modifiers, and duplicate named groups.
If Warblre were to be updated, we could adapt accordingly.
First, the new \flag{v} flag adds new features in character descriptors that we could support by adapting the $\inpadvance{\cd}{\inp}{\dir}$ function.
Second, new modifiers allow flags \flag{i}, \flag{m} and \flag{s} to be set locally in a subregex.
Currently, we pass the record containing the value of each flag as parameter of the $\istree{\cont}{\inp}{\gm}{\dir}{\treecont}$ predicate.
To support local modifiers, we would pass this record as an index instead, and add a corresponding type of node in backtracking trees.
Finally, duplicate named groups allow some capture group identifiers to appear several times in distinct disjunction branches.  If Warblre's $\wellformed$ set were updated, we expect that we could accordingly relax our definition of contextual equivalence to allow equivalent regexes to have duplicated named groups.

While we have proved the \textit{correctness} of the PikeVM algorithm, we have not verified an implementation, let alone an efficient one.
A realistic engine would require efficient data structures (e.g., for capture groups and group maps), matching optimizations (e.g., prefix acceleration), and a specification and implementation of top-level APIs (e.g., \texttt{matchAll}).
Other bugs in implementations can lie in their complexity.
Future work could prove that the PikeVM semantics terminates in a number of steps linear in the size of the input.

It would also be interesting to extend the base PikeVM algorithm with additional features,
to obtain a formally verified linear-time algorithm supporting all JavaScript regex features except backreferences.
We expect that adapting our proofs would be straightforward for most missing features.
To support anchors, existing PikeVM implementations compile them to a new bytecode instruction that checks the surroundings of the current string position.
Supporting the question mark quantifiers can be done with only \fork{}{}, \beginloop~ and \iendloop{} instructions.
Then, generic quantifiers \re{\quant{\subreg}{\rmin}{\Delta}{p}} can be handled by duplicating the bytecode of \re{\subreg} (for instance, compiling \re{\quant{\subreg}{1}{1}{\top}} concatenates the bytecodes of \re{\subreg} and \re{\regqm{r}}).
A more challenging extension would consist in supporting lookaorunds, following the algorithm of~\citet{regelk_pldi}.

Finally, other work has also explored other extended features that are not part of JavaScript, such as
\textit{atomic groups}~\cite{memo_atomic_groups}, the \textit{shuffle} operator~\cite{derivatives_enhanced},
or verifying in Idris 2~\cite{idris_tyre} the type-safety of an engine supporting \textit{typed regexes}~\cite{tyre}.
Other work goes beyond capture groups and instead returns \textit{parse trees} for standard regular expressions, indicating the full list of substrings a subexpression has matched~\cite{greedy_matching, bitcoded_parsing, certified_bitcoded_parsing, efficient_parse_tree}.
An interesting future direction would be to extend our work to support these extensions and alternative semantics.

\section{Conclusion}
\label{sec:conclusion}

We have presented a new semantics for JavaScript regexes, that is mechanized in Rocq, complete yet succinct, proven to be faithful to the ECMAScript standard, and practical for formal verification.
We have showed that formalizing not only the first match of a regex, but also lower-priority matches, makes the semantics practical.
We have used it to provide novel proofs of real-world applications: the first verification of the PikeVM linear matching algorithm, used in many deployed engines; and a new formal notion of contextual equivalence that allowed us to prove and disprove regex equivalences from the literature and from the optimizer of a popular JavaScript regex manipulation library.
Thanks to the verified connection with an audited formalization, formal proofs conducted with our semantics can be trusted to correspond exactly to the behavior specified by ECMAScript.
Our work lays the foundation for the development of verified and realistic modern regex engines.

\section*{Acknowledgments}
We thank Yann Herklotz, Martin Odersky, and Marcin Wojnarowski for feedback on this paper, and Eugène Flesselle for preliminary exploration of regex equivalence.
This research was funded in whole or in part by the Swiss National Science Foundation (SNSF), grant number 10003649.

\section*{Artifact Availability}
Our development is free software and can be accessed online: \url{https://github.com/epfl-systemf/Linden}.
We have also packaged all the definitions and proofs presented in this paper in an artifact~\cite{linden_artifact}.
The artifact offers both the standalone Rocq files, and a virtual machine image (along with a script to create this image from scratch).

%\newpage
%%
%% The next two lines define the bibliography style to be used, and
%% the bibliography file.
\bibliographystyle{ACM-Reference-Format}
\bibliography{main}

\newpage
\appendix
\section{Correspondence between the paper and the mechanization}
\label{app:correspondence}

\newcommand{\rocq}[1]{\small\texttt{#1}}
\newcommand{\fname}[1]{\small\texttt{#1}}

The following table presents a one-to-one correspondence between the paper and the code included in the supplementary material.

\begin{longtable}{@{}l@{\hspace{0.3em}}l@{\hspace{0.3em}}l@{}}
  \textbf{Paper Definition} & \textbf{File} & \textbf{Rocq Name}\\
  \hline
  \Autoref{sec:background} & & \\
  \hline
  \Autoref{fig:syntax} & \fname{Semantics/Regex.v} & \rocq{regex, anchor, lookaround}\\
  \Autoref{fig:syntax} & \fname{Semantics/Chars.v} & \rocq{char\_descr}\\
  \hline
  \Autoref{sec:semantics} & & \\
  \hline
  \Autoref{fig:tree_example} & \fname{Semantics/Examples.v} & \rocq{fig2\_tree}\\
  \Autoref{fig:tree_syntax} & \fname{Semantics/Tree.v} & \rocq{tree}\\
  \Autoref{fig:tree_semantics}, $\istree{\cont}{\inp}{\gm}{\dir}{\treecont}$ & \fname{Semantics/Semantics.v} & \rocq{is\_tree $\cont$ $\inp$ $\gm$ $\dir$ $\treecont$} \\
  List of actions $\cont$ & \fname{Semantics/Semantics.v} & \rocq{actions}\\
  Input $\inp$ (zipper) & \fname{Semantics/Chars.v} & \rocq{input}\\
  $\inpgt{\inp_1}{\inp_2}{\dir}$ & \fname{Semantics/StrictSuffix.v} & \rocq{strict\_suffix $\inp_1$ $\inp_2$ $\dir$}\\
  $\idx{\inp}$ & \fname{Semantics/Chars.v} & \rocq{idx $\inp$}\\
  $\nextinpdir{\inpcheck}{\dir}$ & \fname{Semantics/Chars.v} & \rocq{advance\_input $\inpcheck$ $\dir$} \\
  Group map $\gm$ & \fname{Semantics/Groups.v} & \rocq{GroupMap.t} \\
  $\gmempty$ & \fname{Semantics/Groups.v} & \rocq{GroupMap.empty} \\
  $\gmopen{\gm}{\gid}{n}$ & \fname{Semantics/Groups.v} & \rocq{GroupMap.open $n$ $\gid$ $\gm$} \\
  $\gmclose{\gm}{\gid}{n}$ & \fname{Semantics/Groups.v} & \rocq{GroupMap.close $n$ $\gid$ $\gm$} \\
  $\gmreset{\gm}{\gidl}$ & \fname{Semantics/Groups.v} & \rocq{GroupMap.reset $\gidl$ $\gm$} \\
  $\inpadvance{\cd}{\inp}{\dir}$ & \fname{Semantics/Chars.v} & \rocq{read\_char $\cd$ $\inp$ $\dir$} \\
  $\checkanchor{\anc}{\inp}$ & \fname{Semantics/Semantics.v} & \rocq{anchor\_satisfied $\anc$ $\inp$} \\
  $\readbackref{\gm}{\gid}{\inp}{\dir}$ & \fname{Semantics/Semantics.v} & \rocq{read\_backref $\gm$ $\gid$ $\inp$ $\dir$} \\
  $\defgroups{r}$ & \fname{Semantics/Regex.v} & \rocq{def\_groups $r$} \\
  $\lkresult{\lk}{\treelook}{\gm}{i}$ & \fname{Semantics/Semantics.v} & \rocq{lk\_group\_map $\lk$ $\treelook$ $\gm$ $i$} \\
  $\firstbranch{\treecont}{\inp}$ & \fname{Semantics/Tree.v} & \rocq{first\_leaf $\treecont$ $\inp$} \\
  \Autoref{thm:tree_det} & \fname{Semantics/Semantics.v} & \rocq{is\_tree\_determ} \\
  $\computetreefuel{\actions}{\inp}{\gm}{\dir}{n}$ & \fname{Semantics/FunctionalSemantics.v} & \rocq{compute\_tree $\actions$ $\inp$ $\gm$ $\dir$ $n$} \\
  $\fuel{\actions}{\inp}{\dir}$ & \fname{Semantics/FunctionalSemantics.v} & \rocq{actions\_fuel $\actions$ $\inp$ $\dir$} \\
  $\fuel{r}{\inp}{\dir}$ & \fname{Semantics/FunctionalSemantics.v} & \rocq{regex\_fuel $r$ $\inp$ $\dir$} \\
  $\inpsize{\inp}{\dir}$ & \fname{Semantics/FunctionalSemantics.v} & \rocq{max\_iter $\inp$ $\dir$} \\
  $\worstinp{\lk}{\inp}$ & \fname{Semantics/FunctionalSemantics.v} & \rocq{worst\_input $\inp$ $\dir$} \\
  $\lkdir{\lk}$ & \fname{Semantics/Regex.v} & \rocq{lk\_dir $\lk$} \\
  \Autoref{thm:termination} & \fname{Semantics/FunctionalSemantics.v} & \rocq{functional\_terminates} \\
  $\computetree{\actions}{\inp}{\gm}{\dir}$ & \fname{Semantics/FunctionalUtils.v} & \rocq{compute\_tr $\actions$ $\inp$ $\gm$ $\dir$} \\
  \Autoref{thm:functional_correctness} & \fname{Semantics/ComputeIsTree.v} & \rocq{compute\_is\_tree} \\
  \hline
  \Autoref{sec:warblre_equiv} & & \\
  \hline
  $\tolinden{\subreg_w}$ & \fname{WarblreEquiv/RegexpTranslation.v} & \rocq{warblre\_to\_linden} \\
  $\towarblre{\result}$ & \fname{WarblreEquiv/ResultTranslation.v} & \rocq{to\_MatchState} \\
  \Autoref{thm:warblre_equiv} & \fname{WarblreEquiv/EquivMain.v} & \rocq{equiv\_main\_reconstruct} \\
  Equivalence relation & \fname{WarblreEquiv/EquivDef.v} & \rocq{equiv\_cont} \\
  between continuations & & \\
  and lists of actions & & \\
  \hline
  \Autoref{sec:rewrite} & & \\
  \hline
  $\regeq{\subreg_1}{\subreg_2}$ & \fname{Rewriting/Equivalence.v} & \rocq{observ\_equiv} \\
  Regex contexts $\ctx$ & \fname{Rewriting/Equivalence.v} & \rocq{regex\_ctx} \\
  $\plug{\ctx}{\subreg}$ & \fname{Rewriting/Equivalence.v} & \rocq{plug\_ctx $\ctx$ $\subreg$} \\
  Type of context $\ctx$ & \fname{Rewriting/Equivalence.v} & \rocq{ctx\_dir $\ctx$} \\
  $\leaves{\treecont}{\inp}{\dir}$ & \fname{Semantics/Tree.v} & \rocq{tree\_leaves} $\treecont$ $\gmempty$ $\inp$ $\dir$ \\
  $\leaveseq{\ell_1}{\ell_2}$ & \fname{Rewriting/LeavesEquivalence.v} & \rocq{leaves\_equiv [] $\ell_1$ $\ell_2$} \\
  $\leafeqdir{\re{\subreg_1}}{\re{\subreg_2}}{\dir}$ & \fname{Rewriting/Equivalence.v} & \rocq{tree\_equiv\_dir} \\
  $\leafeqdir{\re{\subreg_1}}{\re{\subreg_2}}{\both}$ & \fname{Rewriting/Equivalence.v} & \rocq{tree\_equiv} \\
  \Autoref{thm:same_equiv} & \fname{Rewriting/Equivalence.v} & \rocq{regex\_equiv\_ctx\_samedir} \\
  \Autoref{thm:forward_equiv} & \fname{Rewriting/Equivalence.v} & \rocq{regex\_equiv\_ctx\_forward} \\
  \Autoref{thm:backward_equiv} & \fname{Rewriting/Equivalence.v} & \rocq{regex\_equiv\_ctx\_backward} \\
  \Autoref{thm:ctx_observ_equiv} & \fname{Rewriting/Equivalence.v} & \rocq{observe\_equivalence} \\
  \Autoref{fig:distr_counterex} & \fname{Semantics/Example.v} & \rocq{different\_results} \\
  \hline
  \Autoref{fig:correct_rewrites}: & & \\
  \re{\disjunction{\subreg_1}{\noncap{\disjunction{\subreg_2}{\subreg_3}}}} $\botheq$ \re{\disjunction{\noncap{\disjunction{\subreg_1}{\subreg_2}}}{\subreg_3}} & \fname{Rewriting/Associativity.v} & \rocq{disj\_assoc} \\
  \re{\subreg_1\noncap{\subreg_2\subreg_3}} $\botheq$ \re{\noncap{\subreg_1\subreg_2}\subreg_3} & \fname{Rewriting/Associativity.v} & \rocq{seq\_assoc} \\
  \re{\noncap{\disjunction{\subreg_2}{\subreg_3}}\subreg_1 \fwdeq \disjunction{\noncap{\subreg_2\subreg_1}}{\noncap{\subreg_3\subreg_1}}} & \fname{Rewriting/Distributivity.v} & \rocq{factored\_expanded\_} \\
  when \re{\subreg_1} has no group & & \rocq{right\_equiv} \\
  \re{\subreg_1\noncap{\disjunction{\subreg_2}{\subreg_3}} \bwdeq \disjunction{\noncap{\subreg_1\subreg_2}}{\noncap{\subreg_1\subreg_3}}} & \fname{Rewriting/Distributivity.v} & \rocq{factored\_expanded\_} \\
  when \re{\subreg_1} has no group & & \rocq{left\_equiv} \\
  Anchors as lookarounds & \fname{Rewriting/Anchors.v} & \rocq{desugar\_anchor\_correct} \\
  \re{\regall} & \fname{Semantics/Chars.v} & \rocq{CdAll} \\
  \hline
  $\leafeqdir{\re{\quant{\subreg}{\rmin}{0}{\top}}}{\re{\quant{\subreg}{\rmin}{0}{\bot}}}{\both}$ & \fname{Rewriting/ForcedQuant.v} & \rocq{forced\_equiv} \\
  \Autoref{fig:quantifier_merge_correct}: & & \\
  \re{\quant{\subreg}{\rmin_1}{0}{\greedy} \quant{\subreg}{\rmin_2}{0}{\greedy}} $\botheq$ & \fname{Rewriting/RegexpTree.v} & \rocq{bounded\_bounded\_equiv} \\
  \re{\quant{\subreg}{\rmin_1+\rmin_2}{0}{\greedy}} & & \\
  \re{\quant{\subreg}{\rmin_1}{0}{\greedy} \quant{\subreg}{0}{\Delta_2}{\top}} $\fwdeq$ & \fname{Rewriting/RegexpTree.v} & \rocq{bounded\_atmost\_equiv} \\
  \re{\quant{\subreg}{\rmin_1}{\Delta_2}{\top}} & & \\
  \re{\quant{\subreg}{\rmin_1}{0}{\greedy} \quant{\subreg}{0}{\Delta_2}{\bot}} $\fwdeq$ & \fname{Rewriting/RegexpTree.v} & \rocq{bounded\_atmost\_lazy\_equiv} \\
  \re{\quant{\subreg}{\rmin_1}{\Delta_2}{\bot}} & & \\
  \re{\quant{\subreg}{0}{\Delta_1}{\top} \quant{\subreg}{\rmin_2}{0}{\greedy}} $\bwdeq$ & \fname{Rewriting/RegexpTree.v} & \rocq{atmost\_bounded\_equiv} \\
  \re{\quant{\subreg}{\rmin_2}{\Delta_1}{\top}} & & \\
  \re{\quant{\subreg}{0}{\Delta_1}{\bot} \quant{\subreg}{\rmin_2}{0}{\greedy}} $\bwdeq$ & \fname{Rewriting/RegexpTree.v} & \rocq{atmost\_bounded\_lazy\_equiv} \\
  \re{\quant{\subreg}{\rmin_2}{\Delta_1}{\bot}} & & \\
  \re{\quant{\subreg}{0}{\Delta_1}{\top} \quant{\subreg}{0}{\Delta_2}{\top}} $\botheq$ & \fname{Rewriting/RegexpTree.v} & \rocq{atmost\_atmost\_equiv} \\
  \re{\quant{\subreg}{0}{\Delta_1 + \Delta_2}{\top}} & &\\
  \hline
  Chain of forward equivalences & \fname{Rewriting/Chain.v} & \rocq{equivalence\_chain}\\
  \hline
  \Autoref{sec:pikevm} & & \\
  \hline
  \Autoref{fig:syntax_pikevm} & \fname{Engine/PikeSubset.v} & \rocq{pike\_regex} \\
  Subset of actions & \fname{Engine/PikeSubset.v} & \rocq{pike\_action} \\
  Subset of trees & \fname{Engine/PikeSubset.v} & \rocq{pike\_subtree} \\
  \Autoref{fig:bytecode} & \fname{Engine/NFA.v} & \rocq{bytecode}, \rocq{code} \\
  \Autoref{fig:pikevm_compile} & \fname{Engine/NFA.v} & \rocq{compile} \\
  Label~$\lbl$ & \fname{Engine/NFA.v} & \rocq{label} \\
  \accept~ instruction appended & \fname{Engine/NFA.v} & \rocq{compilation} \\
  Thread $\thread{\pc}{\gm}{\bo}$ & \fname{Engine/PikeVM.v} & \rocq{thread} \\
  \Autoref{fig:pikevm_smallstep} & \fname{Engine/PikeVM.v} & \rocq{pike\_vm\_step} \\
  States of PikeVM & \fname{Engine/PikeVM.v} & \rocq{pike\_vm\_state} \\
  $\pvsinit{\inp}$ & \fname{Engine/PikeVM.v} & \rocq{pike\_vm\_initial\_state} \\
  \Autoref{fig:pike_ex} & \fname{Engine/FunctionalPikeVM.v} & \rocq{paper\_regex} \\
  \Autoref{subfig:bytecode} & \fname{Engine/FunctionalPikeVM.v} & \rocq{paper\_bytecode} \\
  \Autoref{subfig:tree} & \fname{Engine/FunctionalPikeVM.v} & \rocq{paper\_tree} \\
  \Autoref{fig:bool_semantics} & \fname{Engine/BooleanSemantics.v} & \rocq{bool\_tree} \\
  \Autoref{fig:encodes}, $\encodes{\actions}{\inp}{\bo}$ & \fname{Engine/BooleanSemantics.v} & \rocq{bool\_encoding $\bo$ $\inp$ $\actions$} \\
  \Autoref{thm:boolean} & \fname{Engine/BooleanSemantics.v} & \rocq{encode\_equal} \\
  \Autoref{thm:boolean_correct} & \fname{Engine/BooleanSemantics.v} & \rocq{booltree\_istree\_equiv} \\
  \Autoref{fig:piketree_smallstep} & \fname{Engine/PikeTree.v} & \rocq{pike\_tree\_step} \\
  States of PikeTree & \fname{Engine/PikeTree.v} & \rocq{pike\_tree\_state} \\
  $\ptsinit{\treecont}{\inp}$ & \fname{Engine/PikeTree.v} & \rocq{pike\_tree\_initial\_state} \\
  $\piketreeinv{\pts}{\result}$ & \fname{Engine/PikeTree.v} & \rocq{piketreeinv $\pts$ $\result$} \\
  $\ptres{(\treecont,\gm)}{\result}{\seen}{\inp}$ & \fname{Engine/PikeTree.v} & \rocq{tree\_nd $\treecont$ $\gm$ $\inp$ $\seen$ $\result$} \\
  $\ptres{\pactive}{\result}{\seen}{\inp}$ & \fname{Engine/PikeTree.v} & \rocq{list\_nd $\pactive$ $\inp$ $\seen$ $\result$} \\
  $\ptres{\ptstate{\inp}{\best}{\pactive}{\blocked}{\seen}}{\result}{}{}$ & \fname{Engine/PikeTree.v} & \rocq{state\_nd $\inp$ $\pactive$ $\best$ $\blocked$ $\seen$ $\result$} \\
  \Autoref{thm:piketree_init} & \fname{Engine/PikeTree.v} & \rocq{init\_piketree\_inv} \\
  \Autoref{thm:piketree} & \fname{Engine/PikeTree.v} & \rocq{pts\_preservation} \\
  \Autoref{fig:tree_thread} & \fname{Engine/PikeEquiv.v} & \rocq{tree\_thread} \\
  $\rep{\code}{\cont}{\pc}$ & \fname{Engine/NFA.v} & \rocq{actions\_rep} \\
  $\treethread{\pactive}{\pactive'}{\code}{\inp}$ & \fname{Engine/PikeEquiv.v} & \rocq{list\_tree\_thread} \\
  $\seenincl{\seen_{VM}}{\seen}{\pactive}$ & \fname{Engine/PikeEquiv.v} & \rocq{seen\_inclusion} \\
  \Autoref{fig:pikevm_inv} & \fname{Engine/PikeEquiv.v} & \rocq{pike\_inv} \\
  \Autoref{thm:pikevm_init} & \fname{Engine/PikeEquiv.v} & \rocq{initial\_pike\_inv} \\
  \Autoref{thm:pikevm} & \fname{Engine/PikeEquiv.v} & \rocq{invariant\_preservation} \\
  \Autoref{thm:pikevm_trc} & \fname{Engine/Correctness.v} & \rocq{pike\_vm\_to\_pike\_tree} \\
  \Autoref{thm:pikevm_final} & \fname{Engine/Correctness.v} & \rocq{pikevm\_warblre} \\
\end{longtable}

\newpage
\section{Counterexamples for incorrect equivalences}
\label{app:counterex}

In this section, we provide counter-examples for every distributivity or quantifier merging equivalence we have not proved in \Autoref{sec:rewrite}.
We use the character $\square$ to denote the hole in the context.
For counter examples of backward equivalences, we present a backward context where the $\square$ is inside a lookbehind.
We also provide equivalent JavaScript syntax that can be copy-pasted directly into any JavaScript engine to check that the two regexes return different results.

\subsection{Distributivity}
Even when $\subreg_1$ does not contain capture groups, there are two incorrect distributivity equivalences.

\counterexample
    {$\leafeqdir{\re{\subreg_1\noncap{\disjunction{\subreg_2}{\subreg_3}}}}{\re{\disjunction{\noncap{\subreg_1\subreg_2}}{\noncap{\subreg_1\subreg_3}}}}{\forward}$}
    {$\subreg_1 = \re{\noncap{\disjunction{a}{\noncap{ab}}}}$\hfill
      $\subreg_2 = \re{c}$\hfill
      $\subreg_3 = \re{b}$}
    {\re{\square}}
    {abc}
    {\jscode{"abc".match(/(?:a|(?:ab))(?:c|b)/);}}
    {\jscode{"abc".match(/(?:(?:a|(?:ab))c)(?:(?:a|(?:ab))b)/);}}

\counterexample
    {$\leafeqdir{\re{\noncap{\disjunction{\subreg_2}{\subreg_3}}\subreg_1}}{\re{\disjunction{\noncap{\subreg_2\subreg_1}}{\noncap{\subreg_3\subreg_1}}}}{\backward}$ }
    {$\subreg_1 = \re{\noncap{\disjunction{c}{\noncap{bc}}}}$\hfill
      $\subreg_2 = \re{a}$\hfill
      $\subreg_3 = \re{b}$}
    {\re{abc\lookbehind{\group{1}{\square}}\backref{1}}}
    {abcabc}
    {\jscode{"abcabc".match(/abc(?<=((?:a|b)(?:c|(?:bc))))\\1/);}}
    {\jscode{"abcabc".match(/abc(?<=((?:a(?:c|(?:bc))|(?:b(?:c|(?:bc))))))\\1/);}}

\subsection{Quantifier Merging}

\counterexample
    {$\leafeqdir{\re{\quant{\subreg}{0}{\Delta_1}{\top}\quant{\subreg}{\rmin_2}{0}{\greedy}}}{\re{\quant{\subreg}{\rmin_2}{\Delta_1}{\top}}}{\forward}$}
    {$\subreg = \re{\noncap{\disjunction{a}{\noncap{ab}}}}$\hfill
      $\Delta_1 = 1$\hfill
      $\rmin_2 = 1$}
    {\re{\square}}
    {aba}
    {\jscode{"aba".match(/(?:a|(?:ab))\{0,1\}(?:a|(?:ab))\{1\}/);}}
    {\jscode{"aba".match(/(?:a|(?:ab))\{1,2\}/);}}

\counterexample
    {$\leafeqdir{\re{\quant{\subreg}{0}{\Delta_1}{\bot}\quant{\subreg}{\rmin_2}{0}{\greedy}}}{\re{\quant{\subreg}{\rmin_2}{\Delta_1}{\bot}}}{\forward}$}
    {$\subreg = \re{\noncap{\disjunction{\noncap{ab}}{\noncap{aba}}}}$\hfill
      $\Delta_1 = 1$\hfill
      $\rmin_2 = 1$}
    {\re{\square\noncap{\disjunction{\noncap{bcd}}{c}}}}
    {ababcd}
    {\jscode{"ababcd".match(/(?:(?:ab)|(?:aba))\{0,1\}?(?:(?:ab)|(?:aba))\{1\}?(?:(?:bcd)|c)/);}}
    {\jscode{"ababcd".match(/(?:(?:ab)|(?:aba))\{1,2\}?(?:(?:bcd)|c)/);}}

\counterexample
    {$\leafeqdir{\re{\quant{\subreg}{\rmin_1}{0}{\greedy}\quant{\subreg}{0}{\Delta_2}{\top}}}{\re{\quant{\subreg}{\rmin_1}{\Delta_2}{\top}}}{\backward}$}
    {$\subreg = \re{\noncap{\disjunction{a}{\noncap{ba}}}}$\hfill
      $\rmin_1 = 1$\hfill
      $\Delta_2 = 1$}
    {\re{caba\lookbehind{\noncap{\disjunction{\noncap{cab}}{c}}\group{1}{\square}}\backref{1}}}
    {cabaa}
    {\jscode{"cabaa".match(/caba(?<=(?:(?:cab)|c)((?:a|(?:ba))\{1\}(?:a|(?:ba))\{0,1\}))\\1/);}}
    {\jscode{"cabaa".match(/caba(?<=(?:(?:cab)|c)((?:a|(?:ba))\{1,2\}))\\1/);}}

\counterexample
    {$\leafeqdir{\re{\quant{\subreg}{\rmin_1}{0}{\greedy}\quant{\subreg}{0}{\Delta_2}{\bot}}}{\re{\quant{\subreg}{\rmin_1}{\Delta_2}{\bot}}}{\backward}$}
   {$\subreg = \re{\noncap{\disjunction{\noncap{ba}}{\noncap{aba}}}}$\hfill
      $\rmin_1 = 1$\hfill
      $\Delta_2 = 1$}
    {\re{cbaba\lookbehind{\noncap{\disjunction{c}{\noncap{cb}}}\group{1}{\square}}\backref{1}}}
    {cbabababa}
    {\jscode{"cbabababa".match(/cbaba(?<=(?:c|(?:cb))((?:(?:ba)|(?:aba))\{1\}(?:(?:ba)|(?:aba))\{0,1\}?))\\1/);}}
    {\jscode{"cbabababa".match(/cbaba(?<=(?:c|(?:cb))((?:(?:ba)|(?:aba))\{1,2\}?))\\1/);}}

\counterexample
    {$\leafeqdir{\re{\quant{\subreg}{0}{\Delta_1}{\bot}\quant{\subreg}{0}{\Delta_2}{\bot}}}{\re{\quant{\subreg}{0}{\Delta_1+\Delta_2}{\bot}}}{\forward}$}
    {$\subreg = \re{\noncap{\disjunction{\noncap{ab}}{\noncap{aba}}}}$\hfill
      $\Delta_1 = 1$\hfill
      $\Delta_2 = 1$}
    {\re{\square\noncap{\disjunction{b}{c}}}}
    {ababc}
    {\jscode{"ababc".match(/(?:(?:ab)|(?:aba))\{0,1\}?(?:(?:ab)|(?:aba))\{0,1\}?(?:b|c)/);}}
    {\jscode{"ababc".match(/(?:(?:ab)|(?:aba))\{0,2\}?(?:b|c)/);}}

\counterexample
    {$\leafeqdir{\re{\quant{\subreg}{0}{\Delta_1}{\bot}\quant{\subreg}{0}{\Delta_2}{\bot}}}{\re{\quant{\subreg}{0}{\Delta_1+\Delta_2}{\bot}}}{\backward}$}
   {$\subreg = \re{\noncap{\disjunction{\noncap{ba}}{\noncap{aba}}}}$\hfill
      $\Delta_1 = 1$\hfill
      $\Delta_2 = 1$}
    {\re{cbaba\lookbehind{\noncap{\disjunction{c}{\noncap{cb}}}\group{1}{\square}}\backref{1}}}
    {cbabababa}
    {\jscode{"cbabababa".match(/cbaba(?<=(?:c|(?:cb))((?:(?:ba)|(?:aba))\{0,1\}?(?:(?:ba)|(?:aba))\{0,1\}?))\\1/);}}
    {\jscode{"cbabababa".match(/cbaba(?<=(?:c|(?:cb))((?:(?:ba)|(?:aba))\{1,2\}?))\\1/);}}

\subsection{Checking these counter-examples in other languages}
The distributivity and quantifier merging counter-examples above should hold for any regex language following backtracking semantics, when their feature set supports it.
We have considered the following popular regex languages: \textbf{.NET, PCRE2, PCRE, Python, Go, Rust, and Java}.

\paragraph{Forward counter-examples}
All of these regex languages support all features used in our forward counter-examples.
We checked that all our forward counter-examples are indeed counter-examples for all these languages.

\paragraph{Backward counter-examples}
Our backward counter-examples use features that are not supported in all languages.
First, Go and Rust support neither backreferences nor lookarounds.
Then, PCRE2, PCRE and Python disallow lookbehinds \re{\lookbehind{\subreg}} when the length of all strings matched by $\subreg$ is not known to be constant (which is not the case in our counter-examples).
As a result, our backward counter-examples are not applicable in Go, Rust, PCRE2, PCRE, and Python.

.NET supports all of these features, and we checked that all our backward counter-examples are indeed counter-examples for .NET.

Finally, while Java regexes support lookbehinds, they do not follow backtracking semantics when \textit{inside lookbehinds} (but they do inside lookaheads).
Instead of matching in reverse like JavaScript or .NET, when matching a lookbehind \re{\lookbehind{\subreg}} from a position $p$ in a string, the matching algorithm implemented in OpenJDK first tries to match $\subreg$ from position $p$, then from position $p-1$, then $p-2$\dots~
As a result, the \textit{shortest} match in a lookbehind has priority in Java regexes, and our backward counter-examples are not applicable.
To illustrate these differences in semantics, consider matching the regex \re{abc\lookbehind{\group{1}{\disjunction{abc}{\disjunction{c}{bc}}}}x} on string \str{abcx} in Java: it returns that the capture group 1 matched the substring \str{c} (the middle branch of the disjunction).

\newpage
\section{Small-step semantics of the PikeTree algorithm}
\label{app:smallstep_piketree}

\Autoref{fig:piketree_smallstep_full} describes the full small-step semantics rules of the PikeTree algorithm (of which a few selected rules were depicted on \Autoref{fig:piketree_smallstep}).

\begin{figure}[h]
  \mbox{\infer[\ruledef{piketree:full}{Final}]{\piketreestep{\ptstate{\inp}{\best}{[]}{[]}{\seen}}{\ptsfinal{\best}}}{}}

  \semspace
  \mbox{\infer[\ruledef{piketree:full}{NextChar}]
    {\piketreestep
      {\ptstate{\inp}{\best}{[]}{\blocked}{\seen}}
      {\ptstate{\inp'}{\best}{\blocked}{[]}{\emptyset}}}
    {\nextinp{\inp} = \Some{\inp'} \newpremise \blocked \neq []}}
  
  \semspace
  \mbox{\infer[\ruledef{piketree:full}{Skip}]{\piketreestep
      {\ptstate{\inp}{\best}{(\treecont,\gm) :: \pactive}{\blocked}{\seen}}
      {\ptstate{\inp}{\best}{\pactive}{\blocked}{\seen}}}
    {\intseen{\treecont}{\seen}}}

  \semspace
  \mbox{\infer[\ruledef{piketree:full}{Match}]{\piketreestep
      {\ptstate{\inp}{\best}{(\treematch,\gm) :: \pactive}{\blocked}{\seen}}
      {\ptstate{\inp}{\Some{(\inp,\gm)}}{[]}{\blocked}{\seen'}}}
    {\seen' = \addtseen{\seen}{\treematch}}}

  \semspace
  \mbox{\infer[\ruledef{piketree:full}{Blocked}]{\piketreestep
      {\ptstate{\inp}{\best}{(\treeread{\rchar}{\treecont},\gm)::\pactive}{\blocked}{\seen}}
      {\ptstate{\inp}{\best}{\pactive}{\blocked\app[(\treecont,\gm)]}{\seen'}}}
    {\seen' = \addtseen{\seen}{\treeread{\rchar}{\treecont}}}}

  \semspace
  \mbox{\infer[\ruledef{piketree:full}{Choice}]{\piketreestep
      {\ptstate{\inp}{\best}{(\treechoice{\treecont_1}{\treecont_2},\gm)::\pactive}{\blocked}{\seen}}
      {\ptstate{\inp}{\best}{(\treecont_1,\gm)::(\treecont_2,\gm)::\pactive}{\blocked}{\seen'}}}
    {\seen' = \addtseen{\seen}{\treechoice{\treecont_1}{\treecont_2}}}}

  \semspace
  \mbox{\infer[\ruledef{piketree:full}{Progress}]{\piketreestep
      {\ptstate{\inp}{\best}{(\treeprogress{\treecont},\gm) :: \pactive}{\blocked}{\seen}}
      {\ptstate{\inp}{\best}{(\treecont,\gm) :: \pactive}{\blocked}{\seen'}}}
    {\seen' = \addtseen{\seen}{\treeprogress{\treecont}}}}

  \semspace
  \mbox{\infer[\ruledef{piketree:full}{Open}]{\piketreestep
      {\ptstate{\inp}{\best}{(\treeopen{\gid}{\treecont},\gm) :: \pactive}{\blocked}{\seen}}
      {\ptstate{\inp}{\best}{(\treecont,\gm') :: \pactive}{\blocked}{\seen'}}}
    {\seen' = \addtseen{\seen}{\treeopen{\gid}{\treecont}} \newpremise \gmopen{\gm}{\gid}{\idx{\inp}} = \gm'}}

  \semspace
  \mbox{\infer[\ruledef{piketree:full}{Close}]{\piketreestep
      {\ptstate{\inp}{\best}{(\treeclose{\gid}{\treecont},\gm) :: \pactive}{\blocked}{\seen}}
      {\ptstate{\inp}{\best}{(\treecont,\gm') :: \pactive}{\blocked}{\seen'}}}
    {\seen' = \addtseen{\seen}{\treeclose{\gid}{\treecont}} \newpremise \gmclose{\gm}{\gid}{\idx{\inp}} = \gm'}}

  \semspace
  \mbox{\infer[\ruledef{piketree:full}{Reset}]{\piketreestep
      {\ptstate{\inp}{\best}{(\treereset{\gidl}{\treecont},\gm) :: \pactive}{\blocked}{\seen}}
      {\ptstate{\inp}{\best}{(\treecont,\gm') :: \pactive}{\blocked}{\seen'}}}
    {\seen' = \addtseen{\seen}{\treereset{\gidl}{\treecont}} \newpremise \gmreset{\gm}{\gidl} = \gm'}}
  
\caption{PikeTree small-step semantics}%
\Description{} % Text-only figure
\label{fig:piketree_smallstep_full}
\end{figure}

\end{document}